\newcommand\mycitet[1]{\citeauthor{#1}~\cite{#1}}
\newcommand{\negation}[1]{#1}
\renewcommand{\negation}[1]{}
\lstdefinelanguage{prg}{otherkeywords={->,:=,|,;,>=,<=,:},keywords={->},keywordstyle=\textbf}
\newcommand{\mytitle}{Invariant Generation 
  through 
  Strategy Iteration 
  in Succinctly Represented Control Flow Graphs}
\newcommand{\mykeywords}{%
  static program analysis,
  abstract interpretation,
  fixpoint equation systems,
  strategy improvement algorithms,
  SMT solving}
\DeclareMathOperator{\defn}{{}\ensuremath{\mathrel{\mathop:}=}{}}
\newcommand{\complexclass}[1]{\textsf{#1}}
\newcommand{\piptwo}{$\mathsf{\Pi^p_2}$}
\newcommand{\sigmaptwo}{$\mathsf{\Sigma^p_2}$}
\let\origsection\section
\begin{document}

% \input defs

%\title{Improving Strategies via SMT Solving}
\title[Invariant Generation through Strategy Iteration]{\mytitle}

\author[T.~M.Gawlitza]{Thomas Martin Gawlitza\rsuper a}	%required
%\address{VERIMAG Laboratory, Centre Équation, 2 avenue de Vignate, 38610 Gières}	%required
%\email{Thomas.Gawlitza@imag.fr}  %optional
\address{{\lsuper a}School of Information Technologies, The University of Sydney, Australia}
\email{gawlitza@it.usyd.edu.au}
%\thanks{thanks 1, optional.}	%optional
\thanks{{\lsuper a}This work was partially funded by the ANR project ``ASOPT''
  \raisebox{-1em}{\includegraphics[height=2em]{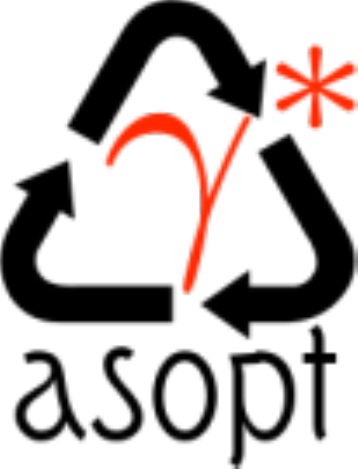}}}

\author[D.~Monniaux]{David Monniaux\rsuper b}	%optional
\address{{\lsuper b}CNRS / VERIMAG Laboratory, Centre Équation, 2 avenue de Vignate, 38610 Gières, France}	%optional
\email{David.Monniaux@imag.fr}  %optional
\thanks{{\lsuper b}VERIMAG is a joint laboratory of CNRS, Université Joseph Fourier and Grenoble INP}

  %% etc.

%% required for running head on odd and even pages, use suitable
%% abbreviations in case of long titles and many authors:

%% mandatory lists of keywords and classifications:
\keywords{\mykeywords}

\subjclass{D.2.4, F.3.1 (D.2.1, D.2.4, D.3.1, E.1), F.3.2 (D.3.1)}

\titlecomment{
%  OPTIONAL comment concerning the title, \eg, if a variant or an extended abstract of the paper has appeared elsewehere
  {\lsuper*}An earlier and shorter version of this article appeared in the proceedings 
  of the 21st European Symposium on Programming (ESOP) 2011~\cite{Gawlitza_Monniaux_ESOP11}.
}

%%%%%%%%%%%%%%%%%%%%%%%%%%%%%%%%%%%%%%%%%%%%%%%%%%%%%%%%%%%%%%%%%%%%%%%%%%%

%% the abstract has to PRECEED the command \maketitle:
%% be sure not to issue the \maketitle command twice!

\begin{abstract}
\noindent 
We consider the problem of computing numerical invariants of programs,
for instance bounds on the values of numerical program variables.
More specifically, we study the problem of performing static analysis
by abstract interpretation using template linear constraint domains.
Such invariants can be obtained by Kleene iterations that are, in
order to guarantee termination, accelerated by widening operators.
In many cases, however, applying this form of extrapolation leads to  invariants that are weaker than the strongest inductive invariant that can be expressed within the abstract domain in use.
Another well-known source of imprecision of traditional abstract interpretation techniques stems from their use of join operators at merge nodes in the control flow graph.
The mentioned weaknesses may prevent these methods from proving safety properties.

The technique we develop in this article addresses both of these issues:  contrary to Kleene iterations accelerated by widening operators, 
it is guaranteed to yield the strongest inductive invariant 
that can be expressed within the template linear constraint domain in use. 
It also eschews join operators by distinguishing all paths of loop-free code segments.
Formally speaking, our technique computes the least fixpoint 
within a given template linear constraint 
domain of a transition relation that is succinctly expressed as an existentially quantified linear real arithmetic formula. 
%TODO: without strict inequalities?

In contrast to previously published techniques that rely on quantifier elimination, our algorithm is proved to have optimal complexity:
we prove that the decision problem associated with our fixpoint problem is 
\piptwo-complete. 
Our procedure mimics a {\piptwo} search.
\end{abstract}

\maketitle

\section{Introduction}
\label{s:introduction}

Static program analysis aims at deriving properties that are valid for all possible executions of a  program, through an algorithmic processing of its source or object  code. 
Examples of interesting properties include:
``the program always terminates'';
``the program never executes a division by zero'';
``the program never dereferences a null pointer'';
``the value of variable \lstinline|x| always lies between 1 and 3'';
``the output of the program is well-formed XHTML''.
There is considerable practical interest in being able to prove such properties automatically, in particular for software used in safety-critical applications, e.g., in fly-by-wire flight control systems in aircraft~\cite{DBLP:conf/safecomp/SouyrisD07}.

\subsection{Abstract interpretation}
It is well-known that fully automatic, sound and complete program analysis is impossible for any nontrivial property regarding the final output of a program.%
\footnote{This result, formally given within the framework of recursive function theory, is known as Rice's theorem~\cite[p.~34]{Rogers_theory_of_recursive_functions}\cite[corollary~B]{Rice_1953}.}
All analysis methods therefore suffer from at least one of the following limitations: they may be limited to programs with finite (and not too large) memory, or to bounded execution times;
they may be \emph{unsound}  (they may infer untrue properties);
or they may be \emph{incomplete} (they fail to prove certain true properties).
In this article, we use the \emph{abstract interpretation framework} of
\citet{CousotCousot_JLC92}
to construct a static analysis technique that is sound, but incomplete.

Static analysis by abstract interpretation replaces the computation over 
concrete \emph{reachable states} %(which may be non-recursive in general) 
by computations over symbolically represented sets of concrete states.
The sets are taken from an \emph{abstract domain}.
For instance, one may aim at computing,
for each program point $p$ and each program variable $x$, 
an interval in which the value of $x$ is guaranteed to lie whenever the program reaches program point $p$. 
An analysis solely based on such intervals % relations at all program points 
is known as \emph{interval analysis}~\cite{CouCou76}.
More refined numerical analyses include, 
for instance, 
finding for each program point an enclosing polyhedron for the vector of program variables~\cite{CousotHalbwachs78}. 
By restricting the analysis to handle only sets found within a particular abstract domain 
(e.g., Cartesian products of intervals or convex polyhedra), one can make the problem tractable, at the expense of \emph{over-approximation}.
For instance, if the domain in use consists of convex shapes, 
only, 
non-convex invariants will necessarily get over-approximated.

In addition to the abstract domain not being able to represent the required properties, 
a major source of imprecision %in static analysis by abstract interpretation 
is the use of \emph{widening operators} 
to enforce the convergence of Kleene iterations within finitely many iteration steps \cite{CousotCousot_JLC92}.
These operators extrapolate the first iterates of the Kleene sequence, say, of the intervals $[0,1]$, $[0,2]$, $[0,3]$, $\dots$ to a plausible limit, say $[0,+\infty)$, ensuring termination of the accelerated iteration. 
However,
such an accelerated iteration may overshoot the target, 
leading to further over-approximations
of the desired result.
In order to regain precision lost by widening,
one can then apply \emph{narrowing}. 
In its simplest form, narrowing is a descending iteration towards a fixpoint that strengthens 
the invariant step by step.
For more detailed information on Kleene iteration techniques in the context of abstract interpretation, 
we refer the reader to \citet{CousotCousot_JLC92}. 
Many variants of this basic iteration scheme have been proposed to alleviate the 
over-approximations introduced by widening %, among which for instance
\cite{DBLP:conf/sas/GopanR07,DBLP:conf/cav/GopanR06,Halbwachs_Henry_SAS2012}.
However, 
all these techniques do not guarantee to find the strongest inductive invariant that 
can be expressed in the abstract domain in use.

Let us illustrate the above mentioned weaknesses on the following simple example:
%i = 0;
%while (i < 10) {
%  i = i+2;
%}

\noindent
\begin{minipage}{\textwidth}
\begin{lstlisting}[label=lst:loop_0_10_2]
i = 0;
while (true) {
  if (i < 10) i = i+2;
  else goto _end; }
_end: printf("i = %d\n");
\end{lstlisting}
\end{minipage}

\noindent
The strongest invariant, that is, the set of reachable states, is given by the proposition $i \in \{ 0, 2, 4, 6, 8, 10 \}$, which, 
together with the exit condition $i \geq 10$, yields $i = 10$ as the only possible final value of~$i$ at program point 
\lstinline$_end$.
Interval analysis by Kleene iterations with widenings computes the intervals $[0,0], [0,2], [0,4]$ 
and may then widen to $[0,+\infty)$.
The narrowing phase yields the inductive invariant $i \in [0, 11]$.
From this we can conclude that the final value of $i$ is in the interval $[10, 11]$.
The obtained interval $[0,11]$ represents the strongest \emph{inductive} 
invariant that can be expressed as an interval.%
\footnote{Some presentations of Hoare logic or static analysis call ``invariant'' what we refer to in this article as ``inductive invariant'': a set (or a logical formula defining such a set) containing all initial states and stable by the transition relation. In our terminology, an invariant is merely a property true at all times. With these definitions, an inductive invariant is an invariant by induction on the length of the execution trace, thus the terminology; however an invariant is not necessarily inductive.
Consider the initial state $(x,y)=(1,0)$ and a transition consisting in a {45\textdegree} clockwise rotation around $(0,0)$ : $(x,y) \in [-1,1]\times[-1,1]$ is an invariant (it is always true), but it is not inductive because $[-1,1]\times[-1,1]$ is not stable by this rotation.}
It is, however, not the strongest invariant expressible as an interval, 
which is $i \in [0,10]$. 
% (the least interval containing the reachable states).
The invariant $i \in [0,10]$ is not inductive, 
because a state with $i = 9$ is mapped to a state with $i = 11$ by one iteration of the loop.

Unfortunately, small changes to the above program can make the widening/narrowing 
approach fail to produce a good invariant.
Consider, for instance, the introduction of an additional non-deterministic choice, represented by the function \lstinline|choice()|:
% which non-deterministically returns $0$ or $1$:
%i = 0;
%while (i < 10) {
%  if (choice()) {
%    i = i+2;
%  }
%}

\noindent
\begin{minipage}{\textwidth}
\begin{lstlisting}[label=lst:loop_0_10_2_bypass]
i = 0;
while (true) {
  if (choice()) {
    if (i < 10) i = i+2;
    else goto _end; } }
_end: printf("i = %d\n");
\end{lstlisting}
\end{minipage}

\noindent
The program still outputs the value $10$, whenever it terminates. 
The only difference from the first version of the program is that there is, 
in each iteration, 
a non-deterministic choice whether or not the original loop body is to be executed.
If we perform the widening/narrowing technique on the modified version,
the widening phase will produce the same result $[0,+\infty)$.
However,  the narrowing phase is now not able to regain any precision lost due to widening.
The loop body represents the relation 
$\tau = \{ (i,i) \mid i \in \Z \} \cup \{ (i,i+2) \mid i \in \Z \text{ and } i < 10 \}$.
This relation is reflexive, that is, $(i,i)\in \tau$ for all $i \in \Z$.
% In other words: it contains the identity $\{ (i,i) \mid i \in \Z \}$.
The problem is of a general nature:
Whenever the transition relation $\tau$ of a loop is reflexive, 
descending iterations fail to improve the inductive invariant obtained by widening.

Of course, on such a simple example, one could use simple tricks to get rid of the imprecision and recover 
the interval $[0,11]$:
remove the identity from the transition relation (this does not change the set of all (inductive) invariants), 
or try a form of widening with thresholds, also known as widening ``up to''~\cite{Polka:FMSD:97}.
However, such approaches are brittle and may fail for more complex programs.

\subsection{Alternatives to the widening/narrowing approach}
\label{sec:closely_related_methods}
Because of the known weaknesses of the widening/narrowing approach, 
alternative methods have been proposed.
% for computing inductive invariants that rely on the abstract interpretation framework.
Finding an inductive invariant 
in an abstract domain can be recast as solving a constraint system.
Finding \emph{the strongest} inductive invariant % within the domain 
is then the problem of finding a \emph{minimal} solution to the constraint system.
The technique described in this article is related to two recently proposed approaches, which we shall now briefly describe.

\subsubsection{Quantifier elimination}
\citet{Monniaux_LMCS10} considers abstract domains where elements are defined by a logical formula $I$ (more specifically, a conjunction of linear inequalities) that links the program variables to some parameters.
For instance, intervals on two variables $x,y$ are defined by $I \defn -l_x \leq x \leq u_x \land -l_y \leq y \leq u_y$,
 where $l_x,u_x,l_y,u_y$ are the parameters. An element from the abstract domain defined by the template $I$ 
 is specified by an assignment of values to the parameters.

Consider a set of initial states given by a formula $\iota$ 
(in the above example, with free variables $\sigma=(x,y)$) 
and a transition relation given by $\tau$ 
(in the above example, with free variables $(\sigma,\sigma')=(x,y,x',y')$). $I$ defines an inductive invariant for $\iota$ and $\tau$ if and only if
\begin{equation}
  \label{eqn:stability}
  \forall \sigma \,.\, \iota(\sigma) \Rightarrow I(\sigma) \land
  \forall \sigma,\sigma' \,.\, \left( I(\sigma) \land \tau(\sigma,\sigma') \Rightarrow I(\sigma') \right)
  .
\end{equation}

\noindent
Here, $I(\sigma)$ is the formula $I$ as 
above and $I(\sigma')$ is the formula $I$ with $\sigma$ replaced by~$\sigma'$.
The free variables of formula~\eqref{eqn:stability} are the parameters in~$I$.
In the above example, they are $l_x,u_x,l_y,u_y$.
Any satisfying assignment to these variables defines an inductive invariant
from the abstract domain.
A \emph{least} inductive invariant in the abstract domain is then defined by constructing, using formula~\eqref{eqn:stability} as
a building block, a formula whose solution is the minimal solution of~\eqref{eqn:stability}, 
using that, for any formula $F$, 
$x_0 = \min \{x \mid F(x)\}$ 
if and only if
\begin{equation}\label{eqn:quantified_stability}
F(x_0) \land \forall x \,.\, \left(F(x) \Rightarrow x_0 \leq x \right)
.
\end{equation}

\noindent
The static analyzer then proceeds as follows: transform the loop into 
a set of initial states 
%an initialization condition 
$\iota$ 
and a transition relation $\tau$.
From these formulas, construct Formula~\ref{eqn:quantified_stability}. 
Then, call a solver capable of dealing with quantified formulas,
e.g, a quantifier elimination procedure or a lazy version thereof such as 
the one developed by \citet{Monniaux_CAV10}.

As an extension to this framework, $\iota$ and $\tau$ may have additional variables,
e.g., precondition or system parameters. % in addition to $\sigma,\sigma'$. 
The formula defining the least inductive invariant will then take the invariant parameters as a partial function (in the mathematical sense, that is, as a binary related each input to at most one output) of these precondition or system parameters. By quantifier elimination and further processing of the formula, it is possible to turn this formula 
%defining a function in the mathematical sense 
into a closed-form function, and even into executable code computing that function 
(a tree of if-then-else statements with assignments at the leaves).

This approach allows to effectively synthesize best abstract transformers 
($\alpha \circ \tau \circ \gamma$ in the notation of \citet{CousotCousot_JLC92}). 
Unfortunately, quantifier elimination over linear real arithmetic is still very costly, despite the various recent works on this problem, and quantifier elimination over linear integer arithmetic and polynomial real arithmetic are even costlier.

The technique described in this article considers the same problem as the quantifier elimination approach, but without preconditions or system parameters. 
Our technique also uses a different algorithmic approach, called \emph{max-strategy iteration}.

\subsubsection{Strategy Iteration}
\label{sec:max_strategy}
In this article, we introduce a refinement of  the \emph{max-strategy iteration} technique of \citet{DBLP:conf/csl/GawlitzaS07}  for template linear constraint domains.
The phrase ``strategy iteration'', also known as ``policy iteration'', comes from game theory. Let us consider two-players zero-sum games: the outcome of such a game is a real number,
the two players (the maximizer and the minimizer) 
aim at maximizing (respectively, minimizing) the outcome.
Strategy iteration is a method for computing the optimal strategy for one of the players.
It successively improves a strategy through the following two steps until an optimal strategy is found:
(\emph{Evaluation}) 
Evaluate the currently selected strategy; and
(\emph{Improvement})
try to improve the currently selected strategy 
w.r.t.\ the result of the evaluation.

The max-strategy iteration technique of \citet{DBLP:conf/csl/GawlitzaS07}  
for finding invariants is inspired by this game-theoretic approach. 
Instantiated on template linear constraint domains, it computes the strongest inductive invariant  that can be represented by polyhedra 
of the form $P(b) = \{ x \in \R^n \mid Tx \leq b \}$,
where $T \in \R^{m\times n}$ is a template constraint matrix, which is fixed 
before the analysis is run (heuristics for finding a suitable matrix are out-of-scope for this article).
The variable $x$ is the vector of program variables.
The template constraint matrix $T$ is the counterpart 
of the template $I$ from the quantifier elimination technique of \citet{Monniaux_LMCS10}.
Given $T$, every vector $b \in \CR^m$ uniquely determines a polyhedron $P(b)$.
The vector $b$ contains the bounds on the linear functions that are represented by the rows of $T$.
With the appropriate choice of $T$
we can, among others, 
express the popular interval~\cite{CouCou76} 
and octagon~\cite{Mine_PhD,DBLP:journals/lisp/Mine06} abstract domains.

Similarly to Kleene iterations,
the max-strategy improvement algorithm produces an ascending sequence of pre-fixpoints 
that are less than or equal to the least inductive invariant we are aiming for.
The pre-fixpoints are obtained through convex optimization techniques, e.g., linear programming.
In contrast to Kleene iterations, though,
the algorithm converges to the least inductive invariant after at most exponentially many steps. 
Our conjecture is that it usually converges fast in practice, 
though one can concoct artificial examples that exhibit exponential behavior.

\subsubsection{Trace partitioning}
\label{sec:trace_partitioning}
Max-strategy iteration rids us of imprecisions introduced by widening, 
but, per se, does not remove imprecisions introduced by another operation: 
the merging of information from different program paths at join nodes in the control flow graph.
In this article, we introduce a refinement of max-strategy iteration 
where we distinguish the various execution paths, in a manner similar to the work of 
\citet{Monniaux_LMCS10}, 
and
\citet{Monniaux_Gonnord_SAS11}.
% We shall now give a quick introduction to that approach.

\begin{figure}
\begin{center}
\includegraphics[width=0.4\textwidth]{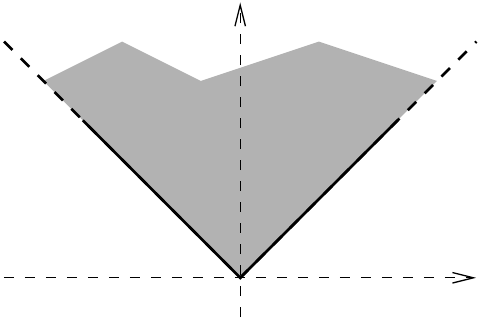}\hspace{0.1\textwidth}
\includegraphics[width=0.4\textwidth]{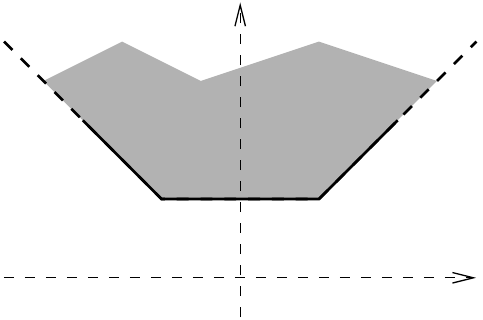}
\end{center}
\caption{On the left: the graph of $y=|x|$ is the union of two half-lines, but computing their convex hull yields the grayed shape. By intersection with $y \geq 1$, we obtain the shape on the right, which contains points with $x=0$ even though $y=|x| \land y \geq 1$ has no solution with $x=0$.}
\label{fig:x_xabs_ge1}
\end{figure}

In most systems for static analysis by abstract interpretation, joins in the control-flow graph result in computations of least upper bounds in the abstract domain. For instance, consider 
abstract interpretation over general convex polyhedra on the following program:
% which computes the absolute value of the variable $x$ and stores it into the variable $y$:

\noindent
\begin{minipage}{\textwidth}
\begin{lstlisting}
if (x >= 0) y = x; 
else y = -x;
if (y >= 1) z = 3.5/x;
\end{lstlisting}
\end{minipage}

\noindent
The program divides $3.5$ by 
the value of \lstinline|x| provided that the absolute value of \lstinline|x| is at least~$1$.
A static analyzer that uses convex polyhedra as abstract domain 
may work as follows.
%
% As shown in Figure, 
After the first if-then-else statement, 
a convex hull is computed between the $x \geq 0 \land y = x$ and $x < 0 \land y = -x$ half-lines, 
resulting in a much larger polyhedron (see Fig.~\ref{fig:x_xabs_ge1}). 
The imprecision introduced by this operation prevents the analyzer from proving 
that a division by zero at line~$3$ is impossible.

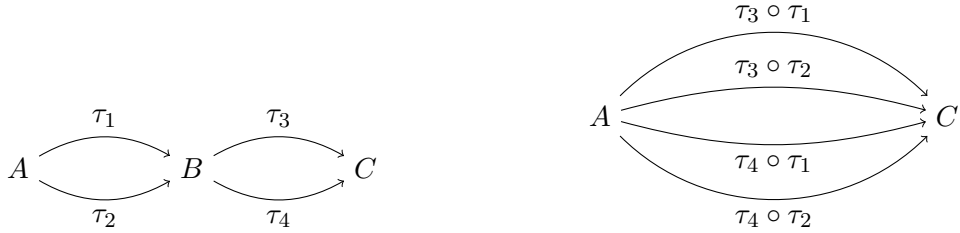
\begin{figure}
\begin{center}
\begin{tikzpicture}[node distance=6em]
\node (A) {$A$};
\node (B) [right of=A] {$B$};
\node (C) [right of=B] {$C$};
\path (A) edge[bend left,->] node[above] {$\tau_1$} (B);
\path (A) edge[bend right,->] node[below] {$\tau_2$} (B);
\path (B) edge[bend left,->] node[above] {$\tau_3$} (C);
\path (B) edge[bend right,->] node[below] {$\tau_4$} (C);
\end{tikzpicture}
\hspace{6em}
\begin{tikzpicture}[node distance=12em]
\node (A) {$A$};
\node (C) [right of=A] {$C$};
\path (A) edge[bend left=45,->] node[above] {$\tau_3 \circ \tau_1$} (C);
\path (A) edge[bend left=15,->] node[above] {$\tau_3 \circ \tau_2$} (C);
\path (A) edge[bend right=15,->] node[below] {$\tau_4 \circ \tau_1$} (C);
\path (A) edge[bend right=45,->] node[below] {$\tau_4 \circ \tau_2$} (C);
\end{tikzpicture}
\end{center}
\caption{Instead of considering two transitions (corresponding to a first if-then-else) followed by convex hull followed by two transitions (corresponding to a second if-then-else), as on the left, we get better precision by considering the four product transitions, as on the right.}
\label{fig:2x2trans_to_4trans}
\end{figure}

One solution is to get rid of all convex hulls corresponding to control flow joins by removing all control flow joins, except those corresponding to loop headers, by combining control flow edges. For instance, $n$ successive if-then-else constructs can be turned into an expanded system of $2^n$ transitions (Figure~\ref{fig:2x2trans_to_4trans} 
shows this construction for $n=2$). 
This is close to the \emph{trace partitioning} approach of \citet{Rival_Mauborgne_TOPLAS07}.%
\footnote{Trace partitioning analyses each program statement in different contexts according to an abstraction of the history of the control trace; thus, if a statement is preceded by $n$ tests, it can potentially analyze this statement in $2^n$ contexts. Because of this exponential blowup of maximal partitioning, trace partitioning techniques, including those implemented in Astr\'ee \cite{ASTREE:PLDI03,BlanchetCousotEtAl02-NJ}, use heuristics to ``fold'' abstract elements together using join operations.}
One could therefore run this exponential transformation first, and then run max-strategy iteration or min-strategy iteration (Sec.~\ref{sec:min_strategy}).
However, this transformation causes an exponential blowup and is  
therefore clearly not scalable.

In this article, we describe an algorithm that yields the same result as max-strategy iteration on this exponentially larger system. 
% However, 
Our algorithm uses only polynomial space.
It achieves this by keeping the exponentially large system implicit.

\subsubsection{Path focusing}
\citet{Monniaux_Gonnord_SAS11,Henry_Monniaux_Moy_SAS2012} propose to run the classical Kleene iterations with widening and narrowing scheme not on the original control-flow graph, but on this exponentially larger system. In this approach, iterations are run on a distinguished subset of the original control nodes, such that all cycles in the original control flow graphs cross at least one of these distinguished nodes, using transitions corresponding to the simple paths between these distinguished nodes in the original control flow graph. The expanded control multigraph is kept implicit: the transitions, corresponding to simple paths in the original graph, are obtained on demand as solutions to SMT problems.
This approach has the following advantages:
\begin{enumerate}[(1)]
\item It fully does away with imprecisions introduced by ``join'' operations, except those corresponding to loops.
\item The transition relations on the simple paths may be accelerable.
          That is, they can be dealt with through acceleration techniques (cf.\ Sec.~\ref{sec:acceleration}, 
          \cite{DBLP:conf/sas/GonnordH06,Gonnord_PhD,LEROUX-SUTRE-SAS2007}).
\item While it uses widening operators, it does away with some of the imprecisions they introduce by focusing on one path at a time, which allows the use of narrowing iterations even on programs where they fail to yield better precision with the classical iteration scheme.
\end{enumerate}

\noindent
The technique we present in this article combines the idea 
of implicit representation with max-strategy iteration.

\subsection{Contributions}

The main contribution of this article is an algorithm that computes the 
strongest inductive invariant of the 
expanded transition system (which allows higher precision for abstract interpretation) 
without actually constructing it.
We shall see later the exact definition, but here is an interesting particular case (the general result allows more complex control flow):
given a $m \times n$ matrix $A$, an initial value $\iota \in \Q^n$ and a transition relation $\tau$ over $\Q^n$, defined by a formula over variables $x_1,\dots,x_n,x'_1,\dots,x'_n$, built with non-strict linear (in)equalities, $\land$, $\lor$ and prenex $\exists$, compute the least set of the form 
$P(b) = \{ x \in \R^n \mid Ax \leq b \}$ 
(that is, compute $b$) containing $\iota$ and stable by 
the transition relation $\tau$;
equivalently, find the least loop invariant of the form  $Ax \leq b$ for the loop with initial state $\iota$ and loop body expressed by~$\tau$. 

Our algorithm can be performed in polynomial space and exponential time.
It works in a demand-driven fashion: elements from the exponentially-sized sets of strategies and loop-free paths are enumerated only as needed, and one can thus hope that they will not all be enumerated, which seems to be confirmed 
by our preliminary experiments.

We also consider the following associated decision problem, which we shall later make more formal:
\begin{quote}
``Given a control-flow graph (with $N$ vertices) and transition relations written as existentially quantified 
first-order linear real arithmetic formulas, a family $A_1,\dots,A_N$ of matrices, an initial control state and a ``bad'' control state $b$, does there exist vectors $b_1,\dots,b_N$ such that $A_1 x \leq b_1 \land \dots \land A_N x \leq b_N$ forms an inductive invariant proving that $b$ is unreachable?''.
\end{quote}
We show this problem to be \sigmaptwo-complete (at the second level of the polynomial time hierarchy \cite[ch.~17]{Papadimitriou94}), even if $N=1$ and the matrix is $1 \times 1$.
Equivalently, the negated problem (abstract reachability of a statement) is shown to be \piptwo-complete.
Assuming the polynomial hierarchy does not collapse, 
this mean that this problem can be solved in polynomial space, but is harder than 
\complexclass{NP}-complete and \complexclass{coNP}-complete problems.
This clearly justifies the use of an exponential-time algorithm. 

\subsection{Other related Work}
\label{sec:min_strategy}\label{sec:acceleration}
Many approaches have been proposed to address the imprecisions caused by widening operators.
We now briefly describe approaches related to ours, in addition to those that we directly build upon 
(Sec.~\ref{sec:closely_related_methods}).
\citet{Polka:FMSD:97} proposed widening ``up to''
(an idea resurrected in the Astr\'ee system as widening with thresholds
\cite{BlanchetCousotEtAl02-NJ,ASTREE:PLDI03}),
which extracts syntactic hints for limiting widening.
\citet{DBLP:conf/sas/BagnaraHMZ05,DBLP:journals/scp/BagnaraHRZ05} proposed improvements over the ``classical'' widenings on linear constraint domains~\cite{Halbwachs_PhD}.
\citet{DBLP:conf/cav/GopanR06} introduced ``look-ahead widening'' \cite{DBLP:conf/cav/GopanR06} and ``guided iterations''~\cite{DBLP:conf/sas/GopanR07}: standard widening-based analysis is applied to a sequence of syntactic restrictions of the original program, which ultimately converges to the whole program; the idea is to distinguish phases or modes of operation in order to make the widening more precise.
Some other techniques fully do away with widenings~\cite{Colon_CAV03,Cousot05-VMCAI,Sankaranarayanan_SAS04}, for instance by expressing the invariants as solutions of a mathematical programming problem \cite{DBLP:journals/entcs/GoubaultRLLM10}, and thus the least invariant in the domain as an optimal solution to this problem.

In some cases, it is possible to compute exactly the transitive closure of the transition relation, or the application of the transitive closure to given initial states, or at least to compute a good over-approximation thereof. Such \emph{acceleration} techniques \cite{DBLP:conf/sas/GonnordH06,Gonnord_PhD,LEROUX-SUTRE-SAS2007} tend to have difficulties dealing with programs where the control flow is not flat (multiple paths within the loop body).

In Section~\ref{sec:max_strategy}, we sketched max-strategy iteration by an analogy to solving games where ``max'' operations correspond to control-flow joins and ``min'' operations to guards. If instead of choosing arguments to ``max'' operators, the strategy chooses them for ``min'' operators, we obtain min-strategy iteration \cite{DBLP:conf/cav/CostanGGMP05,DBLP:conf/esop/GaubertGTZ07}. 
Min-strategy iteration solves a sequence of fixpoint problems with decreasing values always 
weaker or equivalent to the strongest inductive invariant in the domain. 
In general, this sequence does not necessarily converge to this least inductive invariant, 
but it does so under certain conditions
(e.g., when all abstract transformers are non-expansive
\citep{arXiv:0806.1160}).
We investigated applying our ``implicit representation'' idea to the min-strategy approach, but encountered a stumbling block: while it is possible to decide whether a max-strategy is improvable
% , and to obtain an improved strategy, 
using SMT solving on quantifier-free formulas, 
the equivalent for min-strategies necessitated quantified formulas, which defeats the purpose of doing away with quantifier elimination techniques.

%%% Local variables:
%%% mode: latex
%%% tex-main-file: "main"
%%% TeX-master: "main"
%%% ispell-local-dictionary: "british"
%%% end:

\section{Basics}
\label{s:basics}
% Let us now introduce our terminology and notations, and recall classical results.

\subsection{Notations}

$\mathbb{B} = \{ 0, 1 \}$ denotes the set of Boolean values.
The set of real numbers (resp.\ the set of rational numbers) 
is denoted by $\R$ (resp.\ $\Q$).
The complete linearly ordered set $\R \cup \{ \neginfty, \infty \}$
is denoted by $\CR$, similarly $\Q \cup \{ \neginfty, \infty \}$ is denoted
by $\CQ$.
% 
% \subsubsection{Substitutions}
For any expression (resp.\ term) $e$, 
we write
$
  e[e_1/\vx_1 , \ldots , e_k/\vx_k]
$
to denote the expression (resp.\  term) that is obtained from $e$ by simultaneously replacing 
all occurrences of the variables $\vx_1$, \ldots, $\vx_k$ by
$e_1,\ldots,e_k$.

% \subsection{Complete Lattices}
% Consider a partially ordered set $(\D,\leq)$.
A partially ordered set $\D$ is called a \emph{lattice} if and only if
any two elements $x,y \in \D$ have a greatest lower bound 
and a least upper bound,
denoted respectively by $x \wedge y$ and $x \vee y$.
It is a \emph{complete lattice} if and only if any subset
$X \subseteq \D$ has a greatest lower bound and a least upper bound,
denoted by $\bigwedge X$ and $\bigvee X$.
The least element $\bigvee \emptyset$
of a complete lattice is denoted by $\bot$.
The greatest element $\bigwedge \emptyset$
is denoted by~$\top$.

Assume that $\D_1$ and $\D_2$ are partially ordered by $\leq_1$ and $\leq_2$, respectively.
A function $f: \D_1 \rightarrow \D_2$ is called \emph{monotone} if and only if 
$f(x) \leq_2 f(y)$ for all $x, y \in \D_1$ with $x \leq_1 y$. 
We shall often use the following fundamental result:

\begin{thm}[Knaster/Tarski \cite{Tarski55}]
\label{th:tarski}
Let $\D$ be a complete lattice and $f: \D \rightarrow \D$ monotone. 
The operator $f$ has a least fixpoint and a greatest fixpoint,
respectively denoted by $\mu f$ and $\nu f$.
Moreover, 
we have
 $\mu f = \bigwedge \{ x \in \D \mid f(x) \leq x \}$ and 
 $\nu f = \bigvee \{ x \in \D \mid x \leq f(x) \}$.
 \qed
\end{thm}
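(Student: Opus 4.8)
The plan is to run the classical two-inequality argument for the least fixpoint and then obtain the greatest fixpoint by duality. Set $P \defn \{ x \in \D \mid f(x) \leq x \}$. This set is nonempty, since $f(\top) \leq \top$ gives $\top \in P$, and because $\D$ is a complete lattice the infimum $a \defn \bigwedge P$ exists; I would prove that $a$ is the least fixpoint of $f$, which is exactly the claimed identity $\mu f = \bigwedge P$.

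First I would check that $f(a) \leq a$: for every $x \in P$ we have $a \leq x$ by definition of the infimum, hence $f(a) \leq f(x)$ by monotonicity, and $f(x) \leq x$ because $x \in P$; thus $f(a)$ is a lower bound of $P$, so $f(a) \leq \bigwedge P = a$. The reverse inequality is the one place where a small extra step is needed: applying monotonicity to $f(a) \leq a$ gives $f(f(a)) \leq f(a)$, i.e.\ $f(a) \in P$, so $a = \bigwedge P \leq f(a)$. Hence $f(a) = a$. Finally, any fixpoint $y$ of $f$ trivially satisfies $f(y) \leq y$, so $y \in P$ and therefore $a \leq y$; this shows $a$ is the \emph{least} fixpoint, establishing the first formula.

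For the greatest fixpoint I would observe that $\D$ equipped with the reversed order is again a complete lattice, that $f$ remains monotone with respect to it, and that infima and suprema swap roles; applying the result just proved to this opposite lattice turns $\bigwedge \{x \mid f(x) \leq x\}$ into $\bigvee \{x \mid x \leq f(x)\}$ and yields $\nu f$. Equivalently, one can repeat the argument above verbatim with $Q \defn \{ x \in \D \mid x \leq f(x) \}$ and $b \defn \bigvee Q$.

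I do not expect a genuine obstacle here. The only subtlety is the bootstrapping step $f(a) \leq a \Rightarrow f(a) \in P$, which is precisely what promotes $a$ from a mere pre-fixpoint to an actual fixpoint; and one should note that completeness of $\D$ is used exactly to guarantee that $\bigwedge P$ — and dually $\bigvee Q$ — exists in the first place.
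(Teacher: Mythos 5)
Your proof is correct: it is the standard Knaster--Tarski argument (pre-fixpoint infimum, bootstrapping $f(a)\leq a$ to $f(a)\in P$ via monotonicity, then duality for $\nu f$), and there is nothing to compare against since the paper states Theorem~\ref{th:tarski} as a classical result with a citation to Tarski and gives no proof of its own.
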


% \subsubsection{Vectors and Matrices}
\noindent
We denote the transpose of a matrix $A$ by $A^\top$.
For $x \in \CR$,
we denote the column vector $(x,\ldots,x)^\top$ by $\underline x$.
%
% Two elements $x, y$ of a partial ordered set  are called \emph{comparable}
% if and only if $x \leq y$ or $y \leq x$.
%
% $\leq$, on vectors, denotes coordinate-wise comparison.
%
% For $f : X \to \CR^m$ with $X \subseteq \CR^n$, we set
% $\dom(f) := \{ x \in X \mid f(x) \in \R^m \}$
% and
% $\fdom(f) := \dom(f) \cap \R^n$.
%
%
We denote the $i$-th row (resp.\ the $j$-th column) 
of a matrix $A$ by $A_{i\cdot}$ (resp.\ $A_{\cdot j}$).
Accordingly, 
$A_{i \cdot j}$ denotes the entry in the
$i$-th row and the $j$-th column.
We also use this notation for vectors and mappings $f : X \to Y^k$,
i.e., for all $i \in \{1,\cdots,k\}$,
the mapping $f_{i\cdot} : X \to Y$ is given by $f_{i\cdot}(x) = (f(x))_{i\cdot}$ 
for all $x \in X$.
The set $\CR^n$ is partially ordered by the component-wise extension of $\leq$,
which we again denote by $\leq$.
That is,
for all $x,y \in \CR^n$,
$x \leq y$ if and only if $x_{i\cdot} \leq y_{i\cdot}$ for all $i \in \{1,\ldots,n\}$.

% \subsubsection{Weak-affine mappings}

A mapping $f : \CR^n \to \CR^m$ is called \emph{affine} 
if and only if there exist $A \in \R^{m \times n}$ and $b \in \CR^m$ such that 
$f(x) = Ax + b$ for all $x \in \CR^n$.
Here, we use the convention $\neginfty + \infty = \neginfty$.
Observe that $f$ is monotone %(i.e. $x \leq y \implies f(x) \leq f(y)$ for all $x,y$),
if all entries of $A$ are non-negative.
A mapping $f : \CR^n\to\CR$ is called \emph{weak-affine} 
if and only if 
there exist $a \in \R^n$ and $b \in \CR$ such that 
$f(x) = a^\top x + b$ for all $x \in \CR^n$ with $f(x) \neq \neginfty$.
% Accordingly,
A mapping $f : \CR^n\to\CR^m$ is called \emph{weak-affine}
if and only if there exist weak-affine mappings 
$f_1,\ldots,f_m : \CR^n\to\CR$ such that $f = (f_1,\ldots,f_m)^\top$.
Every affine mapping is weak-affine, but not vice-versa.
% Observe further that every weak-affine mapping is concave.
In this article, we are concerned with mappings that are point-wise minimums of 
finitely many monotone and weak-affine mappings.
Note that these mappings are in particular concave, i.e., 
the set of points below the graph of the function is convex.

\subsection{Linear Programming}

%\tom{PERFEKT}
Linear programming aims at optimizing a linear objective function with respect to linear constraints.
In this article,
we consider linear programming problems (LP problems for short)
of the form 
$ %\begin{align*}
  \sup \,
  \{ 
  c^\top  x \mid 
  x \in \R^n
  ,
  Ax \leq b
  \}
$. %\end{align*}
  Here, 
  $A \in \R^{m \times n}$,
  $b \in \R^m$, and
  $c \in \R^n$
  are the inputs.
  The convex closed polyhedron
 $
  \{ 
  x \in \R^n
  \mid
  Ax \leq b
  \}
  $
  is called the \emph{feasible space}.
  The LP problem is called \emph{infeasible}
  if and only if the feasible space is empty.
  An element of the feasible space,
  is called \emph{feasible solution}.
  A feasible solution $x$ that maximizes $c^\top x$ is called
  \emph{optimal solution}.

If $A$ and $b$ consist of rational entries, only, 
then the feasible space is nonempty if and only if it contains a rational point.
An optimal solution exists if and only if there exists a rational one.
In this article, we always assume that all numbers in the input are rational.

LP problems can be solved in polynomial time
through the ellipsoid method~\cite{Kha:79} and interior point methods~\cite{LP1}. 
However, the running-time of these algorithms 
crucially depends on the sizes of occurring numbers.
At the danger of an exponential running-time in contrived cases, 
we can also instead rely on
the simplex algorithm: its worst-case running-time does not 
depend on the sizes of occurring numbers (given that arithmetic operations, comparison,
storage and retrieval for numbers are counted for $\mathcal{O}(1)$).
See for example \citet{LP1,Dantzig1998} for more information on linear programming.

\subsection{SAT modulo linear real arithmetic}

The set of SAT modulo linear real arithmetic formulas $\Phi$ 
is defined through the following grammar:
\begin{align}
  e      &::= c \mid x \mid e_1 + e_2 \mid c \cdot e'  &
  \Phi &::= a \mid e_1 \leq e_2 \mid e_1 < e_2 
    \mid \Phi_1 \vee \Phi_2 \mid \Phi_1 \wedge \Phi_2 \mid \neg \Phi'
\end{align}

\noindent
Here, $c \in \Q$ is a constant, $x$ is a real valued variable, $e, e',e_1,e_2$ are real-valued linear expressions, 
$a$ is a Boolean variable and $\Phi, \Phi',\Phi_1,\Phi_2$ are formulas.
An \emph{interpretation} $I$ for a formula $\Phi$ is a mapping that assigns a real value to every real-valued variable and
a Boolean value to every Boolean variable.
We write $I \models \Phi$ for ``$I$ is a \emph{model} of $\Phi$''.
That is,
we firstly inductively define a function $\sem{e}$ that evaluates 
a linear expression $e$ as follows:
\begin{align}
    \sem{c} I &= c 
    & 
    \sem{x} I &= I(x)
    &
    \sem{e_1 + e_2}I &= \sem{e_1}I + \sem{e_2}I
    &
    \sem{c \cdot e'}I &= c \cdot \sem{e'}I
\end{align}

\noindent
Secondly, 
we inductively define the relation $\models$ as follows:
\begin{align}
  \nonumber
    I \models a &\iff I(a) = 1
    &
    I \models e_1 \leq e_2 &\iff \sem{e_1}I \leq \sem{e_2}I
    \\[-1mm]
    I \models e_1 < e_2 &\iff \sem{e_1}I < \sem{e_2}I
    &
    I \models \Phi_1 \vee \Phi_2 &\iff I \models \Phi_1 \text{ or } I \models \Phi_2
    \\[-1mm]
    \nonumber
    I \models \Phi_1 \wedge \Phi_2 &\iff I \models \Phi_1 \text{ and } I \models \Phi_2
    &
    I \models \neg \Phi' &\iff I \not\models \Phi'
\end{align}

\noindent
A formula is called \emph{satisfiable} if and only if it has at least one model. 
A formula has a model if and only if it has a rational model.
%We denote the fact that the formula $\Phi$ is satisfiable by 
%$\models \Phi$,
%i.e., $\models \Phi$ if and only if
%there exists some interpretation $I$ with $I \models \Phi$.

The problem of deciding the satisfiability of SAT modulo linear real arithmetic formulas is \complexclass{NP}-complete.
There nevertheless exist efficient solver implementations for this decision problem, generally based on the 
DPLL(T) approach, an extension of the DPLL algorithm for SAT to richer logics.
For more information see 
for example \mycitet{Handbook_SAT},
\mycitet{DBLP:conf/cav/DutertreM06}, and
\mycitet{Kroening_Strichman_08}.
Such implementations, on satisfiable instances, can provide a model over Booleans and rational numbers.

In order to simplify notations we also allow matrices, vectors, 
the relations $\geq, \allowbreak >, \allowbreak \neq, \allowbreak =$, 
and the Boolean constants $0$ and $1$ 
to occur in SAT modulo linear real arithmetic formulas.

%%% Local variables:
%%% mode: latex
%%% tex-main-file: "main"
%%% TeX-master: "main"
%%% ispell-local-dictionary: "english"
%%% end:

\section{The Framework}
\label{s:model}

\subsection{Control Flow Graphs and Collecting Semantics}

In this article,
we model programs as control flow graphs,
i.e.,
a \emph{program} $G$ is a triple $(N,E,\start)$,
where 
\begin{enumerate}[(1)]
  \item
    $N$ 
    is a finite set of \emph{program points},
  \item
    $E \subseteq N \times \Stmt \times N$ 
    is a finite set of \emph{control-flow edges},
    and 
  \item
    $\start \in N$ is the \emph{start program point}.
\end{enumerate}

\noindent
A program uses $n$ real-valued variables $\vx_1,\ldots,\vx_n$.
% Therefore,
A state is described by a vector $x \in \R^n$.
% For now,
% it is just important to note that
We assign a \emph{collecting semantics} 
$ %\begin{align}
  \sem{s} : 2^{\R^n} \to 2^{\R^n}
$ %\end{align}
% 
% \noindent
to each statement $s \in \Stmt$.
The collecting semantics $\sem{s}$ 
is an operator that
assigns a set $\sem{s}(X)$ of possible states after the execution of $s$
to a set $X$ of possible states before the execution of $s$.
The set $\Stmt$ of statements is specified subsequently.
The \emph{collecting semantics} $\Values$ of
a program $G = (N,E,\start)$ is finally defined as the least solution of 
the following constraint system:
\begin{align}
  \VALUES[\start]
    &\supseteq \R^n
  &
  \VALUES[v] 
    \supseteq \sem{s} (\VALUES[u])
  \quad \text{for all } (u,s,v) \in E
  .
\end{align}

\noindent
Here, 
for any $v \in N$,
the variable $\VALUES[v]$ takes values in $2^{\R^n}$.
The components of 
the collecting semantics $\Values$ 
are denoted by $\Values[v]$
for all
$v \in N$.
Throughout this article,
we will usually denote variables in bold face, 
and values in normal face.

\newcommand{\nonstrict}[1]{#1 [{<} / {\leq}]}

\subsection{Statements}
\label{sec:statements}
%
% In this article,
The set $\Stmt$ of all statements is the
set of all 
SAT modulo linear real arithmetic formulas
without Boolean variables and without negation.
Note that non-strict and strict inequality constraints are permitted.
The formula $e_1 \neq e_2$ is also permitted,
since it is an abbreviation for $e_1 < e_2 \vee e_2 < e_1$.
We can (in linear time) transform any 
SAT modulo linear real arithmetic formula without Boolean variables into this form by pushing negations to the leaves.

% W.l.o.g.\ we always assume that 
% only atomic propositions are negated.
% Such a form can be obtained in linear time by using de Morgans 
\negation{ 
  and \emph{without negation}.
  Note that this implies that we do not allow strict inequalities.
}% 
The $\R$-valued variables 
$\vx_1,\ldots,\vx_n$ and $\vx_1',\ldots,\vx_n'$,
% whose values range over $\R$,
that may occur in the formula,
play a particular role.
The values of the variables 
$\vx_1,\ldots,\vx_n$ represent the values of the program variables
before executing the statement, 
and the values of the variables 
$\vx_1',\ldots,\vx_n'$ represent the values of the program variables
after executing the statement.
For convenience, 
we denote the vectors 
$(\vx_1,\ldots,\vx_n)^\top$ and $(\vx_1',\ldots,\vx_n')^\top$
also by $\vx$ and $\vx'$, respectively.
In addition to $\vx_1,\ldots,\vx_n$ and $\vx_1',\ldots,\vx_n'$, the
statement may also include other variables, which may stand for intermediate
values computed (or non-deterministically chosen) 
during the execution of a program statement. 
Conceptually, these variables are existentially quantified.

We could also add % existentially quantified 
Boolean variables, at the expense of some additional complexity in definitions, theorems and proofs.
Note that this would not increase the expressiveness, %of our system 
since a Boolean variable $\vy$ 
can be simulated by a real variable $\widetilde{\vy}$ 
by replacing all occurrences of $\vy$ by $\widetilde{\vy} = 1$, 
all occurrences of $\neg \vy$ by $\widetilde{\vy}=0$, 
and conjoining $(\widetilde{\vy} = 0 \lor \widetilde{\vy} = 1$) 
to the formula.
In practice, 
the direct support of Boolean variables may be beneficial for the efficiency.
More generally, we can accommodate any formula feature that just expresses disjunctions in a compact way; the only requirement is not to generate negations.

The \emph{collecting semantics} $\sem{s} : 2^{\R^n} \to 2^{\R^n}$ of 
a statement $s \in \Stmt$ is defined by
\begin{align}
  \sem{s}(X)
  &:= 
  \{ 
    x' \in \R^n 
    \mid 
    \exists x \in X \,.\, 
    s [x / \vx, x' / \vx']
    \text{ is satisfiable}
  \}
  && 
  \text{for all } X \subseteq \R^n
  .
\end{align}

\noindent
Consider the following C-code snippet: 
\begin{lstlisting}
if (x_1 >= 0) 
  x_2 = x_1; 
else 
  x_2 = -x_1;
\end{lstlisting}

\noindent
Assume that \lstinline$x_1$ and \lstinline$x_2$ 
are of type \lstinline|int| and that they are the only numerical variables.
The effect of the C code snippet can be abstracted by the statement
\begin{align}
  \label{eq:stmt:001}
  % s \equiv 
  \vx_1' 
  = 
  \vx_1 \wedge 
    \left(
      \left( \vx_1 \geq 0 \wedge \vx_2'  = \vx_1 \right)
      \vee
      \left( \vx_1 < 0 \wedge \vx_2'  = -\vx_1 \right)
    \right)
\end{align}

% Discussion about above; added remarks on the free variables.
% David->Thomas: shouldn't we allow existentially quantified Boolean and real variables? otherwise, we're unnecessarily requiring statements to be in quantifier-eliminated form, which in general is larger than the original statement, even without disjunctions (polyhedral projection can blow up exponentially)
% Thomas->David: In my understanding, the free variables are existentially quantified real variables.
%                  Concerning the Boolean variables, I think we don't win anything if we allow it. 
%                  It just makes the description a little bit more complicated.

\noindent
Note that a conjunct $\vx'_i = \vx_i$ is needed for all variables that do not change their values.
\negation{
Note also that we have to write $\vx_1 \leq 0$ instead of $\vx_1 < 0$,
since we are not allowed to use strict inequalities;
since we do not allow negations, replacing strict inequalities by non-strict inequalities can add, not subtract transitions to the model, thus the resulting model is a sound abstraction of the original system.
If we additionally know that $\vx_1$ is an integer variable,
then we could also replace the strict inequality $\vx_1 < 0$ 
with the non-strict inequality $\vx_1 \leq -1$.
In the example, however, it does not matter at all.
}

A statement $s$ is called \emph{merge-simple}
if and only if it is in disjunctive normal form, i.e.,
$s$ is of the form $s_1 \vee \cdots \vee s_k$,
where the statements $s_1,\ldots,s_k$ do not use the Boolean connector~$\vee$.
Any statement can be 
rewritten into an equivalent 
merge-simple statement in exponential time and space
using distributivity.
The crux of our main result is that our algorithm never needs to compute
such an exponentially-sized disjunctive normal form.
% , even though this form is useful to reason about its correctness and termination.

If we convert Statement \eqref{eq:stmt:001}
into an equivalent merge-simple statement using distributivity,
we get:
\begin{align}
  \label{eq:stmt:002}
  % s \equiv 
      \left( \vx_1' = \vx_1 \wedge \vx_1 \geq 0 \wedge \vx_2'  = \vx_1 \right)
      \vee
      \left( \vx_1' = \vx_1 \wedge \vx_1 < 0 \wedge \vx_2'  = -\vx_1 \right)
\end{align}

\noindent
A merge-simple statement $s$ that does not use the Boolean connector
$\vee$ at all 
is called \emph{sequential}.
Intuitively, sequential statements correspond to straight-line sequences of basic blocks.
The merge-simple statement \eqref{eq:stmt:002} non-deterministically 
chooses between executing one of the following %executing the 
sequential statement:
\begin{align}
%  \label{eq:stmt:003}
  % s \equiv 
      \vx_1' &= \vx_1 \wedge \vx_1 \geq 0 \wedge \vx_2'  = \vx_1
      &
%      \vee
%      \left( \vx_1' = \vx_1 \wedge \vx_1 \leq 0 \wedge \vx_2'  = -\vx_1 \right)
% \end{align}
%
% \noindent
% or the sequential statement
%
%\begin{align}
%  \label{eq:stmt:004}
  % s \equiv 
      \vx_1' &= \vx_1 \wedge \vx_1 < 0 \wedge \vx_2'  = - \vx_1 
%      \vee
%      \left( \vx_1' = \vx_1 \wedge \vx_1 \leq 0 \wedge \vx_2'  = -\vx_1 \right)
\end{align}

\subsection{Abstract Semantics}
\label{ss:abs:sem}

Let $\D$ be a complete lattice 
(for instance the complete lattice of all $n$-dimensional closed real intervals).
Assume that $\alpha : 2^{\R^n} \to \D$ 
and 
$\gamma : \D \to 2^{\R^n}$
form a Galois connection,
i.e.,
for all $X \subseteq \R^n$ and all $d \in \D$,
$\alpha(X) \leq d$ if and only if $X \leq \gamma(d)$.
The \emph{abstract semantics} $\sem{s}^\sharp : \D \to \D$ 
of a statement $s$
is then defined by 
\begin{align}
  \sem{s}^\sharp := \alpha \circ \sem{s} \circ \gamma
  .
\end{align}
Remark that we have chosen to use the best abstract transformer, i.e.,
the most precise abstract semantics.
All that was needed for soundness is that 
$\sem{s} \circ \gamma(d) \subseteq \gamma \circ \sem{s}^\sharp (d)$ for all $d \in \D$.
Our choice of $\sem{s}^\sharp(d)$, however, is the most accurate sound value.

The \emph{abstract semantics} $\Values^\sharp$ of a program
$G = (N,E,\start)$ is the least solution of the following 
constraint system:
\begin{align}
  \label{eq:sem:constraints}
  \VALUES^\sharp[\start]
    &\geq \alpha(\R^n)
  &
  \VALUES^\sharp[v] 
    \geq \sem{s}^\sharp (\VALUES^\sharp[u])
  \quad \text{for all } (u,s,v) \in E
\end{align}

\noindent
Here, 
for any $v \in N$,
the variable $\VALUES^\sharp[v]$ takes values in $\D$.
The components of 
the abstract semantics $\Values^\sharp$ 
are denoted by $\Values^\sharp[v]$
for all $v \in N$.
The abstraction is sound, i.e.,
the abstract semantics $\Values^\sharp$ safely over-approximates the collecting semantics $\Values$,
i.e., $\gamma(\Values^\sharp[v]) \supseteq \Values[v]$
%holds 
for all % program points 
$v \in N$.

\subsection{Template Linear Constraints}
\label{sec:template_domains}
%\tom{PERFECT}
In this article we restrict our considerations to \emph{template linear constraint domains}
as introduced by
\citet{DBLP:conf/vmcai/SankaranarayananSM05}.
We assume  
that a
\emph{template constraint matrix} $T \in \R^{m \times n}$ is given.
For technical convenience,
we always assume w.l.o.g.\ that $m \geq 1$ and each row of $T$ contains at least one non-zero entry.
The template linear constraint domain can be identified with the set $\CR^m$.
As shown by \citet{DBLP:conf/vmcai/SankaranarayananSM05},
the abstraction $\alpha : 2^{\R^n} \to \CR^m$ and
the concretization $\gamma : \CR^m \to 2^{\R^n}$,
which are defined by
\begin{align}
  \gamma(d) &:= \{ x \in \R^n \mid T x \leq d \} 
  && \text{for all } d \in \CR^m
  , \text{ and}
  \\
  \alpha(X) &:= \textstyle\bigwedge \{ d \in \D \mid \gamma(d) \supseteq X \}
  && \text{for all } X \subseteq \R^n
    ,
\end{align}
form a Galois connection.

The template linear constraint domains contain \emph{intervals},
\emph{zones}, and \emph{octagons} \cite{DBLP:journals/lisp/Mine06,Mine_PhD}, with appropriate choices of the
template constraint matrix $T$ 
\cite{DBLP:conf/vmcai/SankaranarayananSM05}.
For instance, if we have two variables $x$ and $y$, and we abstract each variable by an interval as $x \in [-l_x,u_x]$ and $y \in [-l_y,u_y]$, the vector $d$ is formed of $(l_x,l_y,u_x,u_y)$.
Here, the matrix $T$ is given by:
\begin{equation*}
T 
=
\begin{pmatrix}
-1 & 0\\
0  & -1\\
1  & 0\\
0  & 1\\
\end{pmatrix}
\end{equation*}

\noindent
and thus the concretization expresses:
\begin{equation*}
\gamma \begin{pmatrix}
l_x \\
l_y \\
u_x \\
u_y
\end{pmatrix}
=
\left\{ 
  \begin{pmatrix} x \\ y \\ \end{pmatrix} 
  \mid
  x \in [-l_x, u_x], \,
  y \in [-l_y, u_y]
\right\}
=
\left\{ 
  \begin{pmatrix} x \\ y \\ \end{pmatrix} 
  \mid
\begin{pmatrix}
-1 & 0\\
0  & -1\\
1  & 0\\
0  & 1\\
\end{pmatrix} 
\begin{pmatrix} x \\ y \\ \end{pmatrix} \leq
\begin{pmatrix}
l_x \\
l_y \\
u_x \\
u_y
\end{pmatrix}
\right\}
\end{equation*}

\noindent
While intervals, zones, and octagons are somewhat ``obvious'' choices, a common discussion with respect to template domains is how 
to find the templates, as opposed to the domain of convex polyhedra, where the convex hull and widening operations somewhat ``discover'' interesting directions in space. In this article, we shall assume that template matrices are given and refrain from discussing how they were obtained.

\newcommand{\davidaddon}[1]{}

\davidaddon{
Remark that, for a given $T$, the sets of states representable as $Tx \leq d$ form a lattice: the greatest lower bound (or intersection) of $\gamma(d_1)$ and $\gamma(d_2)$ is $\gamma(\inf(d_1,d_2))$, and their least upper bound is $\gamma(\sup(d_1,d_2))$, where $\inf(d_1,d_2)$ (resp. $\sup(d_1,d_2)$) is defined by coordinate-wise minimum (resp. maximum) of the vectors $d_1$ and $d_2$.
The negated lower bound in the interval above $[-l_x,u_x]$ is thus justified by consistency: the least upper bound (or convex hull) of two intervals $[-l_x,u_x]$ and $[-l_y,u_y]$ is $[-\max(l_x,l_y),\max(u_x,u_y)]$, the greatest lower bound (or intersection) is $[-\min(l_x,l_y),\min(u_x,u_y)]$; using $[a,b]$ would entail distinguishing upper and lower bounds in all constructs and computations, so as to maintain correct ordering.
}

\section{Improving the Precision of the Abstraction}

Most abstract interpretation techniques consider a control-flow graph with transitions expressed as sequential statements only (see formal definition in Sec.~\ref{sec:statements}), that is, composed of atomic guards and assignments.
An if-then-else construct with simple constructs (e.g., assignments) in both branches is thus expressed as two sequential statements, 
and a sequence of two such if-then-else constructs (one from point $A$ to point $B$ and one from $B$ to $C$) is expressed as on the left of Figure~\ref{fig:2x2trans_to_4trans}: 
two sequential statements between $A$ and $B$, 
and two sequential statements between $B$ and $C$.
As noted in the introduction (Sec.~\ref{sec:trace_partitioning}), 
abstract interpretation techniques usually abstract the set of reachable states at point $B$.
This may result in spurious states being considered in the abstraction, 
which in turn may result in the analysis tool being unable to prove desirable properties.

In this article, we apply an idea that is very similar to the \emph{path focusing} technique of \citet{Monniaux_Gonnord_SAS11}. Given a program expressed as a control-flow graph with sequential statements on the edges, we first compute a \emph{feedback vertex set} (a.k.a.\ \emph{cut-set}) $S$, that is, a set of control nodes 
(the \emph{feedback vertexes}) 
such that removing them cuts all cycles in the graph. Our original program is equivalent to a program 
where the only control nodes are those in the feedback vertex set, 
but edges carry arbitrary statements instead of sequential statements only 
(cf.\ Sec.~\ref{sec:statements}).
The results of program analyses on this new graph, at nodes from the
feedback vertex set $S$, are sound invariants for the original program. 
If information is needed at other nodes, 
we can compute it from the information we have for the nodes from $S$. 
%To prove a property $P$ at a node $n \notin S$, 
%for instance,
%we can utilize the information we have for the nodes from $S$
%and bounded model checking techniques.

Since methods for obtaining compact formulas expressing these statements from the original program have already been described in other publications 
\citep{Monniaux_Gonnord_SAS11}, 
we do not explain them in detail. Instead, we provide an example.

\tikzstyle{point}=[circle,draw,thick,inner sep=1pt,minimum size=2mm]
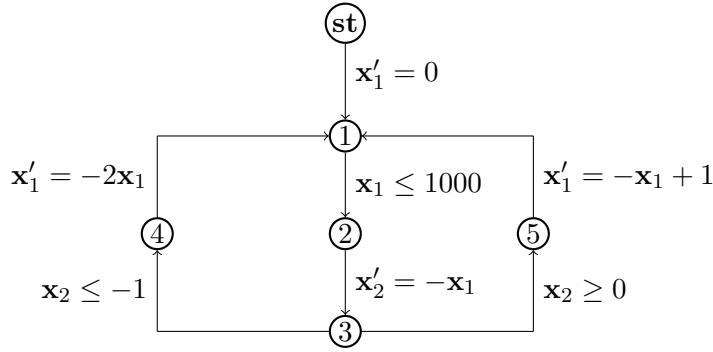
\begin{figure}
%  \vspace*{-5mm}
    \centering
	\scalebox{1}{
	\begin{tikzpicture}[]
		 \node (start) [point] {$\start$};
		 \node (n1) [below of = start,point,yshift=-5mm]{$1$};
		 \node (n2) [below of = n1,point,yshift=-3mm]{$2$};
		 \node (n3) [below of = n2,point,yshift=-3mm]{$3$};
		 \node (n4) [left of = n2,point,xshift=-15mm]{$4$};
		 \node (n4a) [coordinate,left of = n3,xshift=-15mm]{};
		 \node (n4b) [coordinate,left of = n1,xshift=-15mm]{};
		 \node (n5) [right of = n2,point,xshift= 15mm]{$5$};
		 \node (n5a) [coordinate,right of = n3,xshift= 15mm]{};
		 \node (n5b) [coordinate,right of = n1,xshift= 15mm]{};
		 \path[->] (start) edge [] node [right,yshift=1mm] {$\vx_1' = 0$} (n1);
		 \path[->] (n1) edge [] node [right] {$\vx_1 \leq 1000$} (n2);
		 \path[->] (n2) edge [] node [right] {$\vx_2' = - \vx_1$} (n3);
		 \path[-] (n3) edge [] node [below] {} (n4a);
		 \path[->] (n4a) edge [] node [left] {$\vx_2 \leq -1$} (n4);
		 \path[-] (n4) edge [] node [left] {$\vx_1' = -2 \vx_1$} (n4b);
		 \path[->] (n4b) edge [] node [above] {} (n1);
		 \path[-] (n3) edge [] node [below] {} (n5a);
		 \path[->] (n5a) edge [] node [right] {$\vx_2 \geq 0$} (n5);
		 \path[-] (n5) edge [] node [right] {$\vx_1' = -\vx_1 + 1$} (n5b);
		 \path[->] (n5b) edge [] node [above] {} (n1);
	\end{tikzpicture}
	}
	\caption{The program $G_1$ of the running example}
	\label{fig:run:ex:01:a}
	\bigskip
\end{figure}

\newcommand{\stmtrunningexample}{
		       \vx_1 \leq 1000 \wedge \vx_2' = -\vx_1\wedge
		       \left(
		         \left( \vx_2' \leq -1 \wedge \vx_1' = -2 \vx_1 \right)
		         \vee
		         \left( \vx_2' \geq 0 \wedge \vx_1' = - \vx_1 + 1 \right)
 		       \right)
}

%\begin{figure}
%	\scalebox{1}{
%	\begin{tikzpicture}
%		 \node (start) [point] {$\start$};
%		 \node (n1) [below of = start,point,yshift=1mm]{$1$};
%		 \node (n2) [coordinate,below of = n1,yshift=0mm]{$2$};
%		 \node (n3) [coordinate,left of = n2]{$2$};
%		 \node (n4) [coordinate,left of = n1]{$2$};
%		 \path[->] (start) edge [] node [right] {$\vx_1' = 0$} (n1);
%		 \path[-] (n1) edge [] node [right] 
%		   {$\stmtrunningexample$} 
%		   (n2);
%		 \path[-] (n2) edge [] node [right] {} (n3);
%		 \path[-] (n3) edge [] node [right] {} (n4);
%		 \path[->] (n4) edge [] node [right] {} (n1);
%	\end{tikzpicture}
%	}
%	\caption{?????????????}
%	\label{fig:run:ex:01:b}
%    \bigskip
%\end{figure}

\begin{exa}[Running Example]
  \label{ex:running:0}
  Throughout this article 
  we use the following C-code snippet
  as a running example:

\noindent
\begin{minipage}{\textwidth}
\begin{lstlisting}
int x_1, x_2; 
x_1 = 0; 
while (x_1 <= 1000) { 
    x_2 = -x_1; 
    if (x_2 < 0) x_1 = -2 * x_1; 
    else x_1 = -x_1 + 1;  }
\end{lstlisting}
\end{minipage}

\noindent
This C-code snippet is abstracted through the program 
$G_1 = (N_1,E_1,\start)$ depict in Figure~\ref{fig:run:ex:01:a}.
However, it is not necessary to apply abstraction at every program point,
i.e., to assign an abstract value to each program point.
It suffices to apply abstraction at a vertex feedback set of $G_1$.
Since all loops contain the program point $1$,
$\{ 1 \}$ is a feedback vertex set of $G_1$.
Equivalently to applying abstraction only at program point $1$, 
we can rewrite the control-flow graph $G_1$ % w.r.t.\ the feedback vertex set $\{ 1 \}$
into a control-flow graph $G = (N,E,\start)$ that is equivalent w.r.t.\ the collecting semantic,
but contains just the program point $\start$ 
and the program points from the vertex feedback set $\{1\}$.
The result of this transformation 
--- 
the control-flow graph $G$
---
is shown 
% at the right-hand side of 
in Figure~\ref{fig:run:ex}(a) (Page \pageref{fig:run:ex}).
%
%  That is,
%  the program for the above C-code is
%  $G = (N,E,\start)$ as specified at the left-hand side of Figure~\ref{fig:run:ex}(a) (Page \pageref{fig:run:ex}).
%  , where
%  \begin{align}
%    \nonumber
%    N &= \{ \start, 1 \}
%    ,&
%    E &= \{ (\start, \vx_1' = 0, 1), (1, s, 1) \}
%    ,\\
%    \label{eq:stmt:running:ex}
%    \Phi     &=  \vx_1 \leq 1000 \wedge \vx_2' := -\vx_1 , &
%    \Phi_1 &= \vx_2' \leq -1 \wedge \vx_1' = -2 \vx_1 ,
%    \\
%    \nonumber
%    \Phi_2 &= -\vx_2' \leq 0 \wedge \vx_1' = - \vx_1 + 1 , \text{ and }&
%    s      &= \Phi \wedge (\Phi_1 \vee \Phi_2 )
%    .
%  \end{align}
%
%  \noindent

%  $G_1$
%  and $G$
%  are equivalent w.r.t.\ the collecting semantics 
%  up to the fact that $N$
%  is a strict subset of $N_1$.
  The programs $G_1$ and $G$ are equivalent w.r.t.\ their collecting semantics, i.e.,
  $\Values[v] = \Values_1[v]$ for all $v \in N$.
  Here, 
  $\Values_1$ denotes the collecting semantics of $G_1$
  and $\Values$ denotes the collecting semantics of $G$.
%  However, 
  W.r.t.\ to the abstract semantics, $G$ is usually more precise than $G_1$,
  because we reduced the number of merge points.
  In general, we only have $\Values^\sharp[v] \subseteq \Values^\sharp_1[v]$ for all 
  $v \in N$,
  where $\Values^\sharp_1$ denotes the abstract semantics of $G_1$
  and $\Values^\sharp$ denotes the abstract semantics of $G$.
  This is independent of the abstract domain.\footnote{We assume that we have given a Galois-connection and thus in particular monotone best abstract transformers.}
\qed
\end{exa}

Let us make a few last remarks regarding the feedback vertex set.
Abstract interpretation techniques usually use such a set to select widening points~\citep[\S4.1.2]{Cousot_state_doctorate}.
In contrast, 
our method uses this set to select the nodes where it over-approximates the set of reachable states;
it does not over-approximate the set of reachable states at other nodes;
widening is not involved at all.
Finding a feedback vertex set of minimal cardinality is an NP-complete problem if the control-flow graph is arbitrary;
such a set can however be found in linear time 
if the control-flow graph is \emph{reducible} 
(in short, if loops have a single entry point)~\cite{DBLP:journals/siamcomp/Shamir79}, 
which is the case for control-flow graphs directly obtained from structured programs (the method extends to certain irreducible graphs).
The control-flow graph may however become irreducible if certain optimizations or partitioning techniques are used.
A common heuristic is, for structured programs, to use loop headers, 
and for unstructured programs to use the targets of back edges from a depth-first traversal~\cite{Bourdoncle93efficientchaotic,Bou92x};
this heuristic does not guarantee that the feedback vertex set is minimal with respect to inclusion ordering, let alone cardinality.

\section{Basic Observations}
\label{s:basic_obs}

% In this section,
We now note down basic properties of the abstract semantics.  
% $\sem{s}^\sharp$.
% that we need later.

\subsection{Abstract Semantics of Statements}

% \paragraph{The operator $\sem{s}^\sharp$}

% In this section, we assume that a template constraint matrix $T \in \R^{m\times n}$ is fixed.
% In a first step, we study the operator $\sem{s}^\sharp$.
Our first observation is that, 
for all \emph{sequential} statements $s$ and all $d \in \CR^m$,
$\sem{s}^\sharp (d)$ can be computed efficiently.
% For convenience we 

\begin{lem}[Sequential Statements]
\label{l:sequential:poly}
  %\tom{PERFECT}
  Let $s$ be a sequential statement and $d \in \CR^m$.
  The operator $\sem{s}^\sharp$ is a 
  point-wise minimum of finitely many monotone and weak-affine operators.
  For all $d \in \CR^m$,
  $\sem{s}^\sharp(d)$ can be computed in polynomial time
  through linear programming.
\end{lem}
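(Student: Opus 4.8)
The plan is to unfold $\sem{s}^\sharp = \alpha \circ \sem{s} \circ \gamma$ and to read off $\sem{s}^\sharp(d)$, component by component, as the optimal value of a linear program in which the parameter $d$ occurs only in the right-hand side. Since $s$ is sequential, the formula $s[x/\vx,x'/\vx']$, together with its existentially quantified auxiliary variables $z$, is a conjunction of linear inequalities (some possibly strict) in $(x,x',z)$; together with the constraints $Tx \le d$ it describes a (possibly non-closed, possibly empty) polyhedron $P_d \subseteq \R^n\times\R^n\times\R^k$, and $\sem{s}(\gamma(d))$ is the projection of $P_d$ onto the post-state coordinates $x'$. From the definitions of $\alpha$ and $\gamma$ one reads off $\alpha(X)_{i\cdot} = \sup\{T_{i\cdot}y \mid y\in X\}$, hence $(\sem{s}^\sharp(d))_{i\cdot} = \sup\{T_{i\cdot}x' \mid (x,x',z)\in P_d\}$. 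Two elementary facts turn this into a solvable LP: the linear functional $y\mapsto T_{i\cdot}y$ has the same supremum over a non-empty set and over its closure, and the closure of a \emph{non-empty} polyhedron $P_d$ is the closed polyhedron obtained by relaxing all strict inequalities of $P_d$ to non-strict ones. Thus, if $P_d\neq\emptyset$, the value equals the optimum of the ordinary LP ``maximize $T_{i\cdot}x'$ over the relaxed, closed version $\overline{P_d}$ of $P_d$'' (with value $\infty$ when that LP is unbounded), whereas if $P_d=\emptyset$ the value is $\neginfty$; and emptiness of the non-closed polyhedron $P_d$ is itself decidable by one LP, e.g.\ by maximizing a slack $\varepsilon$ forced into all strict constraints.

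For the structural part I would invoke linear programming duality. On the set of parameters $d$ for which the LP of row $i$ is feasible and bounded, its value equals the minimum, over the dual feasible polyhedron $D_i$, of an objective that is affine in $d$; the decisive point is that $D_i$ itself does not depend on $d$, because the parameter $d$ sits in the dual objective, not in the dual constraints. A linear functional that attains its minimum over a polyhedron attains it at one of the finitely many vertices of that polyhedron, so there is a finite list of dual vertices, the \emph{same for all} $d$, such that $(\sem{s}^\sharp(d))_{i\cdot}$ is the point-wise minimum over this list of the corresponding affine-in-$d$ values, wherever the value is finite. Each such affine map has a non-negative gradient (the multipliers of the $Tx\le d$ block are $\ge 0$), hence is monotone. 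Collecting these lists over all rows $i$ exhibits $\sem{s}^\sharp$ as a point-wise minimum of finitely many monotone weak-affine operators on the set of parameters where all the involved LPs are feasible and bounded.

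The only genuinely fiddly work, and the step I expect to be the main obstacle, is to extend this description \emph{uniformly} to all of $\CR^m$, folding the degenerate cases into the finite family. If $s$ is unsatisfiable as a formula, then $\sem{s}\equiv\emptyset$ and $\sem{s}^\sharp\equiv\underline{\neginfty}$, a constant monotone weak-affine map, and we are done. Otherwise one must handle: components $d_i=\neginfty$ (then $\gamma(d)=\emptyset$, so $\sem{s}^\sharp(d)=\underline{\neginfty}$), components $d_i=\infty$ (the constraint $T_{i\cdot}x\le\infty$ is vacuous and may be discarded), parameters $d$ with $P_d=\emptyset$ although all entries of $d$ are finite (value $\neginfty$; the set of such $d$ is downward closed, so the constant $\neginfty$ there is compatible with a minimum of monotone maps), and parameters for which the closed LP is unbounded (value $\infty$). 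Since a weak-affine map is, by definition, allowed to equal $\neginfty$ outside the region on which it is genuinely affine, and may take the value $\infty$ at points of $\CR^m$ with an $\infty$ entry, these behaviours can be absorbed by adjoining a bounded number of extra weak-affine pieces (for instance a piece equal to $\neginfty$ on the infeasibility down-set and dominating the other pieces elsewhere), without changing the value on the feasible-bounded region. This yields the claimed representation on all of $\CR^m$.

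The computational claim then follows at once: given $d$, after a trivial preprocessing of the $\pm\infty$ components one solves a single feasibility LP for $P_d$ and at most $m$ ordinary LPs, one per row of $T$; each has polynomial size in the input and is therefore solvable in polynomial time by the ellipsoid or an interior-point method (or in polynomially many arithmetic operations by the simplex algorithm used as an oracle, as recalled in Section~\ref{s:basics}). Hence $\sem{s}^\sharp(d)$ is computable in polynomial time.
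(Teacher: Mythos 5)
Your proposal is correct and follows essentially the same route as the paper's proof: reduce each component $\sem{s}^\sharp_{i\cdot}(d)$ to a linear program with $d$ in the right-hand side, pass to the dual whose feasible polyhedron is independent of $d$, take the minimum over its finitely many vertices to obtain the weak-affine pieces, and read off monotonicity from the non-negativity of the multipliers attached to the block $Tx \leq d$. The only differences are cosmetic — you treat strict inequalities via a closure argument plus an $\varepsilon$-slack feasibility LP where the paper perturbs by an appropriately small $\epsilon$, and you are somewhat more explicit about the degenerate $\pm\infty$ and infeasible cases, which the paper absorbs into the definition of weak-affine maps.
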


\begin{proof}
   Let $i \in \{ 1,\ldots,m \}$.
   We get:
   \begin{align}
     \label{eq:this:is:an:lp:1}
     \sem{s}^\sharp_{i\cdot}(d)
     &=
     \sup \, \left\{ T_{i\cdot} x' \mid x' \in \sem{s}(\gamma(d)) \right\}
     \\&=
     \label{eq:this:is:an:lp:2}
     \sup \, \left\{ 
       T_{i\cdot} x' \mid x' \in \R^n \text{ and } \exists x \,.\, T x \leq d \text{ and }  
       s[x/\vx,x'/\vx'] \text{ is satisfiable}
     \right\}
   \end{align}
   
   \noindent
   Equation~\ref{eq:this:is:an:lp:2} follows from Equation~\ref{eq:this:is:an:lp:1} by expansion of the concrete semantics $\sem{s}$ into a SMT-formula and of $\gamma(d)$ into $T x \leq d$.
  %
   %\noindent
   Since $s$ does not contain disjunctions,
   the optimization problem in \eqref{eq:this:is:an:lp:2} aims at optimizing a linear objective function 
   w.r.t.\ linear constraints (equalities, strict inequalities, and non-strict inequalities).
   The optimal value of this
   optimization problem 
   can be computed in polynomial time through linear programming.
   To check feasibility by standard linear programming techniques 
   (which only allow non-strict inequalities),
   we can replace every strict inequality $e_1 < e_2$ 
   by the non-strict inequality $e_1 \leq e_2 - \epsilon$,
   where $\epsilon$ is appropriately small.
   Such an appropriately small $\epsilon$ can be computed in polynomial time.
   Provided that the optimization problem is feasible,
   we can then replace 
   $s[x/\vx,\,x'/\vx']$
   by
   $\nonstrict{s[x/\vx,\,x'/\vx']}$.
   Here, $\nonstrict{s}$ denotes the statement 
   obtained from $s$ by replacing every strict inequality relation by a non-strict inequality relation.
   The optimal value of the obtained linear programming 
   problem is equal to the optimal value of the 
   optimization problem \eqref{eq:this:is:an:lp:2}.
%   linear programming problem,
%    which can be solved in polynomial time.
   
   %
   It remains to show that $\sem{s}^\sharp_{i\cdot}$
   is a point-wise minimum of finitely many monotone and weak-affine operators.
   Since 
   $\nonstrict{s[x/\vx, \, x'/\vx']}$
   is a conjunction of non-strict linear inequalities,
   there exist matrices $A$, $A'$ and $A''$ and a vector $b$ such that,
   for all $x$ and $x'$,
   $\nonstrict{s[x/\vx, \, x'/\vx']}$ is satisfiable
   if and only if there exists a $x''$ such that
   $A x + A' x' + A'' x'' \leq b$
   (the vector $x''$ stands for the other variables in~$s$,
   which are implicitly existentially quantified).
   Thus, the optimization problem 
   \eqref{eq:this:is:an:lp:2} 
   can be rewritten as follows:
   \begin{align}
     \sem{s}^\sharp_{i\cdot} (d)
     \!=\!
     \sup \left\{ 
       T_{i\cdot} x' 
       \!\mid\!
        x' {\in} \R^n, 
       \exists x {\in} \R^n \,.\,
       \exists x'' {\in} \R^q \,.\, 
       T x \leq d \text{ and } A x + A' x' + A'' x'' \leq b 
     \right\}
   \end{align}
   
   \noindent
   Strong duality \cite{Boyd_Vandenberghe_CVXOPT}, also known as Farkas' lemma,
   thus gives us, provided that $\sem{s}^\sharp_{i\cdot} (d) > \neginfty$,
   i.e., 
   the optimization problem is feasible,
   the following equation:
   \begin{align}
   \label{eqn:dual_lp}
     \sem{s}^\sharp_{i\cdot} (d)
     =
     \inf \, \left\{ 
       d^\top y_1 {+} b^\top y_2
       \mid 
       y_1,y_2 \geq 0 , \,
       T^\top y_1 {+} A^\top y_2 =0 , \,
       {A''}^\top y_2 = 0 ,\, 
       {A'}^\top y_2 = T_{i\cdot}^\top
     \right\}
   \end{align}
   
   \noindent
   Since $y_1 \geq 0$ for all feasible solutions of the linear programming problem
   in \eqref{eqn:dual_lp}, 
   % we get that
   $\sem{s}^\sharp_{i\cdot}$ 
   coincides with a point-wise infimum of monotone and affine operators on 
   the set $\{ d \in \CR^m \mid \sem{s}^\sharp_{i\cdot} (d) > \neginfty \}$.
   That is,
   $\sem{s}^\sharp_{i\cdot}$ is a point-wise infimum of monotone and weak-affine operators.
   Since the optimal value, provided that it exists, is attained at the vertices of the feasible space (finitely many),
   the point-wise infimum is a point-wise minimum of finitely many monotone and weak-affine operators.
%   This completes the proof.
\end{proof}

The max-strategy improvement algorithm % TODO of ???
we adapt in this article 
heavily relies on the fact that,
for all sequential statements $s$,
$\sem{s}^\sharp$ 
is a point-wise minimum of finitely many monotone and weak-affine operators.
The latter statement especially implies that $\sem{s}^\sharp$
is concave (see \citet{arXiv_report} for precise definitions).

The number of vertices in the feasible space of the point-wise infimum in \eqref{eqn:dual_lp} may be exponential in the size of the original problem, and thus the representation as a point-wise minimum of finitely many monotone and weak-affine operators might contain an exponential number of such operators. 
This is not a problem since our algorithm never computes this decomposition explicitly.
% is needed only for mathematical proofs, but is never computed inside our algorithms, in much the same way that we do not compute disjunctive normal forms for statements (\S~\ref{l:sequential:poly}). The crux of our method is actually how to compute over such exponential objects without ever having to represent them in memory.

Any polynomial-time method for evaluating the abstract semantics of sequential statements 
can be used to derive a polynomial-time method for evaluating merge-simple statements.
% As explained in Sec.~\ref{sec:template_domains}, least upper bounds (convex hulls) in the abstract domain translate to point-wise maxima.

\begin{lem}[Merge-Simple Statements]
\label{l:merge-simple:poly}
  %\tom{PERFEKT}
  Let $s$ be a merge-simple statement.
  The operator $\sem{s}^\sharp$ is a point-wise maximum 
  of finitely many point-wise minima of finitely many monotone and weak-affine mappings.
  For all $d \in \CR^m$, 
  $\sem{s}^\sharp (d)$ can be computed in polynomial time
  through linear programming.
  %\qed
\end{lem}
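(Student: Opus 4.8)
The plan is to reduce the merge-simple case to the sequential case already handled by Lemma~\ref{l:sequential:poly}. A merge-simple statement $s$ is by definition a disjunction $s = s_1 \vee \cdots \vee s_k$ of sequential statements $s_1,\ldots,s_k$. First I would observe that the concrete semantics distributes over disjunction: $\sem{s}(X) = \bigcup_{j=1}^k \sem{s_j}(X)$ for all $X \subseteq \R^n$, directly from the definition of $\sem{\cdot}$ via satisfiability of $s[x/\vx,x'/\vx']$, since a disjunction is satisfiable iff one of its disjuncts is. Because $\alpha$ is the upper adjoint of a Galois connection, it preserves arbitrary joins, so $\sem{s}^\sharp(d) = \alpha(\bigcup_j \sem{s_j}(\gamma(d))) = \bigvee_{j=1}^k \alpha(\sem{s_j}(\gamma(d))) = \bigvee_{j=1}^k \sem{s_j}^\sharp(d)$, where the join in $\CR^m$ is the coordinate-wise supremum.

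Next I would invoke Lemma~\ref{l:sequential:poly}: each $\sem{s_j}^\sharp$ is a point-wise minimum of finitely many monotone weak-affine operators. Hence $\sem{s}^\sharp$, being the point-wise maximum (coordinate-wise sup over the finite index set $j$) of these, is a point-wise maximum of finitely many point-wise minima of finitely many monotone weak-affine mappings, which is exactly the structural claim. This is purely a matter of composing the finite-min representations under a finite max.

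For the complexity claim, I would compute $\sem{s}^\sharp(d)$ coordinate by coordinate: for each $i \in \{1,\ldots,m\}$, $\sem{s}^\sharp_{i\cdot}(d) = \max_{j=1}^k \sem{s_j}^\sharp_{i\cdot}(d)$. Each term $\sem{s_j}^\sharp_{i\cdot}(d)$ is computable in polynomial time by Lemma~\ref{l:sequential:poly} (solving one LP, after the $\epsilon$-perturbation and strict-to-nonstrict rewriting described there). Since $k$ is bounded by the size of $s$ and $m$ is fixed input data, we solve polynomially many LPs of polynomial size and take a maximum, so the whole computation is polynomial time.

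The only genuine subtlety — and the step I would flag as the main obstacle — is the boundary behavior at $\neginfty$: if $\gamma(d)$ is empty (some $\sem{s_j}^\sharp(d)$ has a $\neginfty$ coordinate, i.e.\ the corresponding LP is infeasible) one must be careful that the "point-wise maximum of point-wise minima of weak-affine" bookkeeping still goes through, using the convention $\neginfty + \infty = \neginfty$ fixed in Section~\ref{s:basics} and the fact that a weak-affine map need only be affine where it is finite. Handling an infeasible disjunct is easy — it simply contributes $\underline{\neginfty}$ to the maximum and can be detected by the feasibility test in the proof of Lemma~\ref{l:sequential:poly} — but it is the one place where the argument is not a completely mechanical "max of mins" assembly. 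Everything else is routine given Lemma~\ref{l:sequential:poly} and join-preservation of $\alpha$.
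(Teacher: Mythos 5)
Your proof is correct and follows essentially the same route as the paper's: decompose the merge-simple statement into its sequential disjuncts, use $\sem{s}^\sharp(d) = \sem{s_1}^\sharp(d) \vee \cdots \vee \sem{s_k}^\sharp(d)$, and apply Lemma~\ref{l:sequential:poly} to each disjunct (the paper states this identity without the join-preservation justification you supply, and note only that $\alpha$ is the \emph{lower} adjoint of the Galois connection, which is why it preserves joins).
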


\begin{proof}
  Let $s \equiv s_1 \vee \cdots \vee s_k$,
  where  
  $s_1,\ldots,s_k$ are sequential statements.
  Since  
  $\sem{s}^\sharp (d) = \sem{s_1}^\sharp (d) \vee\cdots\vee \sem{s_k}^\sharp (d)$,
  Lemma~\ref{l:sequential:poly},
  can be applied to provide us with the desired result.
%  Thus 
%  $\sem{s}^\sharp_{i\cdot} d = \sem{s_1}^\sharp_{i\cdot} d \vee\cdots\vee \sem{s_k}^\sharp_{i\cdot} d$
%  for all $i = 1,\ldots,m$.
%  Therefore it suffices to show that,
%  for all $i = 1,\ldots,m$ and all $j = 1,\ldots,k$,
%   $\sem{s_j}^\sharp_{i\cdot} d$
%   can be computed in polynomial time through linear programming,
%   i.e., we have to show that, 
%   for all $i = 1,\ldots,m$ and all sequential statements $s$,
%   $\sem{s}^\sharp_{i\cdot} d$ can be computed in polynomial time through linear programming.
%%   
%   For that, 
%   
%   ??????
%   
\end{proof}

\noindent
The problem for arbitrary statement is more difficult. By clear equivalence with satisfiability solving modulo the theory 
of linear real arithmetic, we obtain:

\begin{lem}
\label{l:diamand:is:np:complete}
%\tom{PERFEKT}
  The problem of deciding,
  whether or not,
  for a given template constraint matrix $T$,
  and a given statement $s$,
  $
    \sem{s}^\sharp (\inftyvar) > \neginftyvar
  $
  holds,
  is \complexclass{NP}-complete.  
  \qed
\end{lem}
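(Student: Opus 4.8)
The plan is to show that the condition $\sem{s}^\sharp(\underline{\infty}) > \underline{\neginfty}$ is, up to a polynomial-time equivalence, the same as satisfiability of the formula $s$ over linear real arithmetic, and then to invoke (for membership) and to re-prove by a reduction (for hardness) \complexclass{NP}-completeness of the latter. The first step is to note that the truth value of $\sem{s}^\sharp(\underline{\infty}) > \underline{\neginfty}$ does not actually depend on the template constraint matrix $T$, beyond our standing assumptions ($m \geq 1$, each row of $T$ non-zero). Indeed $\gamma(\underline{\infty}) = \{ x \in \R^n \mid T x \leq \underline{\infty} \} = \R^n$, so $\sem{s}^\sharp(\underline{\infty}) = \alpha(\sem{s}(\R^n))$, and by~\eqref{eq:this:is:an:lp:1}, for every $i \in \{1,\ldots,m\}$,
\[
  \sem{s}^\sharp_{i\cdot}(\underline{\infty}) \;=\; \sup \, \{\, T_{i\cdot} x' \mid x' \in \sem{s}(\R^n) \,\}.
\]
If $\sem{s}(\R^n) = \emptyset$ then $\alpha(\emptyset) = \bot = \underline{\neginfty}$, so every coordinate equals $\neginfty$; if $\sem{s}(\R^n) \neq \emptyset$ then picking any $x_0 \in \sem{s}(\R^n)$ gives $\sem{s}^\sharp_{i\cdot}(\underline{\infty}) \geq T_{i\cdot} x_0 \in \R$, so every coordinate is strictly above $\neginfty$. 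Hence $\sem{s}^\sharp(\underline{\infty}) > \underline{\neginfty}$ holds if and only if $\sem{s}(\R^n) \neq \emptyset$, which by definition of $\sem{s}$ holds if and only if $s$ — read as a SAT modulo linear real arithmetic formula in which $\vx$, $\vx'$ and all auxiliary variables are free — is satisfiable.

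For membership in \complexclass{NP}, I would observe that $s$ is a positive Boolean combination (using only $\wedge$ and $\vee$) of linear equalities and (strict or non-strict) inequalities over $\Q$. A certificate is a choice, for every subformula $\Phi_1 \vee \Phi_2$, of one of its disjuncts; this leaves a conjunction of linear constraints whose feasibility is decidable in polynomial time by linear programming, after replacing each strict inequality $e_1 < e_2$ by $e_1 \leq e_2 - \epsilon$ for an appropriately small, polynomial-time computable $\epsilon$, exactly as in the proof of Lemma~\ref{l:sequential:poly}. Since such a certificate has polynomial size, the problem is in \complexclass{NP}. (Equivalently, one simply remarks that $s$ is an instance of SAT modulo linear real arithmetic, which is in \complexclass{NP}.)

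For \complexclass{NP}-hardness I would reduce from $3$-SAT. Given a $3$-CNF formula $\varphi$ over Boolean variables $y_1,\ldots,y_k$, introduce one fresh real-valued (auxiliary) variable $z_j$ per $y_j$ and form the statement
\[
  s_\varphi \;\equiv\; \bigwedge_{j=1}^{k} \bigl( (z_j \leq 0 \wedge z_j \geq 0) \vee (z_j \leq 1 \wedge z_j \geq 1) \bigr) \;\wedge\; \bigwedge_{c \text{ clause of } \varphi} \; \bigvee_{\ell \in c} \mathrm{enc}(\ell),
\]
where $\mathrm{enc}(y_j) \equiv (z_j \geq 1)$ and $\mathrm{enc}(\neg y_j) \equiv (z_j \leq 0)$. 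This $s_\varphi$ contains neither Boolean variables nor negation and has size polynomial in that of $\varphi$. The first big conjunct pins each $z_j$ to $\{0,1\}$, and under that constraint $z_j \geq 1$ is equivalent to $z_j = 1$ and $z_j \leq 0$ to $z_j = 0$; hence $s_\varphi$ is satisfiable over $\R$ exactly when $\varphi$ is satisfiable over the Booleans. Taking $n = 1$ and $T = (1)$ (or any $T$ meeting the standing assumptions), the pair $(T, s_\varphi)$ is a yes-instance of the stated problem if and only if $\varphi$ is satisfiable, by the equivalence of the first step.

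The only genuinely delicate point is this hardness reduction: because statements carry neither Boolean variables nor negation, Boolean satisfiability cannot be used off the shelf, and one must encode truth values by real variables clamped to $\{0,1\}$ and, crucially, encode \emph{both} positive and negative literals by \emph{atomic} inequalities ($z_j \geq 1$, respectively $z_j \leq 0$) rather than by negating a single atom. Everything else — the $T$-independence, the reduction to satisfiability, and \complexclass{NP}-membership — is routine.
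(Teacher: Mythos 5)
Your proof is correct and takes the same route the paper does: the paper disposes of this lemma in one line, ``by clear equivalence with satisfiability solving modulo the theory of linear real arithmetic,'' and your argument is precisely a careful elaboration of that equivalence --- the $T$-independence and reduction to satisfiability of $s$, \complexclass{NP}-membership via guessing disjuncts and linear programming, and a $3$-SAT encoding that respects the negation-free, Boolean-variable-free statement syntax. The last point (encoding both literal polarities by atomic inequalities on reals clamped to $\{0,1\}$) is exactly the detail the paper leaves implicit, and you handle it correctly.
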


\subsection{A Trivial Method for Computing Abstract Semantics}

Using the results we have obtained so far,
the abstract semantics of a program $G$ w.r.t.\ some
template constraint matrix $T$ can be computed 
using the following two-step procedure:
\begin{enumerate}[(1)]
  \item
    Replace each statement $s$ of the program $G$ with an equivalent merge-simple statement.
    This corresponds to an explicit enumeration of all paths between cut-points,
    which potentially causes an exponential blowup.
  \item
    Apply the methods of \citet{DBLP:conf/csl/GawlitzaS07}
    to the obtained program to compute the abstract semantics
    $\Values^\sharp$ of $G$.
\end{enumerate}

\noindent
Because of the possible exponential blowup, 
the above described method is impractical for most cases%
\footnote{
Note that we cannot expect a polynomial-time algorithm,
because of Lemma~\ref{l:diamand:is:np:complete}: even without
loops, abstract reachability is \complexclass{NP}-hard.
Even if all statements are merge-simple,
we cannot expect a polynomial-time algorithm,
since the problem of computing the winning regions 
of parity games %, for instance,
is polynomial-time reducible to abstract reachability
\cite{DBLP:conf/fm/GawlitzaS08}.}.
Our method eschews this blowup as follows:
instead of enumerating all program paths, we shall visit them only as needed.
Guided by a SAT modulo linear real arithmetic solver, 
our method selects a path through a statement $s$ 
only when it is \emph{locally profitable} in some sense.
In the worst case, an exponential number of paths may be visited
(Section~\ref{s:upper}); but one can hope that this rarely happens in practice.
%practical cases, in the same way that SAT and SMT solving perform well on
%many practical cases even though they in principle may visit an exponential number
%of cases.
In cases in which our algorithm needs exponential time, 
it at least avoids the explicit exponential expansions. 
It uses only polynomial space.

%%% Local Variables:
%%% mode: latex
%%% TeX-master: "main.tex"
%%% End:

\section{Max-Strategy Iteration}
\label{s:max:strat:imp}

This section presents our main contribution.
We adapt 
the max-strategy improvement schema 
of \mycitet{DBLP:journals/toplas/GawlitzaS11} to obtain an algorithm to compute abstract semantics in the framework of this article.

\subsection{Notations}

Before we go in medias res,
we have to introduce some notations.
A system $\E$ of (fixpoint) equations over $\CR$ is a finite set 
$\{ \vx_1 = e_1, \ldots, \vx_n = e_n \}$
of equations.
Here, 
$\vx_1,\ldots,\vx_n$ are pairwise distinct, $\CR$-valued variables and 
$e_1,\ldots,e_n$ are expressions over $\CR$.
We denote the set $\{\vx_1,\ldots,\vx_n\}$ of variables of $\E$ by $\vX_\E$.
We omit the subscript, 
whenever it is clear from the context.
A function $\rho : \vX\to\CR$ is called a \emph{variable assignment}. 
It assigns the value $\rho(\vx)$ to each variable $\vx \in \vX$.
Variable assignments are ordered by the point-wise extension of $\leq$ on $\CR$,
i.e., $\rho \leq \rho'$ if and only if $\rho(\vx) \leq \rho'(\vx)$ for all $\vx \in \vX$.
Since $\CR$ is a complete linearly ordered set,
the set $\vX\to\CR$ of all variable assignments is a complete lattice.
The semantics $\sem{e} : (\vX\to\CR)\to\CR$ of an expression $e$
is defined by 
$\sem{\vx}(\rho) := \rho(\vx)$ and 
$\sem{f(e_1,\ldots,e_k)}(\rho) := f(\sem{e_1}(\rho),\ldots,\sem{e_k}(\rho))$,
where $\vx \in \vX$, $f$ is a $k$-ary operator on $\CR$,
$e_1,\ldots,e_k$ are
expressions,
and $\rho : \vX\to\CR$ is a variable assignment.
We define the operator 
$\sem\E : (\vX\to\CR) \to \vX\to\CR$
by
%\begin{align*}
%  \sem\E(\rho)(\vx) &:= \sem{e}\rho && \text{ for all equations } \vx = e \text{ of } \E, \text{ all } \rho : \vX\to\CR, \text{ and all } \vx \in \vX
%  .
%\end{align*}
$\sem\E(\rho)(\vx) := \sem{e}\rho$ for all equations 
$\vx = e$ of $\E$, all $\rho : \vX\to\CR$, and all $\vx \in \vX$.
%
% \noindent
A fixpoint equation $\vx = e$ is called \emph{monotone} if and only if  
all operators used in $e$ are monotone.
Then, the evaluation function $\sem{e}$ of $e$ is monotone, too. 
Finally, 
the operator $\sem\E$ is monotone
for all systems $\E$ of monotone (fixpoint) equations.
A variable assignment $\rho$ is called a 
\emph{solution} (resp.\ \emph{pre-solution}, resp.\ \emph{post-solution})
of $\E$ 
if and only if $\rho = \sem\E(\rho)$
(resp.\ $\rho \leq \sem\E(\rho)$, resp.\ $\rho \geq \sem\E(\rho)$).
The least solution of $\E$ is denoted by
$\mu\sem\E$.
If the operator $\sem\E$ is monotone,
then the fixpoint theorem of Knaster/Tarski  (Theorem~\ref{th:tarski}) ensures the 
existence of a uniquely determined least solution $\mu\sem\E$. 
For a system $\E$ of equations and a pre-solution $\rho$,
$\mu_{\geq \rho}\sem\E$ denotes the least solution of $\E$ 
among the solutions of $\E$ that are greater than or equal to $\rho$, i.e.,
$\mu_{\geq \rho}\sem\E = \min \{ \rho' \mid \rho' = \sem\E(\rho') \text{ and } \rho' \geq \rho \}$.
Again,
if the operator $\sem\E$ is monotone,
then the fixpoint theorem of Knaster/Tarski ensures the 
existence of $\mu_{\geq\rho}\sem\E$,
since the set $\{ \rho' \mid \rho' \geq \rho \}$ is a complete lattice.

\subsection{Rewriting the Abstract Semantic Equations}
\label{ss:eqs}

The first step of our method consists of rewriting our static analysis problem 
into a system of monotone fixpoint equations over $\CR$.
Assume that 
$G = (N,E,\start)$ is a program that has $n$ variables, 
and $T \in \R^{m\times n}$ is a template constraint matrix.
% , i.e., we have $m$ linear templates.
%
Recall that (w.r.t.\ $T$) the abstract semantics of $G$ is the least solution
of the following constraint system
(cf.\ \eqref{eq:sem:constraints} in Subsection~\ref{ss:abs:sem}):
\begin{align}
  \VALUES^\sharp[\start]
    &\geq \alpha(\R^n)
  &
  \VALUES^\sharp[v] 
    \geq \sem{s}^\sharp (\VALUES^\sharp[u])
  && \text{for all } (u,s,v) \in E
\end{align}

\noindent
The constraint system has exactly one $\CR^m$-valued variable $\VALUES^\sharp[v]$ for each program point $v \in N$.
For each program point $v \in N$,
we decompose the $\CR^m$-valued variable $\VALUES^\sharp[v]$ into $m$ $\CR$-valued variables
$\vd_{v,1},\ldots,\vd_{v,m}$.
That is, 
we set 
$(\vd_{v,1},\ldots,\vd_{v,m})^\top = \VALUES^\sharp[v]$.
We obtain the following constraint system:
\begin{align}
  \vd_{\start,i} &\geq \infty 
    && \text{for all } i \in \{1,\ldots,m\} \\
  \vd_{v,i}        &\geq \sem{s}^\sharp_{i\cdot} \left( \vd_{u,1},\ldots,\vd_{u,m} \right)
    && 
      \text{for all } (u,s,v) \in E
      \text{ and all } i \in \{1,\ldots,m\} 
\end{align}

\noindent
The fixpoint theorem of Knaster/Tarski (Theorem~\ref{th:tarski})
ensures that the least solution of the above system of inequalities 
is the least solution of the following equation system:
\begin{align}
  \label{eq:abs:sem:decomp:1}
  \vd_{\start,i} &\!=\! \infty 
    && 
       \!\!\!
       \text{for all } i \in \{1,\ldots,m\} \\
  \label{eq:abs:sem:decomp:2}
  \vd_{v,i}        &\!=\! 
    \max \left\{ 
      \sem{s}^\sharp_{i\cdot} \left( \vd_{u,1},\ldots,\vd_{u,m} \right)
      \mid
      (u,s,v) \in E
    \right\}
    && 
      \!\!\!
      \text{for all } v \in N \setminus \{ \start \},
      i \in \{1,\ldots,m\}   
\end{align}

\noindent
We denote the above system of fixpoint equations by $\E(G,T)$.
From Section~\ref{s:basic_obs},
we know that the right-hand sides of $\E(G,T)$ 
are point-wise maxima of finitely many point-wise minima of finitely many weak-affine operators.
We summarize the properties of $\E(G,T)$:

\begin{lem}
  \label{l:eqs:abs_sem}
  Let $G$ be a program and $\Values^\sharp$ its abstract semantics
  (w.r.t.\ the template constraint matrix $T \in \R^{m\times n}$).
  Let $\rho^* := \mu\sem{\E(G,T)}$ be the least solution of $\E(G,T)$.
  Then 
  $\Values^\sharp_{i \cdot}[v] = \rho^*(\vd_{v,i})$
  for all 
  program points $v \in N$ and all $i \in \{1,\ldots,m\}$.
  The right-hand sides of $\E(G,T)$ 
  are point-wise maxima of finitely many point-wise minima of finitely many weak-affine operators.
  Thus, they are in particular point-wise maxima of finitely many monotone and concave functions.
  \qed
\end{lem}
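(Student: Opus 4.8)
The plan is to assemble this lemma from material already in place: Lemmas~\ref{l:sequential:poly} and~\ref{l:merge-simple:poly} for the shape of the operators, and the componentwise rewriting of Subsection~\ref{ss:eqs} together with the Knaster/Tarski theorem (Theorem~\ref{th:tarski}) for the identification with $\Values^\sharp$. I do not expect a genuinely hard step; the lemma is essentially a bookkeeping statement that records the output of the preceding subsections.

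First I would establish the shape of the right-hand sides. For an arbitrary statement $s$ and any $i\in\{1,\ldots,m\}$, rewrite $s$ into an equivalent merge-simple statement $s_1\vee\cdots\vee s_k$ with $s_1,\ldots,s_k$ sequential; this rewriting is only invoked conceptually and is never carried out by the algorithm. Then $\sem{s}^\sharp_{i\cdot}=\max\{\sem{s_1}^\sharp_{i\cdot},\ldots,\sem{s_k}^\sharp_{i\cdot}\}$, and by Lemma~\ref{l:sequential:poly} each $\sem{s_j}^\sharp_{i\cdot}$ is a point-wise minimum of finitely many monotone weak-affine operators (their number may be exponential, which is harmless since we only claim finiteness). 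Hence $\sem{s}^\sharp_{i\cdot}$, and therefore the right-hand side of~\eqref{eq:abs:sem:decomp:2} (a further maximum over the finitely many edges entering $v$, and a maximum of maxima is a maximum), is a point-wise maximum of finitely many point-wise minima of finitely many monotone weak-affine operators. The right-hand side of~\eqref{eq:abs:sem:decomp:1} is the constant function $\infty$, which is monotone and weak-affine (take $a=0$, $b=\infty$), and thus fits this description as well. This proves the second assertion. The third follows because weak-affine mappings are concave and a point-wise minimum of monotone (respectively concave) mappings is again monotone (respectively concave), so each inner point-wise minimum is a monotone concave function; note that the outer maximum itself need not be concave, which is why the lemma only speaks of maxima of monotone concave functions.

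Next I would prove $\Values^\sharp_{i\cdot}[v]=\rho^*(\vd_{v,i})$ by retracing Subsection~\ref{ss:eqs}. Starting from the defining constraint system~\eqref{eq:sem:constraints} for $\Values^\sharp$, observe that $\alpha(\R^n)=\underline\infty$, since each row of $T$ has a nonzero entry and hence $\gamma(d)=\R^n$ forces $d=\underline\infty$ (this also makes the constraints for edges entering $\start$ vacuous, which is why $\E(G,T)$ may omit them). Splitting each $\CR^m$-valued variable $\VALUES^\sharp[v]$ into the scalar variables $\vd_{v,1},\ldots,\vd_{v,m}$ and using that $\CR^m$ carries the componentwise order, a variable assignment solves the resulting system of scalar inequalities exactly when the associated tuples solve~\eqref{eq:sem:constraints}, and the least solutions correspond componentwise; so it suffices to show that the least solution of the scalar inequality system equals $\mu\sem{\E(G,T)}$. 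A variable assignment $\rho$ satisfies all inequalities $\vd_{v,i}\geq\sem{s}^\sharp_{i\cdot}(\vd_{u,1},\ldots,\vd_{u,m})$ for $(u,s,v)\in E$ iff $\rho(\vd_{v,i})\geq\max\{\sem{s}^\sharp_{i\cdot}(\vd_{u,1},\ldots,\vd_{u,m})\mid(u,s,v)\in E\}$, i.e.\ iff $\rho$ is a post-solution of $\E(G,T)$. Since the best abstract transformers $\sem{s}^\sharp$ of a Galois connection are monotone, $\sem{\E(G,T)}$ is monotone, so by Theorem~\ref{th:tarski} its least post-solution coincides with its least solution $\rho^*=\mu\sem{\E(G,T)}$, which yields $\rho^*(\vd_{v,i})=\Values^\sharp_{i\cdot}[v]$ for all $v\in N$ and all $i$.

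The only points that require slight care, rather than being purely routine, are that the min-max-of-weak-affine shape of $\sem{s}^\sharp_{i\cdot}$ must be argued for non-merge-simple $s$ as well (this is exactly where the conceptual, never-computed rewriting into merge-simple form is used), and that the constant $\infty$ at the start node has to be recognized as a monotone weak-affine operator so that~\eqref{eq:abs:sem:decomp:1} also falls under the uniform description; everything else is immediate from Theorem~\ref{th:tarski} and Lemmas~\ref{l:sequential:poly} and~\ref{l:merge-simple:poly}.
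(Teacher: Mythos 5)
Your proposal is correct and follows essentially the same route as the paper, which states this lemma without a separate proof because the argument is already contained in the surrounding text: the conceptual merge-simple expansion plus Lemmas~\ref{l:sequential:poly} and~\ref{l:merge-simple:poly} for the shape of the right-hand sides, and the componentwise decomposition of~\eqref{eq:sem:constraints} together with Knaster/Tarski (Theorem~\ref{th:tarski}) for the identification of the least solution with $\Values^\sharp$. Your write-up merely makes explicit the minor points the paper leaves implicit (e.g., $\alpha(\R^n)=\underline\infty$ and the constant $\infty$ being monotone and weak-affine), and does so correctly.
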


\begin{figure}
  \centering
  \begin{tabular}{@{}cc@{}}
    $
    \begin{array}{@{}r@{\;}l@{}}
    G &= (N,E,\start) 
    \\[3pt]
    N &= \{ \start, 1 \}
    \\[3pt]
    E &= \{ (\start, \vx_1' = 0, 1), (1, s, 1) \}
    \\[3pt]
    s      &\equiv \Phi \wedge (\Phi_1 \vee \Phi_2 )
    \\[3pt]    
    \Phi     &\equiv  \vx_1 \leq 1000 \wedge \vx_2' = -\vx_1  
    \\[3pt]
    \Phi_1 &\equiv \vx_2' \leq -1 \wedge \vx_1' = -2 \vx_1 
    \\[3pt]
    \Phi_2 &\equiv -\vx_2' \leq 0 \wedge \vx_1' = - \vx_1 + 1 
    \end{array}
    $
    &
      \begin{tabular}[c]{c}
        \qquad
	\scalebox{1}{
	\begin{tikzpicture}
		 \node (start) [point] {$\start$};
		 \node (n1) [below of = start,point,yshift=1mm]{$1$};
		 \node (n2) [coordinate,below of = n1,yshift=0mm]{$2$};
		 \node (n3) [coordinate,left of = n2]{$2$};
		 \node (n4) [coordinate,left of = n1]{$2$};
		 \path[->] (start) edge [] node [right] {$\vx_1' = 0$} (n1);
		 \path[-] (n1) edge [] node [right] 
		   {$s$} 
		   (n2);
		 \path[-] (n2) edge [] node [right] {} (n3);
		 \path[-] (n3) edge [] node [right] {} (n4);
		 \path[->] (n4) edge [] node [right] {} (n1);
	\end{tikzpicture}
	}
      \end{tabular}
  \end{tabular}
  
  \smallskip
  (a) The program $G$ % and its control-flow graph
  
  \begin{align*}
    T = \begin{pmatrix} 1 & 0 \\ -1 & 0 \end{pmatrix}
  \end{align*}
  
  \smallskip
  (b) The template constraint matrix $T$\\
(only $x_1$ is taken into account in the template, thus the zero right column)

  \begin{align*}
    \vd_{\start,1} &= \infty &
    \vd_{1,1} &= 
    \max\,\left\{ 
      \sem{\vx_1' = 0}^\sharp_{1\cdot} (\vd_{\start,1},\vd_{\start,2}) ,\; 
      \sem{s}^\sharp_{1\cdot} (\vd_{1,1},\vd_{1,2}) 
    \right\}
    \\
    \vd_{\start,2} &= \infty &
    \vd_{1,2} &= \max\,\left\{ 
      \sem{\vx_1' = 0}^\sharp_{2\cdot} (\vd_{\start,1},\vd_{\start,2}) ,\; 
      \sem{s}^\sharp_{2\cdot} (\vd_{1,1},\vd_{1,2})
    \right\} 
  \end{align*}
  
  \smallskip
  (c) The equation system $\E(G,T)$

  \caption{The running example}
  \label{fig:run:ex}
  \bigskip
\end{figure}

\begin{exas}
  \label{ex:running:1}
  We again consider our running example 
  specified in Figure~\ref{fig:run:ex}(a).
  We want to perform the analysis w.r.t.\ the template constraint matrix $T$ specified 
  in Figure~\ref{fig:run:ex}(b).
  The resulting equation system $\E(G,T)$ 
  is shown in Figure~\ref{fig:run:ex}(c).

%     As we will see in Section~\ref{s:calc_on_exmpl},
    The least solution $\rho^* := \mu\sem{\E(G,T)}$ of $\E(G,T)$ is given by
    $ %\begin{align}
      \rho^*
      = 
      \{ \vd_{\start,1} \mapsto \infty, \vd_{\start,2} \mapsto \infty, \vd_{1,1} \mapsto 2001, \vd_{1,2} \mapsto 2000 \}
    $. %\end{align}
	Thus, by Lemma~\ref{l:eqs:abs_sem}, 
	% finally gives us
	$\Values^\sharp[\start] = (\infty,\infty)$, 
	and
	$\Values^\sharp[1] = (2001,2000)$.
	In consequence,
	all possible values of the program variable $\vx_1$ at program point $1$ 
	are in the interval $[-2000,2001]$.
  \qed
\end{exas}

%%% Local variables:
%%% mode: latex
%%% tex-main-file: "main"
%%% TeX-master: "main"
%%% ispell-local-dictionary: "english"
%%% end:

%Ex

\subsection{Adapting the Max-Strategy Improvement Algorithm}

Following the lines of \citet{arXiv_report},
our starting point is a system $\E$ of monotone fixpoint equations of the form
$\vx = \max \,  \Sigma_\vx$,
where $\vx$ is a $\CR$-valued variable, 
and $\Sigma_{\vx}$ is a finite set of monotone and concave expressions over $\CR$.
An expression $e$ is called monotone (resp.\ concave) if and only if
$\sem{e}$ is monotone (resp.\ concave).%
\footnote{%
  For a precise definition of concavity for functions from the set $\CR^n\to\CR^m$,
  we refer to \citet{techrep_conc}.
  For this article, 
  however,
  a precise treatment of these issues is not required.
  We just mention concavity to give a better intuition.
}
We treat a function from the finite set $\vX$ 
of variables to $\CR$ as a vector of $|\vX|$ elements from $\CR$.
%In this article, 
%we will assume that the sets $\Sigma_{\vx_1},\ldots,\Sigma_{\vx_n}$ are finite.
%Therefore,
%\eqref{eq:sup:eqs:sys} can be written as 
%%
%\begin{align}
%  \vx_1 &= \max \, \Sigma_{\vx_1}  &
%        &\cdots &
%  \vx_n &= \max \, \Sigma_{\vx_n} 
%  .
%\end{align}
%
%\noindent
In our application
---
recall that we aim at solving the equation system $\E(G,T)$
---
the sets $\Sigma_\vx$ are implicitly and succinctly given 
by the right-hand sides of equations of the forms 
\eqref{eq:abs:sem:decomp:1} and \eqref{eq:abs:sem:decomp:2}.
Indeed, every expression of the form
$\sem{s}^\sharp_{i\cdot}(\vd_{u,1},\ldots,\vd_{u,m})$,
found on the right-hand side of such equations,
can be equivalently rewritten into  
$ %\begin{align}
  \max\,\{
    \sem{s_1}^\sharp_{i\cdot}(\vd_{u,1},\ldots,\vd_{u,m}),
    \allowbreak
    \ldots, \allowbreak
    \sem{s_k}^\sharp_{i\cdot}(\vd_{u,1},\ldots,\vd_{u,m})    
  \},
$ %\end{align}
%
% \noindent
where $s_1,\ldots,s_k$ are (potentially exponentially many) sequential statements.
Since $s_1,\ldots,s_k$ are sequential,
the operators  
$\sem{s_1}^\sharp_{i\cdot}, \allowbreak \ldots, \allowbreak \sem{s_k}^\sharp_{i\cdot}$ 
are point-wise minima of finitely many monotone and weak-affine operators; 
hence they are monotone and concave operators.
% However, the $k$ can --- in the worst case --- be exponential in the size of $s$.
% Therefore, we do not construct the sets $\Sigma_{\vx_1},\ldots,\Sigma_{\vx_n}$ explicitly.

One obvious way to solve the system $\E$ of equations is
to perform the above mentioned rewriting explicitly and then apply the 
max-strategy improvement algorithm.
To avoid this impractical exponential blowup,
in what follows we modify the algorithm such that it directly works on the succinct representation.

% \clearpage

% \paragraph{Max-Strategies}

Assume that $\E$ denotes a system of fixpoint equations of the form $\vx = \max\,\Sigma_\vx$,
where $\Sigma_\vx$ is a finite set of monotone and concave expressions over $\CR$.
A \emph{max-strategy} $\sigma$ for $\E$ 
is a system of equations such that, for each equation $\vx = e$ of $\sigma$, 
one of the following statements holds:
\begin{enumerate}[(1)]
  \item
    $e$ is $\neginfty$.
  \item
    $e \in \Sigma_\vx$,
    where 
    $\vx = \max \, \Sigma_\vx$
    is an equation of $\E$.
\end{enumerate}

\noindent
Intuitively, a max-strategy picks for each maximum operator one of its operands.
For a system $\E$ of equations,
we denote the set of all max-strategies by $\Sigma_\E$.
In our application,
the cardinality of $\Sigma_\E$ is exponential in the size of $\E$.
To be more precise,
it is in $\mathcal O (2^{n^2})$, 
where $n$ denotes the size of $\E$.
% Therefore, 
Enumerating all max-strategies is therefore impractical.

\begin{exas}
  \label{ex:running:2}
  We continue our running example (Figure~\ref{fig:run:ex}).
  Consider the system  $\E(G,T)$ and note that 
  $s \equiv \Phi \wedge (\Phi_1 \vee \Phi_2) \equiv (\Phi \wedge \Phi_1) \vee (\Phi \wedge \Phi_2)$;
  therefore the equation
  \begin{align}
   \vd_{1,1} 
   = 
    \max\,\left\{ 
      \sem{\vx_1' = 0}^\sharp_{1\cdot} (\vd_{\start,1},\vd_{\start,2}) ,\; 
      \sem{s}^\sharp_{1\cdot} (\vd_{1,1},\vd_{1,2}) 
    \right\}
  \end{align}
  
  \noindent
  can be equivalently rewritten into
  \begin{align}
   \vd_{1,1} 
   = 
    \max\,\left\{ 
      \sem{\vx_1' = 0}^\sharp_{1\cdot} (\vd_{\start,1},\vd_{\start,2}) ,\; 
      \sem{\Phi \wedge \Phi_1}^\sharp_{1\cdot} (\vd_{1,1},\vd_{1,2}) ,\;
      \sem{\Phi \wedge \Phi_2}^\sharp_{1\cdot} (\vd_{1,1},\vd_{1,2}) 
    \right\}
    .
  \end{align}
  \noindent Recall that this expansion is solely for the purpose of proving properties: it is not done in the algorithm.
  The equation system $\sigma$ consisting of the equations
  \begin{align}
    \vd_{\start,1} &= \infty &
    \vd_{1,1} &= \sem{\Phi \wedge \Phi_2}^\sharp_{1\cdot} (\vd_{1,1},\vd_{1,2}) &
    \vd_{\start,2} &= \infty &
    \vd_{1,2} 
      &= \sem{\vx_1' = 0}^\sharp_{2\cdot} (\vd_{\start,1},\vd_{\start,2}) 
  \end{align}
  \noindent 
  is thus a max-strategy for this system.
  \qed
\end{exas}

% \paragraph{Improvements}

\noindent
A crucial notion we need in the following is the notion of improvements.
%Let $\E$ denote a system of fixpoint equations of the form $\vx = \max\,\Sigma_\vx$.
Let $\sigma$ be a max-strategy for $\E$ and $\rho$ a pre-solution of $\sigma$.
A max-strategy $\sigma'$ for $\E$ is called an 
\emph{improvement of $\sigma$ w.r.t.\ $\rho$}
if and only if 
the following conditions are fulfilled:
\begin{enumerate}[(1)]
      \item
			If $\rho \neq \sem\E(\rho)$,
			then $\sem{\sigma'}(\rho) > \rho$.
      \item
%			For all expressions 
%			$e = e_1 \vee \cdots \vee e_k$ of $\E$
%			the following holds:
			If $\vx = e$ is an equation of $\sigma$
			and
			$\vx = e'$ is an equation of $\sigma'$
			with $e \neq e'$,
			then
			$\sem{e'}(\rho) > \sem{e}(\rho)$.
\end{enumerate}

\begin{exa}
  We continue our running example (Figure~\ref{fig:run:ex}).
  We consider the equation system $\sigma'$ that consists of the following equations:
  \begin{align}
    \vd_{\start,1} &= \infty &
    \vd_{1,1} &= \sem{\Phi \wedge \Phi_2}^\sharp_{1\cdot} (\vd_{1,1},\vd_{1,2}) &
    \vd_{\start,2} &= \infty &
    \vd_{1,2} 
      &= \sem{\Phi \wedge \Phi_1}^\sharp_{2\cdot} (\vd_{1,1},\vd_{1,2})     
  \end{align}
  
  \noindent
  The equation system $\sigma'$ is a max-strategy of $\E(G,T)$ and moreover
  an improvement of the max-strategy $\sigma$ (defined in Example~\ref{ex:running:2})
  w.r.t.\ the variable assignment
   \begin{align}
     \rho
    := 
    \{ \vd_{\start,1} \mapsto \infty, \vd_{\start,2} \mapsto \infty, \vd_{1,1} \mapsto 1, \vd_{1,2} \mapsto 0 \}
    .
  \end{align}
  
  \noindent
%  To show that $\sigma'$ is an improvement of $\sigma$,
%  we have to show that 
%  $
%    \sem{\sem{\Phi \wedge \Phi_1}^\sharp_{2\cdot} (\vd_{1,1},\vd_{1,2})} \rho
%    >
%    \sem{\sem{\vx_1' = 0}^\sharp_{2\cdot}(\vd_{1,1}, \vd_{1,2})} \rho
%  $ 
%  holds.
%  For that, 
%  we perform the following small calculation:
%  \begin{align}
%    \sem{\sem{\Phi \wedge \Phi_1}^\sharp_{2\cdot} (\vd_{1,1},\vd_{1,2})} \rho
%    &=
%    \sem{\Phi \wedge \Phi_1}^\sharp_{2\cdot} (1, 0)
%    \\
%    &= 
%    \sup \, \left\{ -x_1' \mid 0 \leq x_1 \leq 1
%      , 
%      x_2' = -x_1 
%      ,
%      x_2' \geq 0 
%      ,
%      x_1' = -x_1 + 1
%    \right\}
%    \\
%    &=
%    1
%    \\
%    &>
%    0
%    \\
%    &=
%    \sup \, 
%    \left\{ 
%      -x_1' \mid 0 \leq x_1 \leq 1 , x_1' = 0
%    \right\}
%    \\
%    &=
%    \sem{\vx_1' = 0}^\sharp_{2\cdot}(1,0)    
%    \\
%    &=
%    \sem{\sem{\vx_1' = 0}^\sharp_{2\cdot}(\vd_{1,1}, \vd_{1,2})} \rho
%  \end{align}
%  
%  \noindent
%  This proves that $\sigma'$ is an improvement of $\sigma$ w.r.t.\ $\rho$.
  It is an improvement,
  since  
  $
    \sem{\sem{\Phi \wedge \Phi_1}^\sharp_{2\cdot} (\vd_{1,1},\vd_{1,2})} (\rho)
    =
    1
    >
    0
    =
    \sem{\sem{\vx_1' = 0}^\sharp_{2\cdot}(\vd_{1,1}, \vd_{1,2})} (\rho)
  $. 
  In this example, 
  $\sigma'$ is the only improvement of $\sigma$ w.r.t.\ $\rho$.
  \qed
\end{exa}

\noindent
Note that, for a max-strategy $\sigma$ and a pre-solution $\rho$ of $\sigma$, 
there might be several max-strategies $\sigma'$ that are improvements of $\sigma$ w.r.t.\ $\rho$.
Consider, for instance, the equation system 
$\E = \{ \vx = \max \, \{ 0 ,\, 1 ,\, 2 \} \}$.
Both, 
the max-strategies $\{ \vx = 1 \}$ and $\{ \vx = 2 \}$ are improvements 
of the max-strategy $\{ \vx = 0 \}$.
For the results we are going to develop in this article,
it is not important which improvement we choose: this will neither affect
the final result obtained, nor change the worst-case complexity bounds that
we prove. It is however possible that different heuristics may
lead to different practical complexities.

% \subsubsection{The algorithm}

% Let $\E$ denote a system of equations of the form $\vx = \max\,\Sigma_\vx$.
The max-strategy improvement algorithm starts with
the max-strategy 
$\sigma_0 := \{ \vx = \neginfty \mid \vx \in \vX \}$ 
and the variable assignment 
$\rho_0 := \{ \vx \mapsto \neginfty \mid \vx \in \vX \}$.
The algorithm successively performs the following two 
steps in the given order until it has found the least solution:
  \begin{enumerate}[(1)]
    \item
      Improve the max-strategy $\sigma$ w.r.t.\ $\rho$.
    \item
      Evaluate the max-strategy $\sigma$ w.r.t.\ $\rho$ to obtain a new value for $\rho$. 
  \end{enumerate}

\noindent
In pseudo-code, we can formulate it as follows:

\begin{algorithm}[H]
    $
    \begin{array}{rl}
    1: & \sigma \GETS \sigma_{0} ; \\
    2: &\rho \GETS \rho_{0} ; \\
    \VS
	3: &\WHILE (\rho < \sem\E(\rho) ) \;
	\{ \\
		4: &\hspace*{0.5cm} \sigma \GETS \text{improvement of $\sigma$ w.r.t.\ $\rho$}; \\ 
		5: &\hspace*{0.5cm}\rho \GETS \mu_{\geq \rho} \sem{\sigma} ; \\
	6: &\} \\
	\VS
	7: &\RETURN \rho;
	\\
     \end{array}
     $\hfill
	\caption{The Max-Strategy Improvement Algorithm}
	\label{alg:alg:stratimp}
\end{algorithm}

  \noindent
    For all $i \in \N$, 
    let
    $\rho_i$ be the value of the  variable $\rho$ 
    and
    $\sigma_i$ be the value of the variable $\sigma$ 
%    in the max-strategy improvement algorithm  
%    (Algorithm~\ref{alg:alg:stratimp})
    after the $i$-th evaluation of the loop-body.
  %
%  A straight-forward induction can be used to prove the following lemma:
We have:
% TODO David->Thomas: if it is straight-forward, how come we do not
% have a proof here and instead refer to a tech report?
% I agree that if we use \mu_{\geq \rho}[[\sigma]] then the result is
% easy (as opposed to directly going to greatest fixpoints).
\begin{lem}[\cite{techrep_conc}, {\cite[Lem.~6.7]{DBLP:journals/toplas/GawlitzaS11}}]
  \label{l:alg:sequence}
	The following statements hold for all $i \in \N$:
	\begin{enumerate}[\em(1)]
	  \item
			$\rho_i \leq \mu\sem\E$.
      \item
			$\rho_i \leq \sem{\sigma_{i+1}} (\rho_i)$.
      \item
			If $\rho_i < \mu\sem\E$, then $\rho_{i+1} > \rho_i$.
      \item
			If $\rho_i = \mu\sem\E$, then $\rho_{i+1} = \rho_i$.
			\qed
	\end{enumerate}
\end{lem}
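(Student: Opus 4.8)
The plan is to prove all four statements simultaneously by induction on $i$, exploiting that they reinforce one another. Before the induction I would record three elementary facts. First, for every max-strategy $\sigma$ and every variable assignment $\rho$ one has $\sem{\sigma}(\rho)\leq\sem{\E}(\rho)$ pointwise, since each right-hand side of $\sigma$ is $\neginfty$ or one of the operands of the matching $\max$ in $\E$. Second, $\sem{\sigma}$ is monotone. Third — and this is the observation doing the real work — whenever $\rho$ is simultaneously a solution of $\sigma$ and of $\E$, every equation $\vx=e$ of $\sigma$ already attains the maximum, i.e.\ $\sem{e}(\rho)=\sem{\E}(\rho)(\vx)$, so by the second condition in the definition of an improvement no improvement of $\sigma$ w.r.t.\ $\rho$ can differ from $\sigma$. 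I would also point out that statement~(2) at stage $i$, namely that $\rho_i$ is a pre-solution of $\sigma_{i+1}$, is precisely what makes line~5 of Algorithm~\ref{alg:alg:stratimp} well-defined: $\sem{\sigma_{i+1}}$ is then an endomap of the complete lattice $\{\rho\mid\rho\geq\rho_i\}$, so by Theorem~\ref{th:tarski} the value $\rho_{i+1}=\mu_{\geq\rho_i}\sem{\sigma_{i+1}}$ exists, is a fixpoint of $\sem{\sigma_{i+1}}$, and is the least post-solution of $\sigma_{i+1}$ lying above $\rho_i$.

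For the inductive step, assuming (1)--(4) at stage $i$, I would argue as follows. For statement~(1) at $i+1$: by (1) at stage $i$, $\mu\sem{\E}\geq\rho_i$, and since $\sem{\sigma_{i+1}}(\mu\sem{\E})\leq\sem{\E}(\mu\sem{\E})=\mu\sem{\E}$, the value $\mu\sem{\E}$ is a post-solution of $\sigma_{i+1}$ above $\rho_i$, hence $\rho_{i+1}\leq\mu\sem{\E}$. For statement~(2) at $i+1$: let $\sigma_{i+2}$ be an improvement of $\sigma_{i+1}$ w.r.t.\ $\rho_{i+1}$; if $\rho_{i+1}$ is a solution of $\E$, then $\rho_{i+1}$ solves $\sigma_{i+1}$ as well (by construction), so the third fact gives $\sigma_{i+2}=\sigma_{i+1}$ and $\sem{\sigma_{i+2}}(\rho_{i+1})=\rho_{i+1}$; otherwise $\rho_{i+1}\neq\sem{\E}(\rho_{i+1})$ and the first condition of an improvement gives $\sem{\sigma_{i+2}}(\rho_{i+1})>\rho_{i+1}$; in both cases $\rho_{i+1}\leq\sem{\sigma_{i+2}}(\rho_{i+1})$.

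For statement~(3) at $i+1$: if $\rho_{i+1}<\mu\sem{\E}$, then $\rho_{i+1}$ is not a fixpoint of $\sem{\E}$ (every fixpoint dominates $\mu\sem{\E}$), so the first condition of an improvement yields $\sem{\sigma_{i+2}}(\rho_{i+1})>\rho_{i+1}$; since $\rho_{i+2}$ is a fixpoint of $\sem{\sigma_{i+2}}$ with $\rho_{i+2}\geq\rho_{i+1}$, monotonicity gives $\rho_{i+2}=\sem{\sigma_{i+2}}(\rho_{i+2})\geq\sem{\sigma_{i+2}}(\rho_{i+1})>\rho_{i+1}$. For statement~(4) at $i+1$: if $\rho_{i+1}=\mu\sem{\E}$, then $\rho_{i+1}$ solves both $\sigma_{i+1}$ and $\E$, so $\sigma_{i+2}=\sigma_{i+1}$ by the third fact, $\rho_{i+1}$ is a solution of $\sigma_{i+2}$ above itself, and therefore $\rho_{i+2}=\mu_{\geq\rho_{i+1}}\sem{\sigma_{i+2}}=\rho_{i+1}$. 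The base case $i=0$ is essentially free: $\rho_0=\underline{\neginfty}$ and $\sigma_0=\{\vx=\neginfty\mid\vx\in\vX\}$ make (1) and (2) trivial, and the arguments just given for (3) and (4) go through verbatim at stage $0$.

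I expect the main obstacle to be organisational rather than computational: one must keep the two regimes ``$\rho$ is a fixpoint of $\sem{\E}$'' and ``$\rho$ is not'' cleanly separated throughout, because the first condition of the improvement definition only forces a strict increase in the second regime, whereas the stabilisation parts (2) and (4) live entirely in the first; and one must check at each stage that ``improvement of $\sigma_{i+1}$ w.r.t.\ $\rho_{i+1}$'' is a legitimate notion, which reduces to $\rho_{i+1}$ being a pre-solution of $\sigma_{i+1}$ — automatic here since $\rho_{i+1}$ is a genuine fixpoint of $\sem{\sigma_{i+1}}$. Everything else is monotonicity together with the Knaster/Tarski characterisation of least solutions above a pre-solution.
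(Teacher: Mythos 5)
Your proof is correct. Note that the paper itself does not prove Lemma~\ref{l:alg:sequence} --- it is imported verbatim from the cited references \cite{techrep_conc} and \cite[Lem.~6.7]{DBLP:journals/toplas/GawlitzaS11} --- so there is no in-paper argument to compare against; your simultaneous induction, resting on $\sem{\sigma}\leq\sem{\E}$ pointwise, monotonicity, the Knaster/Tarski characterisation of $\mu_{\geq\rho}\sem{\sigma}$, and the observation that a solution of both $\sigma$ and $\E$ admits no proper improvement, is exactly the standard strategy-iteration argument those sources give and derives everything needed from the definitions as stated in this paper.
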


\noindent
The above lemma implies that 
the algorithm returns the least solution,
whenever it terminates.
Whether or not it terminates
depends on the properties of the class of fixpoint equation systems under consideration.
In our application,
we aim at computing the least solution of the equation system 
$\E(G,T)$ (see Subsection~\ref{ss:eqs}).
By Lemma~\ref{l:eqs:abs_sem},
the right-hand sides of $\E(G,T)$ are point-wise maxima of finitely many monotone and concave functions.
More specifically, 
the right-hand sides are point-wise maxima of finitely many point-wise minima of finitely many weak-affine operators.
This property guaranties the termination of the 
max-strategy improvement algorithm \cite{techrep_conc}\cite[§6.1]{DBLP:journals/toplas/GawlitzaS11}.
At the latest, 
it terminates after considering each max-strategy at most linearly often (see Lemma \ref{l:term:after:lin}).
Before we explain the remaining building blocks,
i.e., how to execute program lines 4 and 5,
we consider an example.

\begin{exa}
  \label{ex:run:ex:lag:run}
  We consider our running example.
  That is,
  we aim at computing the least solution of the equation system 
  $\E(G,T)$ shown in Figure~\ref{fig:run:ex}.
  Running the algorithm can, for instance, give us the following trace:
  \begin{align}
    \sigma_0 
    &:= 
    \{ 
      \vd_{\start,1} = \neginfty, \,
      \vd_{\start,2} = \neginfty, \,
      \vd_{1,1} = \neginfty, \,
      \vd_{1,2} = \neginfty 
    \}
    \\
    \rho_0 
    &:= 
    \{ 
      \vd_{\start,1} \mapsto \neginfty, \,
      \vd_{\start,2} \mapsto \neginfty, \,
      \vd_{1,1} \mapsto \neginfty, \,
      \vd_{1,2} \mapsto \neginfty 
    \}
    \\[4pt]
    \sigma_1 
    &:= 
    \{ 
      \vd_{\start,1} = \infty, \,
      \vd_{\start,2} = \infty, \,
      \vd_{1,1} = \neginfty, \,
      \vd_{1,2} = \neginfty
    \}
    \\
    \rho_1 
    &:= 
    \{ 
      \vd_{\start,1} \mapsto \infty, \,
      \vd_{\start,2} \mapsto \infty, \,
      \vd_{1,1} \mapsto \neginfty, \,
      \vd_{1,2} \mapsto \neginfty
    \}
    \\[4pt]
    \sigma_2
    &:= 
    \{ 
      \vd_{\start,1} = \infty, \,
      \vd_{\start,2} = \infty, \,
      \vd_{1,1} = \sem{\vx_1' = 0}^\sharp_{1\cdot} (\vd_{\start,1}, \vd_{\start,2}), \,
      \\&\qquad
      \vd_{1,2} = \sem{\vx_1' = 0}^\sharp_{2\cdot} (\vd_{\start,1}, \vd_{\start,2})
    \}
    \\
    \rho_2
    &:= 
    \{ 
      \vd_{\start,1} \mapsto \infty, \,
      \vd_{\start,2} \mapsto \infty, \,
      \vd_{1,1} \mapsto 0, \,
      \vd_{1,2} \mapsto 0
    \}
    \\[4pt]
    \sigma_3
    &:= 
    \{ 
      \vd_{\start,1} = \infty, \,
      \vd_{\start,2} = \infty, \,
      \vd_{1,1} = \sem{\Phi \wedge \Phi_2}^\sharp_{1\cdot} (\vd_{1,1},\vd_{1,2}), \,
      \\&\qquad
      \vd_{1,2} = \sem{\vx_1' = 0}^\sharp_{2\cdot} (\vd_{\start,1},\vd_{\start,2}) 
    \}
    \\
    \rho_3
    &:= 
    \{ 
      \vd_{\start,1} \mapsto \infty, \,
      \vd_{\start,2} \mapsto \infty, \,
      \vd_{1,1} \mapsto 1, \,
      \vd_{1,2} \mapsto 0
    \}
    \\[4pt]
    \sigma_4
    &:= 
    \{ 
      \vd_{\start,1} = \infty, \,
      \vd_{\start,2} = \infty \,
      \vd_{1,1} = \sem{\Phi \wedge \Phi_2}^\sharp_{1\cdot} (\vd_{1,1},\vd_{1,2}), \,
      \\&\qquad
      \vd_{1,2} = \sem{\Phi \wedge \Phi_1}^\sharp_{2\cdot} (\vd_{1,1},\vd_{1,2}) 
    \}
    \\
    \rho_4
    &:= 
    \{ 
      \vd_{\start,1} \mapsto \infty, \,
      \vd_{\start,2} \mapsto \infty, \,
      \vd_{1,1} \mapsto 2001, \,
      \vd_{1,2} \mapsto 2000
    \}
  \end{align}
  
  \noindent
  Here, for all $i$, $\rho_{i+1} = \mu_{\geq \rho_{i}} \sem{\sigma_{i+1}}$
  and $\sigma_{i+1}$ is an improvement of $\sigma_i$ w.r.t.\ $\rho_i$.
  % How to perform the above steps algorithmically will be discussed in the following two subsections.
%
The variable $\rho_4$ is a solution of $\E(G,T)$.
%By solving the corresponding SAT modulo linear real arithmetic formulas,
%we notice, that the strategy $\sigma_4$ cannot be improved with respect to $\rho_4$.
% That is,
The max-strategy improvement algorithm terminates 
with the correct least solution, which is $\rho_4$.
%Using Lemma~\ref{l:eqs:abs_sem},
%we finally get that the abstract semantics $\Values^\sharp$ of $G$ is given by:
%\begin{align*}
%  \Values^\sharp[\start] = (\infty,\infty)
%  \qquad
%  \Values^\sharp[1] = (2001,2000)
%\end{align*}
  \qed
\end{exa}

\noindent
We now present methods to evaluate max-strategies 
(Line 5 of Algorithm~\ref{alg:alg:stratimp})
and 
to improve max-strategies 
(Line 4 of Algorithm~\ref{alg:alg:stratimp}).
% For the latter problem it is important to avoid 
% to evaluate every expressions $e \in \Sigma_{\vx}$ of a right-hand side of the form
% $\max\,\Sigma_\vx$.
% We discuss these two issues in the following two subsections.

\subsection{Evaluating Max-Strategies}
\label{ss:veal:max:strat}

% In the following,
We restrict our consideration to our application.
That is, 
we assume that 
the equation system $\E$ is given by
$\E = \E(G,T)$ for some program $G$ and some template constraint matrix $T$. 
%is a system of equations of the form $\vx = e$,
%where the right-hand sides are point-wise maxima of finitely many minima of finitely many weak-affine operators.
For all $i \in \N$,
this allows us to compute $\rho_i$ 
as follows:

% TODO!!! TODO!!! David->Thomas
% This is THE BIG IMPORTANT LEMMA that explains why these values are
% computable. We just cannot leave it to some unpublished tech report.

\begin{lem}[\cite{techrep_conc},\cite{DBLP:journals/toplas/GawlitzaS11}]
  \label{l:how:to:eval:strats}
  Let $i \in \N$.
  Recall that, by construction, $\rho_{i+1} = \mu_{\geq\rho_i}\sem{\sigma_{i+1}}$.
  The variable assignment $\rho_{i+1}$ can be computed as follows:
  Let $\E'$ denote the system of equations that is
  obtained from the equation system $\sigma_{i+1}$ by performing the following steps:
  \begin{enumerate}[\em(1)]
    \item 
      Remove every equation $\vx = e$, 
      where $\sem e (\rho_i) = \neginfty$
      and replace then the remaining occurrences of $\vx$ with the constant $\neginfty$.
    \item 
      Remove every equation $\vx = e$, 
      where $\sem e (\rho_i) = \infty$
      and replace then the remaining occurrences of $\vx$ with the constant $\infty$.
  \end{enumerate}
  
  \noindent
  % Then, 
  For all equations $\vx = e$ of the equation system $\sigma_{i+1}$ 
  with $\neginfty < \sem{e} (\rho_i) < \infty$,
  we can compute $\rho_{i+1}(\vx)$ as follows:
  \begin{align}
    \label{eq:l:eindeutige:loesung:5}
    \rho_{i+1}(\vx)
    =
    \sup\; \{ \rho(\vx) \mid \rho:\vX_{\E'} \to\R ,\; \rho \leq \sem{\E'}(\rho)\}
  \end{align}

  \noindent
  The value $\rho_{i+1}$ only depends on the equation system $\sigma_{i+1}$ and the 
  set of variables already identified to be $\infty$,
  namely, 
  $\{ \vx \mid \vx = e \text{ is an equation of } \sigma_{i+1} \text{ with } \sem e (\rho_{i}) = \infty \}$.
  \qed  
\end{lem}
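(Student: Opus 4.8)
The plan is to read this as the specialization to $\E = \E(G,T)$ of the strategy-evaluation lemma underlying the max-strategy improvement algorithm of \citet{DBLP:journals/toplas/GawlitzaS11} (see also \cite{techrep_conc}); so the proof splits into verifying that the hypotheses needed there hold in our setting and then carrying over the three-case analysis. The hypotheses are exactly what Lemmas~\ref{l:sequential:poly} and~\ref{l:eqs:abs_sem} give us: every equation of $\sigma_{i+1}$ has a right-hand side that is either one of the constants $\neginfty,\infty$ or an operator $\sem{s}^\sharp_{j\cdot}(\vd_{u,\cdot})$ with $s$ sequential, hence monotone, concave, and a point-wise minimum of finitely many monotone weak-affine operators. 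First I would record well-definedness: by part~(2) of Lemma~\ref{l:alg:sequence}, $\rho_i$ is a pre-solution of $\sigma_{i+1}$, so $\{\rho' \mid \rho' \geq \rho_i\}$ is a complete sub-lattice on which $\sem{\sigma_{i+1}}$ is monotone, and Theorem~\ref{th:tarski} yields $\rho_{i+1} = \mu_{\geq \rho_i}\sem{\sigma_{i+1}}$.

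The case $\sem{e}(\rho_i) = \infty$ is a one-line monotonicity argument: $\rho_{i+1}$ is a fixpoint of $\sigma_{i+1}$ and $\rho_{i+1} \geq \rho_i$, so $\rho_{i+1}(\vx) = \sem{e}(\rho_{i+1}) \geq \sem{e}(\rho_i) = \infty$. For the case $\sem{e}(\rho_i) = \neginfty$ I would first use the improvement property: writing $\vx = \tilde e$ for the equation of $\vx$ in $\sigma_i$ and using that $\rho_i = \sem{\sigma_i}(\rho_i)$ (so $\sem{\tilde e}(\rho_i) = \rho_i(\vx)$), condition~(2) of the definition of an improvement forces $e = \tilde e$ and $\rho_i(\vx) = \sem{e}(\rho_i) = \neginfty$ — any genuine change would have had to evaluate strictly above $\neginfty$ at $\rho_i$. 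The crux is then that such variables remain $\neginfty$ under $\rho_{i+1}$; I would establish this as in \cite{techrep_conc, DBLP:journals/toplas/GawlitzaS11} by showing that the set of these variables is closed in the dependency graph of $\sigma_{i+1}$ — an operand $\sem{s}^\sharp_{j\cdot}(\vd_{u,\cdot})$ can climb above $\neginfty$ only once $\gamma$ of the relevant $\vd_{u,\cdot}$-values becomes non-empty, and, because improvements are computed with respect to $\rho_i$ and these operators are concave and piecewise weak-affine, this propagation only ever reaches the already-finite part — and then observing that substituting $\neginfty$ on this closed set turns $\rho_{i+1}$ into a post-solution of $\sigma_{i+1}$ that is still $\geq \rho_i$, so minimality pins these variables to $\neginfty$.

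For the remaining variables, those with $\neginfty < \sem{e}(\rho_i) < \infty$, I would restrict to the subsystem $\E'$ produced by the two substitution steps and argue that $\rho_{i+1}$ restricted to $\vX_{\E'}$ is simultaneously the least solution of $\E'$ above $\rho_i$ and the componentwise supremum of all real-valued pre-solutions of $\E'$. Two ingredients: (i) by Lemma~\ref{l:sequential:poly} each defining inequality $\rho(\vx) \leq \sem{e}(\rho)$, after unfolding the weak-affine decomposition $\sem{s}^\sharp_{j\cdot}(d) = \min_l (a_l^\top d + b_l)$, becomes finitely many linear inequalities, so the set of real pre-solutions is a (possibly empty, but here nonempty since $\rho_{i+1}(\vx) \geq \sem{e}(\rho_i) > \neginfty$) polyhedron whose supremum is attained and is itself a fixpoint of $\E'$; (ii) the invariant maintained by the algorithm — every operand that was ever selected strictly increased its variable at the moment of selection, which rules out reflexively damping self-loops of the form $\vx = \min(\vx, c, \dots)$ — together with concavity forces the least fixpoint above $\rho_i$ and this greatest pre-solution supremum to coincide. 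This is exactly the step that upgrades "$\mu_{\geq\rho_i}$" to the $\sup$-formula, and turning the $\sup$ into a linear program is then automatic. The final remark of the lemma drops out, since step~(2), step~(1) (an equation evaluates to $\neginfty$ precisely when, after the $\infty$-substitution, its linear constraint system is infeasible) and the linear program are all determined by $\sigma_{i+1}$ and the $\infty$-set.

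The part I expect to be the main obstacle is exactly the two ``crux'' points above: showing the $\neginfty$-valued variables stay at $\neginfty$, and showing that for the finite part the supremum over real pre-solutions equals $\mu_{\geq\rho_i}\sem{\sigma_{i+1}}$. Neither follows from Knaster/Tarski and monotonicity alone; both need the concave, point-wise-weak-affine shape of $\sem{s}^\sharp_{j\cdot}$ for sequential $s$ and the structure imposed by the improvement steps (the same structure that ultimately yields termination, cf.\ Lemma~\ref{l:term:after:lin}). This is the technical heart inherited from \cite[\S6]{DBLP:journals/toplas/GawlitzaS11} and \cite{techrep_conc}; the only genuinely new ingredient in our setting is that the operand set $\Sigma_\vx$ is represented succinctly, but this is immaterial for the present lemma, since for the purpose of the proof one may pass to the (exponentially larger) explicit merge-simple system to which those results apply verbatim.
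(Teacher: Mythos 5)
The paper does not prove this lemma at all: it is stated with a terminal \qed and imported wholesale from \cite{techrep_conc} and \cite{DBLP:journals/toplas/GawlitzaS11}, so your reconstruction cannot diverge from a proof the paper never gives, and what you sketch is indeed the standard argument from those references, with the two genuinely hard points (that the $\neginfty$-variables stay at $\neginfty$, and that for the finite part the least fixpoint above $\rho_i$ coincides with the supremum of real pre-solutions of $\E'$ --- which Knaster/Tarski alone would only identify with a \emph{greatest} fixpoint) correctly isolated and correctly attributed. One remark: your dependency-graph closure argument for the $\neginfty$ case is more elaborate than necessary, since the algorithm only ever switches an equation to a right-hand side that evaluates strictly above $\neginfty$ at the current $\rho$, so by induction and monotonicity any equation of $\sigma_{i+1}$ with $\sem{e}(\rho_i)=\neginfty$ still carries the constant $\neginfty$ from $\sigma_0$, and the claim is immediate.
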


\noindent
  In consequence,
  the max-strategy improvement algorithm has to consider 
  each max-strategy at most $\abs\vX$ times.
  Hence, we have:

\begin{lem}[\cite{techrep_conc},\cite{DBLP:journals/toplas/GawlitzaS11}]
  \label{l:term:after:lin}
  The max-strategy improvement algorithm
  terminates after at most $\abs\vX \cdot \abs{\Sigma_\E}$ max-strategy improvement steps.
  \qed
\end{lem}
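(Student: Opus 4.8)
The max-strategy improvement algorithm terminates after at most $\abs\vX \cdot \abs{\Sigma_\E}$ max-strategy improvement steps.

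The plan is to combine the monotonicity facts from Lemma~\ref{l:alg:sequence} with the ``value depends only on the strategy and its $\infty$-set'' observation from Lemma~\ref{l:how:to:eval:strats}, and then count. First I would argue that the sequence $(\rho_i)_i$ is strictly increasing until it reaches $\mu\sem\E$: this is immediate from parts~(1) and~(3) of Lemma~\ref{l:alg:sequence}, since $\rho_i \leq \mu\sem\E$ always, and $\rho_i < \mu\sem\E$ forces $\rho_{i+1} > \rho_i$, while once $\rho_i = \mu\sem\E$ we have $\rho_{i+1}=\rho_i$ and (since then $\rho_i = \sem\E(\rho_i)$) the loop guard fails and the algorithm halts. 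So it suffices to bound the number of strictly increasing steps.

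Next I would show that no max-strategy $\sigma$ can be the value of the variable $\sigma$ (i.e., equal to some $\sigma_{i+1}$) during more than $\abs\vX$ distinct improvement steps. The key is Lemma~\ref{l:how:to:eval:strats}: the value $\rho_{i+1} = \mu_{\geq\rho_i}\sem{\sigma_{i+1}}$ depends \emph{only} on $\sigma_{i+1}$ together with the set $F_{i+1} := \{\vx \mid \vx = e \text{ is an equation of } \sigma_{i+1} \text{ with } \sem e(\rho_i) = \infty\}$ of variables already identified as $\infty$. Now observe that along the run this set of $\infty$-variables can only grow: if $\vx$ has value $\infty$ under $\rho_i$, then since $\rho_{i+1} \geq \rho_i$ it still has value $\infty$ under $\rho_{i+1}$, hence (as the relevant equation's right-hand side evaluated at $\rho_{i+1}$ is $\geq \rho_{i+1}(\vx) = \infty$) it lies in $F_{i+2}$ as well; more carefully one checks $F_{i+1} \subseteq F_{i+2}$ by using $\rho_{i+1} \geq \rho_i$ and monotonicity of the right-hand sides. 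If the \emph{same} strategy $\sigma$ occurs as $\sigma_{i+1}$ and later as $\sigma_{j+1}$ with $i<j$ and with the \emph{same} associated $\infty$-set, then Lemma~\ref{l:how:to:eval:strats} gives $\rho_{i+1} = \rho_{j+1}$, contradicting strict monotonicity of $(\rho_i)$ on the range $i+1,\dots,j+1$ (which is nonempty and lies before termination). Hence each strategy, paired with a monotonically increasing subset of $\vX$, can recur at most $\abs\vX + 1$ times — and with a little care (the $\infty$-set strictly grows between two recurrences of the same strategy, so it can take at most $\abs\vX+1$ values, but the first and last steps for a strategy are accounted once) this yields the clean bound $\abs\vX$ occurrences per strategy, i.e., at most $\abs\vX \cdot \abs{\Sigma_\E}$ improvement steps in total.

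The main obstacle is making precise the claim that the $\infty$-set attached to a recurring strategy strictly increases, and turning ``$\abs\vX+1$ values of a growing subset'' into the stated factor $\abs\vX$; this is exactly the bookkeeping already done in \cite{techrep_conc} and \cite[§6.1]{DBLP:journals/toplas/GawlitzaS11}, so I would either cite it directly or reproduce the short argument: a strategy contributes a fresh $(\sigma, F)$ pair at each recurrence, the $F$-component never decreases and must change between recurrences (else the $\rho$-values repeat, contradicting strict growth), so there are at most $\abs\vX$ recurrences before $F$ saturates and the strategy cannot reappear without violating monotonicity. Everything else is a direct consequence of the already-cited Lemmas~\ref{l:alg:sequence} and~\ref{l:how:to:eval:strats}, so no new convex-optimization or SMT machinery is needed here.
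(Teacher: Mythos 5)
Your argument is correct and matches the paper's approach exactly: the paper itself offers no proof beyond the remark immediately preceding the lemma --- that by Lemma~\ref{l:how:to:eval:strats} the value $\rho_{i+1}$ depends only on $\sigma_{i+1}$ and its set of $\infty$-variables, so each max-strategy is considered at most $\abs\vX$ times --- together with the citations to \cite{techrep_conc} and \cite{DBLP:journals/toplas/GawlitzaS11}. Your expansion (strict increase of the $\rho_i$ before termination, monotone growth of the $\infty$-sets, and pigeonholing on $(\sigma,F)$ pairs) is precisely the intended bookkeeping, and the one loose end you flag yourself --- tightening $\abs\vX+1$ to $\abs\vX$ occurrences per strategy --- is likewise left by the paper to the cited references.
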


%
%The general problem can be considered as follows:
%%
%We have to compute the least solution of 
%the fixpoint equation 
%\begin{align}
%  \vx = F ( \vx )
%  ,
%\end{align}
%
%\noindent
%where $F : \CR^n \to \CR^n$ is a monotone operator. 
%We assume that $F$ can be decomposed as follows:
%\begin{align}
%  F(x) 
%  &= \bigvee \left\{ \sigma(x) \mid \sigma \in \Sigma \right\}
%  &&
%  \text{for all } x \in \CR^n
%  .
%\end{align}
%
%
%\noindent
%The elements from the set $\Sigma$ are the max-strategies.

\noindent
Lemma~\ref{l:how:to:eval:strats} gives us a method for computing $\rho_i$.
% According to the statement of the lemma,
For each variable $\vx \in \vX$, 
we have to compute 
\begin{align}
  \label{eq:sup:to:comp}
  \sup \, \left\{
    \rho(\vx) \mid \rho : \vX_{\E'} \to\R \text{ and } \rho \leq \sem{\E'} (\rho)
  \right\}
  .
\end{align}

\noindent
% where $\E'$ is the equation system defined in Lemma 
%~\ref{l:how:to:eval:strats}.
The equations of $\E'$
are of the form 
$\vb = \sem{s}^\sharp_{k\cdot}(\vb_1,\ldots,\vb_m)$,
where
$\vb, \vb_1,\ldots,\vb_m$ are $\CR$-valued variables, 
and $s$ is a sequential statement.
Thus, by Lemma~\ref{l:sequential:poly},
the right-hand sides are point-wise minima of finitely many monotone and weak-affine functions.
Hence, they are monotone and concave.
Therefore, 
\eqref{eq:sup:to:comp}
represents a convex optimization problem.
% To see this,
% add the negated right-hand side $-e$ to both sides of each inequality $\vx \leq e$ of $\E'$.
% Then the left-hand side $\vx - e$ of the resulting inequality $\vx - e \leq 0$ 
% represents a convex function.

The above convex optimization problem is of a very special form.
The right-hand sides are parameterized linear programs.
In consequence,
the convex optimization problem can be rewritten 
into an equivalent linear programming problem as follows:
In accordance to~\eqref{eq:this:is:an:lp:1} and \eqref{eq:this:is:an:lp:2},
in $\E'$,
we replace each equation $\vb = \sem{s}^\sharp_{k\cdot}(\vb_1,\ldots,\vb_m)$ 
with the following linear constraints:
\begin{align}
  \vb &\leq T_{k\cdot} (\vy_1', \ldots, \vy_n')^\top \\
  & \;\;\Phi \\
  T (\vy_1,\ldots,\vy_n)^\top &\leq (\vb_1,\ldots,\vb_m)
\end{align}

\noindent
Here, $\vy_1,\ldots,\vy_n,\vy_1',\ldots,\vy_n'$ are fresh variables.
$\Phi$ is a set of linear inequalities that 
is obtained from the sequential statement $s$ by 
\begin{enumerate}[(1)]
  \item
    replacing the variables $\vx_1,\ldots,\vx_n,\vx_1',\ldots,\vx_n'$ 
    with the fresh variables $\vy_1,\ldots,\vy_n,\vy_1',\allowbreak\ldots,\vy_n'$,
  \item
    replacing all other variables of $s$ with fresh variables, and
  \item
    replacing every strict inequality $<$ with a non-strict inequality $\leq$.
\end{enumerate}
We denote the resulting constraint system by $\mathcal C$.
By construction, we have:
\begin{align}
  \sup \, \left\{
    \rho(\vx) \mid \rho : \vX\to\R \text{ and } \rho \leq \sem{\E'} (\rho)
  \right\}
  =
  \sup \, \left\{
    \rho(\vx) \mid \rho : \vX\to\R \text{ and } \rho \text{ solves } \mathcal C
  \right\}
\end{align}

\noindent
The construction can be carried out in polynomial time.
Since $\mathcal C$ is a set of linear constraints,
we can use linear programming to compute the optimal value.
We have:

\begin{lem}[Evaluating Max-Strategies]
  Whenever our max-strategy improvement algorithm has to compute 
  $\mu_{\geq\rho}\sem{\sigma}$,
  this can be performed by solving 
  $\abs\vX$ linear programming problems of polynomial size.
  The linear programming problems 
  do only depend on
  $\sigma$ and the set 
  $\{ \vx \mid \vx = e \text{ is an equation of } \sigma \text{ with } \sem e (\rho) = \infty \}$.
  \qed  
\end{lem}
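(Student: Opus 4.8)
The plan is to assemble the pieces developed in this subsection rather than to prove anything new. First I would invoke Lemma~\ref{l:how:to:eval:strats}: computing $\mu_{\geq\rho}\sem{\sigma}$ reduces to (i) detecting which variables of $\sigma$ are forced to $\neginfty$ or to $\infty$ under $\rho$ and substituting these constants, producing the reduced system $\E'$, and (ii) for each remaining variable $\vx$ evaluating the supremum $\sup\{\rho'(\vx) \mid \rho' : \vX_{\E'}\to\R,\ \rho'\le\sem{\E'}(\rho')\}$ of~\eqref{eq:sup:to:comp}. The same lemma states that this reduction uses only $\sigma$ and the set $\{\vx \mid \vx = e \text{ is an equation of } \sigma \text{ with } \sem{e}(\rho)=\infty\}$, which already secures the dependency claim provided the rest of the construction is a function of $\E'$ alone. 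It also bounds the number of variables to treat by $\abs{\vX}$.

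Second, I would observe that every equation of $\E'$ has the shape $\vb = \sem{s}^\sharp_{k\cdot}(\vb_1,\dots,\vb_m)$ with $s$ a sequential statement, so by Lemma~\ref{l:sequential:poly} its right-hand side is a point-wise minimum of finitely many monotone, weak-affine maps, hence monotone and concave; thus~\eqref{eq:sup:to:comp} is a convex optimization problem. The key step, which I would spell out following~\eqref{eq:this:is:an:lp:1}--\eqref{eq:this:is:an:lp:2}, is that each such right-hand side is itself the value of a parametric linear program in $\vb_1,\dots,\vb_m$: replacing every equation $\vb = \sem{s}^\sharp_{k\cdot}(\vb_1,\dots,\vb_m)$ by the bound $\vb \le T_{k\cdot}(\vy_1',\dots,\vy_n')^\top$, a copy $\Phi$ of $s$ with all its variables renamed to fresh ones and all strict inequalities relaxed to non-strict, and the template constraints $T(\vy_1,\dots,\vy_n)^\top \le (\vb_1,\dots,\vb_m)$, yields a linear constraint system $\mathcal C$ of size polynomial in $\abs{\E'}$ such that, restricted to the variables $\vX_{\E'}$, its real solutions are exactly the pre-solutions $\rho'\le\sem{\E'}(\rho')$. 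Consequently $\sup\{\rho'(\vx)\mid \rho'\le\sem{\E'}(\rho')\}$ equals the optimum of the LP $\sup\{\rho'(\vx)\mid \rho'\text{ solves }\mathcal C\}$. Solving this LP once for each of the $\abs{\vX}$ variables gives the asserted count, and since $\mathcal C$ is built purely from $\E'$, each LP depends only on $\sigma$ and the $\infty$-set.

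The main obstacle is the correctness of this flattening, i.e.\ that passing from the nested ``$\sup$ of $\min$ of weak-affine'' description to the single linear system $\mathcal C$ preserves the value of~\eqref{eq:sup:to:comp}. Two points need care: that relaxing every strict inequality of $s$ to a non-strict one does not change the supremum — it can only enlarge the feasible set, but the $\epsilon$-argument of Lemma~\ref{l:sequential:poly} together with the fact that a finite optimum is attained at a vertex shows the optimal value is unchanged — and that the existentially quantified auxiliary variables of $s$ may be turned into free LP variables. Both are essentially settled inside the proof of Lemma~\ref{l:sequential:poly}; the only genuinely new bookkeeping is lifting that single-statement argument to the whole reduced system $\E'$, where the quantities $\vb_1,\dots,\vb_m$ that were fixed constants in Lemma~\ref{l:sequential:poly} now become bona fide LP variables tied across equations through the template matrix $T$.
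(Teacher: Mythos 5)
Your proposal is correct and follows essentially the same route as the paper: invoke Lemma~\ref{l:how:to:eval:strats} to reduce the computation to the suprema~\eqref{eq:sup:to:comp} over the reduced system $\E'$, note via Lemma~\ref{l:sequential:poly} that each right-hand side is a parametric linear program, and flatten the whole system into the single linear constraint system $\mathcal C$ (with the same three constraint groups, variable renaming, and strict-to-non-strict relaxation), yielding one polynomial-size LP per variable. Your added care about why the relaxation and the treatment of auxiliary variables preserve the supremum is a point the paper passes over with ``by construction,'' but it does not change the argument.
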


\begin{exa}
	We now discuss how to compute
	$
	  \rho_3
	  := 
	  \mu_{\geq\rho_2}\sem{\sigma_3}
	$
	from Example~\ref{ex:run:ex:lag:run}.
         %	
% 	In a first step, 
% 	we aim at computing the value of the variable $\vx_1$,
% 	i.e., $\rho_3(\vx_1)$.
	Note that the values of the variables $\vd_{\start,1}$ and $\vd_{\start,2}$ 
	are already known to be $\infty$.
	It remains to determine the values for the variables $\vd_{1,1}$ and $\vd_{1,2}$.
	According to Lemma~\ref{l:how:to:eval:strats},
	we have 
	\begin{align}
	  \nonumber
	  \rho_3(\vd_{1,1})
	  &=
           \sup\; \{ 
             \vd_{1,1} 
             \mid 
               \vd_{1,1}, \vd_{1,2} \in \R ,\,
               \vd_{1,1} \leq \sem{\Phi \wedge \Phi_2}^\sharp_{1\cdot} (\vd_{1,1}, \vd_{1,2}), \,
               \\&\qquad
               \vd_{1,2} \leq \sem{\vx_1' = 0}^\sharp_{2\cdot} (\infty, \infty) 
             \}
	\end{align}
	
         \noindent
	Observe that 
	$\Phi \wedge \Phi_2$ 
	can be equivalently rewritten into
	$\vx_1 \leq 0 \wedge \vx_1' = -\vx_1 + 1 \wedge \vx_2' = - \vx_1$.
	Thus, according to the above observations,
	$\rho_3(\vd_{1,1})$
	is the optimal value of the following linear programming problem:
	  \begin{align}
	    \max \; & \vd_{1,1}
	    &
	    \vd_{1,1} &\leq -\vx_1 + 1 
	    &
	    \vx_1       &\leq 0 
	    &
	    \vx_1       &\leq \vd_{1,1} 
	    &
	    -\vx_1      &\leq \vd_{1,2} 
	    &
	    \vd_{1,2} &\leq 0
	  \end{align}
	
	\noindent
	Since the optimal value is $1$,
	we get $\rho_3(\vd_{1,1}) = 1$.
	Similarly,
	to compute $\rho_3(\vd_{1,2})$,
	we compute the optimal value of the following linear programming problem:
	  \begin{align}
	    \max \; & \vd_{1,2}
	    &
	    \vd_{1,1} &\leq -\vx_1 + 1 
	    &
	    \vx_1       &\leq 0 
	    &
	    \vx_1       &\leq \vd_{1,1} 
	    &
	    -\vx_1      &\leq \vd_{1,2} 
	    &
	    \vd_{1,2} &\leq 0
	  \end{align}
	  
	  \noindent
	  This gives us $\rho_3(\vd_{1,2}) = 0$.
	  
	  Both linear programming problems have the
	  same feasible space. 
	  This can be utilized in an implementation to improve the performance.
	  Furthermore, 
	  $\rho_3(\vd_{1,1}) = \vd_{1,1}^*$
	  and
	  $\rho_3(\vd_{1,2}) = \vd_{1,2}^*$
	  for any optimal solution 
	  $(\vd_{1,1}^*, \vd_{1,2}^*, \vy_1^*)$
	  of the following linear programming problem:
	  \begin{align}
	    \max \; & \vd_{1,1} + \vd_{1,2}
	    &
	    \vd_{1,1} &\leq -\vx_1 + 1 
	    &
	    \vx_1       &\leq 0 
	    &
	    \vx_1       &\leq \vd_{1,1} 
	    &
	    -\vd_1      &\leq \vd_{1,2} 
	    &
	    \vd_{1,2} &\leq 0
	  \end{align}
	  
	  \noindent
	  Hence, for this example,
	  it is sufficient to solve one linear programming problem 
	  to determine the variable assignment $\rho_3$.
   \qed
\end{exa}

\noindent
The technique for evaluating max-strategies can thus be further optimized.
It is not necessary to solve one linear program for each variable.
Instead, it is possible to evaluate a max-strategy entirely
by solving only two linear programming problems 
of linear size.
The solution of the first linear programming problem
tells us which variables are to set to $\infty$.
The solution of the second linear programming problem provides us with the 
values of the variables which receive finite values.
In this article, we do not elaborate on these techniques.
% Instead we refer to \citet{DBLP:conf/csl/GawlitzaS07},
% where similar techniques are applied.

% \todo[inline]{Discussion, two linear programming problems instead of $n$?}

%%% Local variables:
%%% mode: latex
%%% tex-main-file: "main"
%%% TeX-master: "main"
%%% ispell-local-dictionary: "english"
%%% end:

\subsection{Improving Max-Strategies} 
\label{ss:improve}

We now discuss how we can compute an improvement of a max-strategy $\sigma$
w.r.t.\ a variable assignment $\rho$.
Since, 
by Lemma~\ref{l:diamand:is:np:complete},
this problem is \complexclass{NP}-hard,
we cannot expect to come up with a polynomial time algorithm.
We propose a solution that utilizes SMT solving techniques.

Let us first explain the intuition of our method, which is very similar to how the ``path focusing'' technique from \citet{Monniaux_Gonnord_SAS11} selects the next iteration path.
A strategy needs improvement if and only if its value does not define an inductive invariant. 
In other words: 
there is an outgoing transition from the ``invariant candidate'' into its complement, meaning that there is an execution trace through a statement, starting from the invariant candidate and ending with a violation of the current bounds.
Whether this holds is a SAT problem modulo (SMT) the theory of linear real arithmetic;
it can therefore be solved by SMT-solvers.
Furthermore, the solution from the SMT problem picks one of the sequential statements from the merge-simple expansion of the 
statement as ``offending'', explaining why the invariant candidate is not an invariant; in other words, 
it points to a possible improvement in the strategy.
More generally, the set of solutions of the SMT problem maps to the possible improvements.

Let us now see this process more formally.
Assume that we have to improve a given max-strategy 
\begin{align}
  \sigma = \{ \vx_1 = \sigma_1 ,\ldots, \vx_n = \sigma_n \}
\end{align}

\noindent
for the equation system 
\begin{align}
  \E = \{ \vx_1 = e_1 ,\ldots, \vx_n = e_n \}
\end{align}

\noindent
w.r.t.\ a variable assignment $\rho$,
which is a solution of $\sigma$,
i.e., 
$\rho = \sem{\sigma}(\rho)$.
This is exactly the situation we are concerned with,
when we execute our max-strategy improvement algorithm.
For each $i \in \{ 1,\ldots, n \}$,
we now want to check whether or not 
$\rho(\vx_i) < \sem{e_i} \rho$.
If this is the case,
we moreover want % to improve the max-strategy $\sigma_i$, that is,
% we want 
to compute a max-strategy $\sigma_i'$ for $e_i$ such that 
$\rho(\vx_i) < \sem{\sigma_i'} \rho$.
Note that, 
since
$\rho(\vx_i) < \sem{e_i} \rho$,
we could also compute a max-strategy $\sigma_i'$ such that 
$\sem{\sigma_i'} \rho = \sem{e_i} \rho$.
If 
$\rho(\vx_i) < \sem{e_i} \rho$ 
does not hold,
then we set $\sigma_i' := \sigma_i$.
Finally,
the max-strategy 
$\sigma' := \{ \vx_1 = \sigma_1' ,\ldots, \vx_n = \sigma_n' \}$
is an improvement of $\sigma'$ w.r.t.\ $\rho$.

Given an equation $\vx = e$ and a variable assignment $\rho$, 
we must decide whether or not $\rho(\vx) < \sem{e}(\rho)$ holds,
and compute a max-strategy $\sigma'$ of $e$ such that 
$\rho(\vx) < \sem{\sigma'}(\rho)$ holds.
Recall that the semantic equations we are concerned with in this article are of  the form 
\begin{align}
  \vx = \max \left\{ e_1 , \ldots , e_k \right\}
\end{align} 

\noindent
where, for all $i \in \{1,\ldots,k\}$, each expression $e_i$ is either a constant or an expression of the form 
$\sem{s}^\sharp_{j\cdot}(\vx_1,\ldots,\vx_m)$.
Hence, 
we can answer the above question by answering the question 
for each argument $e_1,\ldots,e_k$ of the maximum separately.
It thus remains to find a method to 
check whether or not,
for a given statement $s$,
a given $j \in \{1,\ldots,m\}$, 
a given $c \in \R \cup \{\neginfty\}$, 
and a given $d \in \CR^m$,
$\sem{s}^\sharp_{j\cdot} (d) > c$ 
holds --- which is, by Lemma~\ref{l:diamand:is:np:complete}, a \complexclass{NP}-hard problem.
Our approach is to construct the following SAT modulo linear real arithmetic formula
(we use existential quantifiers %in order 
 to improve readability):
\begin{align}
  \Psi(s,d,j,c)
  &:\equiv
  \exists \vv \in \R \;.\; \Psi(s,d,j) \wedge \vv > c
\\
  \Psi(s,d,j)
  &:\equiv
  \exists
  \vx \in \R^n,
  \vx' \in \R^n
  \;.\;
  T \vx \leq d \wedge \Psi(s) \wedge \vv = T_{j\cdot} \vx' %
\end{align}

\noindent
Here, $\Psi(s)$ is a formula that relates every $x \in \R^n$ with all elements from the set $\sem{s} \{ x \}$.
%We assume that every occurrence of the operator $\mid$
%is labeled with a unique id $i \in \{1,\ldots,k\}$,
%where $k$ is the number of occurrences of the operator $\mid$.
%We denote by $\mid_i$ the fact that the occurrence is
%labeled with the id $i$.
It is defined inductively over the structure of the statement $s$ as follows:
\begin{align}
  \Psi(s) 
    &:\equiv s
    && \text{if } s \text{ is a literal}
    \\
  \Psi(s_1 \wedge s_2)
    &:\equiv \Psi(s_1) \wedge \Psi(s_2)
    \\
  \Psi(s_1 \vee s_2) 
    &:\equiv 
      \left( \neg \va_{s_1\vee s_2}  \wedge \Psi(s_1) \right)
      \vee 
      \left( {\va_{s_1\vee s_2}} \wedge \Psi(s_2) \right)
\end{align}

\noindent
Here,
for every sub-formula $s_1 \vee s_2$ of $s$,
$\va_{s_1 \vee s_2}$ is a fresh Boolean variable.
The set of free variables of the formula 
$\Psi(s)$ is 
\begin{align}
  \{ \vx, \vx' \} 
  \cup 
  \{\va_{s_1\vee s_2} \mid s_1\vee s_2 \text{ is a sub-formula of } s  \}
  .
\end{align}

\noindent
The variables $\vx$ and $\vx'$ are $\R^n$-valued variables.
% 
% Observe that,
By construction,
$s[x/\vx ,\, x'/\vx']$ is satisfiable 
if and only if 
$\Psi(s)[x/\vx ,\, x'/\vx']$ is satisfiable
for all $x, x' \in \R^n$.
That is,
$s$ and $\Psi(s)$ 
are describing the same relation.
We therefore obtain the following lemma:

\begin{lem}
  \label{l:smt}
  %\tom{PERFEKT}
  $\sem{s}^\sharp_{j\cdot} (d) > c$ 
  if and only if
  $\Psi(s,d,j,c)$
  is satisfiable.
  \qed
\end{lem}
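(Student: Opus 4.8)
The plan is to prove the biconditional by unfolding both sides to a common statement, relying on two facts already in hand: equation~\eqref{eq:this:is:an:lp:1}, which (being a direct consequence of the definitions of $\alpha$, $\gamma$ and $\sem{s}^\sharp = \alpha \circ \sem{s} \circ \gamma$) gives $\sem{s}^\sharp_{j\cdot}(d) = \sup\{T_{j\cdot}x' \mid x' \in \sem{s}(\gamma(d))\}$ for an arbitrary statement $s$; and the observation stated just above the lemma, that $s[x/\vx,x'/\vx']$ is satisfiable if and only if $\Psi(s)[x/\vx,x'/\vx']$ is satisfiable, i.e.\ that $s$ and $\Psi(s)$ define the same binary relation on $\R^n$.

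First I would flatten $\Psi(s,d,j,c)$. The variable $\vv$ is pinned to $T_{j\cdot}\vx'$ by the conjunct $\vv = T_{j\cdot}\vx'$, so eliminating it shows that $\Psi(s,d,j,c)$ is satisfiable if and only if there are $x,x' \in \R^n$ and an interpretation of the auxiliary Boolean variables $\va_{s_1 \vee s_2}$ such that $Tx \leq d$ holds, $\Psi(s)[x/\vx,x'/\vx']$ is satisfied under that Boolean interpretation, and $T_{j\cdot}x' > c$. Existentially quantifying the Boolean variables away, $\Psi(s)[x/\vx,x'/\vx']$ is satisfiable exactly when $s[x/\vx,x'/\vx']$ is, by the relation-preservation property of $\Psi(\cdot)$. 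Hence $\Psi(s,d,j,c)$ is satisfiable if and only if there exist $x \in \gamma(d)$ and $x' \in \R^n$ with $s[x/\vx,x'/\vx']$ satisfiable and $T_{j\cdot}x' > c$; by the definitions of $\gamma$ and $\sem{s}$ this is the same as: there exists $x' \in \sem{s}(\gamma(d))$ with $T_{j\cdot}x' > c$.

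It then remains to match this to the strict inequality $\sem{s}^\sharp_{j\cdot}(d) > c$. Writing $S := \{T_{j\cdot}x' \mid x' \in \sem{s}(\gamma(d))\} \subseteq \R$, equation~\eqref{eq:this:is:an:lp:1} says $\sem{s}^\sharp_{j\cdot}(d) = \sup S$, and the elementary fact that, for $c \in \R \cup \{\neginfty\}$, one has $\sup S > c$ if and only if some $v \in S$ satisfies $v > c$ (both sides being false when $S = \emptyset$) finishes the equivalence.

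The heart of this lemma is really the construction of $\Psi(s)$ and its relation-preservation property, which the excerpt has already dispatched, so the remaining difficulty is purely bookkeeping. Concretely, the one step to carry out with care is tracking the implicitly existentially quantified auxiliary Boolean variables $\va_{s_1\vee s_2}$ and the real variable $\vv$ through the flattening, so that \emph{satisfiable} on the SMT side lines up precisely with the existential quantifier over $x$ hidden inside $\sem{s}$. A couple of boundary cases also deserve a line: the treatment of infinite components of $d$ (reading $T_{i\cdot}\vx \leq d_{i\cdot}$ as true when $d_{i\cdot} = \infty$ and as false when $d_{i\cdot} = \neginfty$), and the cases $c = \neginfty$ and $\sem{s}(\gamma(d)) = \emptyset$ in the supremum characterization.
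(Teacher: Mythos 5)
Your proposal is correct and follows essentially the route the paper intends: the paper states the lemma with no explicit proof, deriving it directly from the relation-preservation property of $\Psi(s)$ ("$s$ and $\Psi(s)$ are describing the same relation") together with the definition of the abstract semantics, which is exactly the skeleton you flesh out. Your added bookkeeping — the supremum-versus-strict-inequality characterization, the validity of equation~\eqref{eq:this:is:an:lp:1} for arbitrary (not just sequential) statements, and the boundary cases for infinite components of $d$ and $c = \neginfty$ — is all sound and only makes explicit what the paper leaves implicit.
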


%\tom{PERFEKT}
\noindent
The difference between the formula $s$ and the formula $\Psi(s)$ is that
the Boolean variables of the formula $\Psi(s)$ additionally describe a path through the formula.
More precisely,
a valuation for the variables from the set  
$\{\va_{s_1\vee s_2} \mid s_1\vee s_2 \text{ is a sub-formula of } s  \}$
describes a path through $s$.

Let $s$ be a statement, $d \in \CR^m$, $j \in \{1,\ldots,m\}$, 
and $c \in \R \cup \{\neginfty\}$.
Assume now that $\sem{s}^\sharp_{j\cdot} (d) > c$.
Our next goal is to compute a max-strategy $\sigma$ for the statement $s$ such 
that $\sem{\sigma}^\sharp_{j\cdot} (d) > c$.
By Lemma~\ref{l:smt},
there exists a model 
$M$ of $\Psi(s,d,j,c)$.
We define the max-strategy $\sigma_M$ for the statement $s$ 
recursively by
\begin{align}
%   \sigma_M(p) := M(a_p)
%   \text{ for all }  
%   p \in \Pos_\mid(s)
  \sigma_M(s) 
    &:\equiv s
    && \text{if } s \text{ is a literal}
    \\
  \sigma_M(s_1 \wedge s_2)
    &:\equiv \sigma_M(s_1) \wedge \sigma_M(s_2)
    \\
  \sigma_M(s_1 \vee s_2) 
    &:\equiv 
    \begin{cases}
      \sigma_M(s_1) & \text{if } M(\va_{s_1 \vee s_2}) = 0 \\
      \sigma_M(s_2) & \text{if } M(\va_{s_1 \vee s_2}) = 1
    \end{cases}
  .
\end{align}

\noindent
By again applying Lemma~\ref{l:smt},
we get
$\sem{\sigma_M}^\sharp_{j\cdot} (d) > c$
and thus the following lemma:

\begin{lem}
  \label{l:smt:2}
  By solving the SAT modulo linear real arithmetic formula $\Psi(s,d,j,c)$ 
  that can be obtained from $s$ in linear time,
  we can decide, whether or not 
  $\sem{s}^\sharp_{j\cdot} (d) > c$ holds.
  From a model $M$ of this formula,
  we can, in linear time, obtain a $\vee$-strategy $\sigma_M$ for $s$ 
  such that $\sem{\sigma_M}^\sharp_{j\cdot} (d) > c$.
  \qed
\end{lem}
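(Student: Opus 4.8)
The plan is to obtain the statement by invoking the already-established Lemma~\ref{l:smt} twice --- once for $s$ and once for the strategy $\sigma_M$ extracted from a model --- glued together by a short structural induction. First I would note that $\Psi(s,d,j,c)$ results from a single traversal of $s$: the recursion defining $\Psi(\cdot)$ copies literals, distributes over $\wedge$, and at each $\vee$-node inserts a fresh Boolean $\va_{s_1\vee s_2}$ together with a constant-size gadget, while wrapping the result with the blocks $T\vx\le d$, $\vv=T_{j\cdot}\vx'$ and $\vv>c$ contributes only $O(mn)$ further linear atoms. Hence $\Psi(s,d,j,c)$ is computable in time linear in the size of the input, and by Lemma~\ref{l:smt} its satisfiability is equivalent to $\sem{s}^\sharp_{j\cdot}(d)>c$; this disposes of the decision part.

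For the extraction part I would assume $M\models\Psi(s,d,j,c)$ and take $\sigma_M$ as in the displayed recursion. The first thing to record is that $\sigma_M$ contains neither $\vee$ nor negation --- it is a conjunction of literals occurring in $s$ --- so it is itself a (sequential) statement, it is a legitimate $\vee$-strategy for $s$, and $\Psi(\sigma_M)\equiv\sigma_M$, so Lemma~\ref{l:smt} may again be applied to it. The core is the claim that $M\models\Psi(s)$ implies $M\models\Psi(\sigma_M)$, which I would prove by induction on the structure of $s$. If $s$ is a literal, $\sigma_M=s$ and there is nothing to show. If $s=s_1\wedge s_2$, then $\Psi$ distributes, $M$ satisfies both $\Psi(s_1)$ and $\Psi(s_2)$, and the induction hypothesis applied to each, together with $\sigma_M(s_1\wedge s_2)\equiv\sigma_M(s_1)\wedge\sigma_M(s_2)$, gives the claim. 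If $s=s_1\vee s_2$, then $M$ assigns a definite truth value to $\va_{s_1\vee s_2}$, so exactly one of the disjuncts $\neg\va_{s_1\vee s_2}\wedge\Psi(s_1)$ and $\va_{s_1\vee s_2}\wedge\Psi(s_2)$ can hold under $M$; that disjunct forces $M\models\Psi(s_i)$ for precisely the index $i$ selected in the definition of $\sigma_M(s_1\vee s_2)$, and the induction hypothesis finishes the case.

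It then remains to unfold the existential quantifiers of $\Psi(s,d,j,c)$: the model $M$ provides values for $\vx,\vx',\vv$ and the auxiliary variables with $M\models T\vx\le d$, $M\models\Psi(s)$, $M\models\vv=T_{j\cdot}\vx'$ and $M\models\vv>c$. By the claim, $M\models\Psi(\sigma_M)$, hence $M$ also satisfies $\Psi(\sigma_M,d,j,c)$, and applying Lemma~\ref{l:smt} to the statement $\sigma_M$ yields $\sem{\sigma_M}^\sharp_{j\cdot}(d)>c$. Finally, $\sigma_M$ is read off from $M$ by the same one-pass traversal of $s$, doing constant work per node (a lookup of $M(\va_{s_1\vee s_2})$ at each $\vee$-node), so it is produced in linear time. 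I do not expect a real obstacle; the one point that needs care is the disjunction step of the induction, where one must use that the Boolean $\va_{s_1\vee s_2}$ pins down exactly the branch of $\Psi(s_1\vee s_2)$ that is taken when forming $\sigma_M$, and one should double-check that $\sigma_M$ indeed meets the paper's definition of a $\vee$-strategy so that Lemma~\ref{l:smt} legitimately applies to it.
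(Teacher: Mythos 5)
Your proposal is correct and follows essentially the same route as the paper: the paper's argument is precisely to build $\Psi(s,d,j,c)$ in one pass, invoke Lemma~\ref{l:smt} for the decision, read off $\sigma_M$ from the Boolean part of a model, and apply Lemma~\ref{l:smt} a second time to $\sigma_M$. The only difference is that you spell out the structural induction showing $M\models\Psi(s)$ implies $M\models\Psi(\sigma_M)$, which the paper leaves implicit in the phrase ``by again applying Lemma~\ref{l:smt}''.
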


%  \begin{figure}
%%  \vspace*{-4mm}
%%  \scalebox{0.93}{
%  \scalebox{1}{
%  $
%  \begin{array}{@{}r@{\;}l@{}}
%    \Psi(s,(0,0),1,0)
%     & \equiv
%     \exists \vv \in \R \;.\; \Psi(s,(0,0)^\top,1) \wedge \vv > 0
%   \\[5pt]
%   %
%   \Psi(s,(0,0),1)
%   & \equiv
%   \exists \vx \in \R^2, \vx' \in \R^2 \;.\; \vx_{1\cdot} \leq 0 \wedge -\vx_{1\cdot} \leq 0 \wedge \Psi(s) \wedge \vv = \vx'_{1\cdot}
%   \\[5pt] 
%   \Psi(s)
%   & \equiv
%   \Phi \wedge ((\overline{\va_{\Phi_1 \vee \Phi_2}} \wedge \Phi_1) \vee (\va_{\Phi_1 \vee \Phi_2} \wedge \Phi_2) )   
%  \end{array}
%  $}
%  \vspace*{1mm}
%  \caption{SAT modulo real linear arithmetic formula for Example~\ref{ex:running:smt:formula}}
%  \vspace*{3mm}
%  \label{f:running:2}
%  \end{figure}

\begin{exa}
  \label{ex:running:smt:formula}
  We again continue 
  our running example,
  which is summarized in Figure~\ref{fig:run:ex}.
  We want to know,
  whether or not
  $\sem{s}^\sharp_{1\cdot} (0,0) \allowbreak > 0$
  holds.
  For that we compute a model $M$ of the 
  formula $\Psi(s,(0,0),1,0)$ which is given as follows:
  \begin{align}
    \Psi(s,(0,0),1,0)
     & \equiv
     \exists \vv \in \R \;.\; \Psi(s,(0,0)^\top,1) \wedge \vv > 0
   \\
   \Psi(s,(0,0),1)
   & \equiv
   \exists \vx \in \R^2, \vx' \in \R^2 \;.\; \vx_{1\cdot} \leq 0 \wedge -\vx_{1\cdot} \leq 0 \wedge \Psi(s) \wedge \vv = \vx'_{1\cdot}
   \\
   \Psi(s)
   & \equiv
   \Phi \wedge ((\neg \va_{\Phi_1 \vee \Phi_2} \wedge \Phi_1) \vee (\va_{\Phi_1 \vee \Phi_2} \wedge \Phi_2) )   
  \end{align}
  
  \noindent
  The formulas $\Phi, \Phi_1$, and $\Phi_2$ are defined in Figure~\ref{fig:run:ex}.
  $M = \{ a_{\Phi_1\vee\Phi_2} \mapsto 1 \}$ is a model,
  which gives us the max-strategy $\sigma_M \equiv \Phi \wedge \Phi_2$ for $s$.
  Thus, 
  by Lemma~\ref{l:smt:2},
  we have
  $\sem{\sigma_M}^\sharp_{1\cdot} (0,0) = \sem{\Phi \wedge \Phi_2}^\sharp_{1\cdot} (0,0) > 0$.
  \qed
\end{exa}

\noindent
We must still provide a method for computing the values for the Boolean variables of a model of  the formula $\Psi(s,d,j,c)$.
Most of the state-of-the-art SMT solvers, 
such as Yices \cite{yices,DBLP:conf/cav/DutertreM06},
support the computation of models directly;
the SMTLIB2 standard \cite{BarST-SMT-10} has a \texttt{get-assignment} command that can be used to extract the Boolean part of a model.
If this feature is not supported, 
one can compute the model,
or only the values for the Boolean variables,
using standard self-reduction techniques.

%\begin{comment}
%Nevertheless,
%we also can compute the model using the standard self-reduction techniques for SAT.
%For that, 
%assuming that $k$ denotes the number of occurrences of $|$-subexpressions in $s$,
%we have to solve at most  $2k - 1$ 
%SAT modulo real linear arithmetic queries
%each of which can be constructed in linear time.
%This works as follows:
%If we have fixed the values
%for the variables in some subset $A \subseteq \{ a_{p} \mid p \in \Pos_\mid(s) \}$ and the formula
%is still satisfiable,
%then we fix the value for some variable $a \in \{ a_p \mid p \in \Pos_\mid(s) \} \setminus A$.
%Once we assume that $a$ is false.
%Once we assume that $a$ is true.
%At least one of these cases must result in a satisfiable formula.
%Thus we get a possible valuation for the variables 
%$A \cup \{ a \}$ by solving at most $2$ SAT modulo real linear arithmetic queries.
%Thus, after at most $2k - 1$ SAT modulo real linear arithmetic queries,
%we know a satisfying valuation for 
%all free variables --- provided that it exists.
%\end{comment}

Recall that 
the semantic equations we are concerned with in this article are of  the form 
$
  \vx = \max \left\{ e_1 , \ldots , e_k \right\}
$,
where each expression $e_i$, for all $i \in \{1,\ldots,k\}$, 
is either a constant or an expression of the form 
$\sem{s}^\sharp_{j\cdot}(\vx_1,\ldots,\vx_m)$ where $s$ is a statement.
As discussed above,
we can check whether or not $\rho(\vx) < \sem{ \max \left\{ e_1 , \ldots , e_k \right\} }(\rho)$ holds, 
and if this is the case compute a max-strategy $\sigma'$ such that $\rho(\vx) < \sem{\sigma'}(\rho)$ holds,
by solving at most $k$ SAT modulo linear real arithmetic formulas, each of which can be constructed in linear time.
Equivalently, instead of running $k$ SMT queries, each obtaining a part of the next strategy, we can rename Boolean variables of these SMT formulas so that they are distinct and query the conjunction of the resulting formulas.

\begin{lem}
  \label{l:smt:3}
  %\tom{PERFEKT}
  Let $\vx = e$ be an abstract semantic equation,
  $\rho$ a variable assignment, 
  and $c \in \CR$.
  By solving a single SAT modulo linear real arithmetic formula 
  that can be obtained from $e$, $\rho$ and $c$ in linear time,
  we can decide, whether or not $\sem{e} \rho > c$ holds.
  From a model $M$ of this formula,
  provided that $\sem{e} \rho > c$ holds,
  we can in linear time obtain a max-strategy $\sigma_M$ for $e$ 
  such that $\sem{\sigma_M} \rho > c$.
  \qed
\end{lem}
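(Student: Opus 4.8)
The plan is to obtain the statement as a direct corollary of Lemmas~\ref{l:smt} and~\ref{l:smt:2}, by forming a disjunction over the operands of the outermost maximum after renaming the fresh Boolean variables of the individual formulas apart. Recall from Lemma~\ref{l:eqs:abs_sem} and Section~\ref{s:basic_obs} that the right-hand side $e$ of an abstract semantic equation has the form $e = \max\{e_1,\ldots,e_k\}$, where each operand $e_i$ is either a constant from $\CR$ or an expression $\sem{s_i}^\sharp_{j_i\cdot}(\vd_{u_i,1},\ldots,\vd_{u_i,m})$ for a statement $s_i$, an index $j_i \in \{1,\ldots,m\}$ and a predecessor $u_i$. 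Hence $\sem{e}(\rho) = \max_i \sem{e_i}(\rho)$, so $\sem{e}(\rho) > c$ holds if and only if $\sem{e_i}(\rho) > c$ holds for some $i$; it therefore suffices to handle the operands separately and then take an ``or''.

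First I would dispose of the constant operands: for such an $e_i$ we have $\sem{e_i}(\rho) = e_i$, the test $e_i > c$ is decided in constant time using the ordering of $\CR$, and when it succeeds the choice $\sigma_M := e_i$ is an admissible max-strategy for $e$ with $\sem{\sigma_M}(\rho) = e_i > c$. For the remaining operands $e_i = \sem{s_i}^\sharp_{j_i\cdot}(\vd_{u_i,1},\ldots,\vd_{u_i,m})$ I would set $d^{(i)} := (\rho(\vd_{u_i,1}),\ldots,\rho(\vd_{u_i,m}))^\top \in \CR^m$; by Lemma~\ref{l:smt} we have $\sem{e_i}(\rho) = \sem{s_i}^\sharp_{j_i\cdot}(d^{(i)}) > c$ exactly when $\Psi(s_i,d^{(i)},j_i,c)$ is satisfiable, and this formula is built from $s_i$ in linear time (the conventions of Section~\ref{s:basics} absorb the cases in which a component of $d^{(i)}$ or $c$ is $\pm\infty$: a row of $T\vx \leq d^{(i)}$ with right-hand side $\infty$ is dropped, a row with $\neginfty$ makes the disjunct infeasible, and $\vv > \neginfty$ is trivially true). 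I would then form the single formula consisting of the disjunction of the formulas $\Psi(s_i,d^{(i)},j_i,c)$ over the non-constant indices $i$, together with the Boolean constant $1$ for each constant index $i$ with $e_i > c$, where beforehand the fresh Boolean variables $\va_{s'\vee s''}$ of distinct disjuncts (and, if one wishes, the quantified real variables $\vx,\vx'$) are renamed apart so that no two disjuncts share a variable. This construction is linear in $|e|$ since $\sum_i |s_i| \le |e|$ and renaming is linear, and by the operand-wise equivalence above the resulting formula is satisfiable if and only if $\sem{e}(\rho) > c$, which settles the decision part.

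For the strategy-extraction part, given a model $M$ of this disjunction, at least one disjunct evaluates to true under $M$; fix such an index $i^*$. If $i^*$ is a constant index, put $\sigma_M := e_{i^*}$ and conclude as above. Otherwise, since the disjuncts were renamed variable-disjoint, the restriction of $M$ to the variables of the $i^*$-th disjunct is a model of $\Psi(s_{i^*},d^{(i^*)},j_{i^*},c)$, so Lemma~\ref{l:smt:2} yields, in linear time, a $\vee$-strategy $\sigma_M(s_{i^*})$ for $s_{i^*}$ with $\sem{\sigma_M(s_{i^*})}^\sharp_{j_{i^*}\cdot}(d^{(i^*)}) > c$; I would then set $\sigma_M := \sem{\sigma_M(s_{i^*})}^\sharp_{j_{i^*}\cdot}(\vd_{u_{i^*},1},\ldots,\vd_{u_{i^*},m})$. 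As $\sigma_M(s_{i^*})$ is a sequential statement corresponding to a path through $s_{i^*}$, the expression $\sigma_M$ is one of the admissible operands of the maximum in the equation for $\vx$, i.e.\ a legal max-strategy for $e$, and $\sem{\sigma_M}(\rho) = \sem{\sigma_M(s_{i^*})}^\sharp_{j_{i^*}\cdot}(d^{(i^*)}) > c$, as required.

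I do not anticipate a genuine difficulty; the only points that need care are bookkeeping ones: making the Boolean-variable renaming genuinely variable-disjoint so that a model of the whole disjunction restricts to a model of the satisfied disjunct's subformula, keeping the $\pm\infty$ entries of $d^{(i)}$ and the value $c$ consistent with the conventions fixed in Section~\ref{s:basics}, and checking that all size bounds remain linear (in particular that $\Psi(s_i,\cdot,\cdot,\cdot)$ introduces only $O(|s_i|)$ new atoms). Everything substantive has already been established in Lemmas~\ref{l:smt} and~\ref{l:smt:2}.
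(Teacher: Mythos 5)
Your proof is correct and follows essentially the same route as the paper: decompose the right-hand side into the operands of the outermost maximum, dispatch constants directly, apply Lemmas~\ref{l:smt} and~\ref{l:smt:2} to each non-constant operand, and merge the per-operand formulas into a single query after renaming their fresh Boolean variables apart so that a model of the combined formula restricts to a model of a satisfied sub-formula. The one divergence is that the paper's prose speaks of querying the \emph{conjunction} of the renamed formulas, whereas your \emph{disjunction} is what the equivalence $\sem{e}\rho > c \iff \exists i.\ \sem{e_i}\rho > c$ actually requires for the decision part of the lemma; your version is the logically correct reading.
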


Remark that we did not discuss how to choose the next max-strategy $\sigma'$, except that it should satisfy $\rho(\vx) < \sem{\sigma'}(\rho)$ (which is ensured by the SMT-solving step).
Indeed, there could be many different suitable $\sigma'$s, and the SMT-solver may return any of them.
There is however at least one that is \emph{locally optimal}, that is, $\sem{\sigma'}(\rho)$ is maximal, otherwise said $\sem{\sigma'}(\rho) = \sem{e}(\rho)$.
Future work should include experiments on the performance impact of using the locally optimal strategies 
instead of just any strategies.
 
It is possible to obtain a locally optimal strategy by repeated calls to the SMT-solvers.
A naive method would be to query the SMT-solver for a $\sigma''$ such that $\sem{\sigma'}(\rho) < \sem{\sigma''}(\rho)$, 
then for a $\sigma'''$ such that $\sem{\sigma''}(\rho) < \sem{\sigma'''}(\rho)$ and so on until there is no locally better strategy; the last strategy obtained is thus locally optimal.
A less naive method would be to take a rough bound $M \geq \sem{e}(\rho)$ and perform binary search in the interval $[\sem{\sigma'}(\rho), M]$: at each step, maintain an interval $[a,b]$ and query whether there exists $\sigma''$ such that $\sem{\sigma'}(\rho) \geq \frac{a+b}{2}$; if so, replace $a$ by $\frac{a+b}{2}$ and restart, if not, replace $b$ by $\frac{a+b}{2}$ and restart. 
The SMT-solving community is now considering the problem of \emph{optimization modulo theory} \cite{Sebastiani_Tomasi_IJCAR12} and we can hope for progress in this respect.

%%% Local variables:
%%% mode: latex
%%% tex-main-file: "main"
%%% TeX-master: "main"
%%% ispell-local-dictionary: "british"
%%% end:

% \input{calc_on_exmpl}

\section{Complexity}
In this section, we shall prove that the decision problem associated with our computation is at the second level of the polynomial hierarchy, even if there is a single feedback vertex, a single real variable, and a single constraint in the template. It is therefore unsurprising that our algorithm exhibits exponential complexity in the worst case, by enumerating an exponential number of strategies: we shall then provide an artificial example on which it is the case.
% (we have already shown that our algorithm has exponential worst-case complexity).

\subsection{A Lower Bound on the Complexity}
\label{s:lower}

%\tom{PERFEKT}
In this section we show that the problem of computing abstract semantics of programs
w.r.t.\ the interval domain is \piptwo-hard.
\piptwo-hard problems are conjectured to be harder than both \complexclass{NP}-complete 
and \complexclass{coNP}-complete problems.
For further information regarding the polynomial-time hierarchy see, for instance,
\citet{Stockmeyer76,Papadimitriou94}.

\begin{thm}
  \label{t:lower}
  %\tom{PERFEKT}
  The problem of deciding,
  whether,
  for a given program $G$, 
  a given template constraint matrix $T$,
  and a given program point $v$,
  $\Values^\sharp[v] > \neginftyvar$ holds,
  is \piptwo-hard.

  The problem remains \piptwo-hard even if the program variables are abstracted at a single program point and the 
  template constraint matrix $T$ is restricted to a single variable $x$ and a single constraint of the form $x \leq B$.
\end{thm}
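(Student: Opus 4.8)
The plan is to prove \piptwo-hardness by a polynomial-time many-one reduction from the canonical \piptwo-complete problem: given a propositional formula $\psi$ in conjunctive normal form over variables $u_1,\dots,u_k,v_1,\dots,v_L$, decide whether $\forall u_1\dots u_k\,\exists v_1\dots v_L\,\psi$ holds (equivalently, one is reducing the \sigmaptwo-complete problem $\exists\vec u\,\forall\vec v\,\neg\psi$, with $\neg\psi$ in disjunctive normal form, to the complementary question of whether a template invariant $x\le B$ witnesses that $v$ is unreachable). From such a $\psi$ I would build a program $G$ with a single real variable $x$, a single cut-point $p$ (all other nodes lying on no cycle, so abstraction is needed at $p$ only), analysed with the one-row template $T=(1)$, i.e.\ the constraint $x\le B$, together with a query node $v$, arranged so that $\Values^\sharp[v]>\neginfty$ holds iff $\forall\vec u\,\exists\vec v\,\psi$ is true.

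The program has nodes $\start,p,v$, an edge from $\start$ to $p$ labelled $x'=0$, a self-loop at $p$ labelled by a statement $s$ described next, and an edge from $p$ to $v$ labelled $x\ge 2^k\wedge x'=x$. The key idea is to let the integer value of $x$ \emph{encode} a universal assignment $\vec u\in\{0,1\}^k$ through $x=\sum_j 2^j u_j$, to have the least-fixpoint iteration push $x$ upward through the values $0,1,\dots,2^k$ one unit at a time, and to gate each increment by the inner existential check, expressed using the implicitly existentially quantified ``other'' variables of the statement (Section~\ref{sec:statements}). Concretely, $s$ is the polynomial-size, negation-free formula
\[
  \Big(\bigwedge_{j=0}^{k-1}(u_j{=}0\vee u_j{=}1)\Big)\wedge
  \Big(\bigwedge_{l=1}^{L}(v_l{=}0\vee v_l{=}1)\Big)\wedge
  \Big(x=\textstyle\sum_{j=0}^{k-1}2^{j}u_j\Big)\wedge
  \widehat{\psi}\wedge x'=x+1 ,
\]
where $\widehat{\psi}$ renders the CNF $\psi$ as a conjunction of disjunctions of atoms $u_j\ge 1$ / $u_j\le 0$ (and similarly for the $v_l$); note that $s$ is \emph{not} merge-simple, since its disjunctive normal form has one disjunct per bit-assignment to $\vec u$ and $\vec v$ — precisely the exponential blow-up our framework lets us avoid. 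Because the auxiliary variables $\vec u,\vec v$ are existential inside $s$, they do not enter the concretization $\gamma(B)=\{x\le B\}$, so unfolding the definition of $\sem{\cdot}^\sharp$ (as in Lemma~\ref{l:sequential:poly}) yields
\[
  \sem{s}^\sharp(B)=\sup\{\,i+1 \;\mid\; i\in\Z\cap[0,2^{k}),\ i\le B,\ \exists\vec v\in\{0,1\}^L\,.\,\psi(\langle i\rangle,\vec v)\,\},
\]
where $\langle i\rangle\in\{0,1\}^k$ is the $k$-bit binary representation of $i$ and $\sup\emptyset=\neginfty$.

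The core of the proof is the fixpoint analysis of the equation $\vd_p=\max\{0,\sem{s}^\sharp(\vd_p)\}$ of the system $\E(G,T)$ (Section~\ref{ss:eqs}). I would show that the fixpoints $\ge 0$ of this equation are exactly: $0$, if $\langle 0\rangle$ admits no satisfying $\vec v$; every $i\in\{1,\dots,2^k-1\}$ for which $\langle i-1\rangle$ admits a satisfying $\vec v$ but $\langle i\rangle$ does not; and $2^k$, if $\langle 2^k-1\rangle$ admits one. Hence $\rho^*(\vd_p)$, where $\rho^*:=\mu\sem{\E(G,T)}$, equals the least index $i^{*}$ with $\langle i^{*}\rangle$ unsatisfiable by any $\vec v$, or $2^k$ if there is no such index; so $\rho^*(\vd_p)=2^k$ iff every $\vec u$ has a satisfying $\vec v$, i.e.\ iff $\forall\vec u\,\exists\vec v\,\psi$, and $\rho^*(\vd_p)<2^k$ otherwise. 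Since $\sem{x\ge 2^k\wedge x'=x}^\sharp(B)$ equals $B$ for $B\ge 2^k$ and $\neginfty$ for $B<2^k$, we get $\Values^\sharp[v]=\rho^*(\vd_p)$ when $\rho^*(\vd_p)\ge 2^k$ and $\Values^\sharp[v]=\neginfty$ otherwise; therefore $\Values^\sharp[v]>\neginfty$ iff the given \piptwo-instance is true. Finally, $G$, $T$ and $v$ are computable from $\psi$ in polynomial time (the coefficients $2^j$ and the threshold $2^k$ have $O(k)$ bits), using a single variable, a single cut-point and a single template row, which also settles the restricted form of the statement.

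The main obstacle, besides hitting on the encoding, is this fixpoint analysis: one must check carefully that a ``blocking'' index $i^{*}$ — a universal assignment with no existential witness — really pins the least solution at $i^{*}$, so that ``good'' assignments sitting at higher indices cannot raise it, the point being that the loop can never move $x$ across the gap at $i^{*}$; and that no spurious fixpoint at a non-integer value, or above $2^k$, can occur. A secondary subtlety is that, because the template constrains only $x$, the auxiliary $\vec u$-variables are genuinely unconstrained after abstraction at $p$, and it is exactly this that forces the least-fixpoint iteration to run through all $2^k$ universal assignments rather than a single, strategically chosen one.
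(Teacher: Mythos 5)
Your proposal is correct and follows essentially the same route as the paper: a reduction from $\forall^*\exists^*$-SAT using a single real variable whose value encodes the universal assignment in binary, a single cut-point with a self-loop that increments $x$ by one exactly when the inner existential formula is satisfiable, and a threshold $2^k$ guarding the target node, so that the least fixpoint in the upper-bound template gets stuck at the first universal assignment lacking a witness. The only cosmetic differences are that you encode the binary decomposition directly as $x=\sum_j 2^j u_j$ with $u_j\in\{0,1\}$ rather than by the paper's cascading bit-extraction, and that you argue correctness by analysing the abstract fixpoint directly rather than via concrete reachability.
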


\begin{proof}
  %\tom{PERFEKT}
  We reduce the \piptwo-complete problem of
  deciding the truth of a $\forall^*\exists^*$ propositional formula
  \citep{DBLP:journals/tcs/Wrathall76}
  to our static analysis problem. 
  Let 
  \begin{align}
    \Phi \equiv \forall \vx_1 , \ldots  , \vx_n . \exists \vy_1, \ldots, \vy_m \,.\, \Phi'
  \end{align}
  
  \noindent
  be a formula without free variables, 
  where $\Phi'$ is a propositional formula.
We consider the analysis of the following pseudo-C program, where $n$ is a constant:
\begin{lstlisting}
x = 0;
while (x < $2^n$) {
  z = x;
  if (x >= $2^{n-1}$) { $x_n$=1; x -= $2^{n-1}$; } else { $x_n$=0; }
  $\vdots$
  if (x >= $2^{1-1}$) { $x_1$=1; x -= $2^{1-1}$; } else { $x_1$=0; }
  choose($y_1,\dots,y_m$);
  if ($\Phi'(x_1,\dots,x_n,y_1,\dots,y_m)$) {
    x++;
  }
}
\end{lstlisting}

\noindent
  In intuitive terms: this program initializes the program variable \lstinline|x| to $0$. 
  Then, it enters a loop: compute into $x_1,\dots,x_n$ 
  the binary decomposition of \lstinline|x|, and non-deterministically choose 
  $y_1,\dots,y_m$. 
  If  $\Phi'$ is true, it increments \lstinline|x| by one and loops, unless \lstinline|x| reaches $2^n$ in which case it terminates;
  otherwise, it just loops.
  Thus, there exists a terminating computation if and only if $\Phi$ holds.

  We reformulate the above pseudo-C program into the program
   $G = (N,E,\start)$
  that uses only one program variable $\vx$,  
%  $\vx,\vx', \allowbreak \vx_1, \ldots, \vx_n, \vy_1,\ldots,\vy_m$,
  where
  \begin{enumerate}[(1)]
    \item
      $N = \{ \start, 1, 2 \}$ is the set of program points, 
      and
    \item
  $E = \{ (\start, \vx' = 0, 1), \allowbreak (1,s,1), \allowbreak (1,\vx \geq 2^{n},2) \}$
  is the set of control-flow edges,
  where
  \begin{align*}
    s
    \quad\equiv\quad &
    \vz_n = \vx
    \\
    &\quad\wedge
      \left( 
        ( \vz_n \geq 2^{n-1} \wedge \vz_{n-1} = \vx - 2^{n-1} \wedge \vx_n = 1 )
        \vee
        ( \vz_n \leq 2^{n - 1} - 1 \wedge \vx_n = 0 )
      \right)
    \\&\quad\wedge
      \cdots \\
    &\quad\wedge
      \left( 
        ( \vz_1 \geq 2^{1-1} \wedge \vz_0 = \vx_1 - 2^{1-1} \wedge \vx_1 = 1 )
        \mid 
        ( \vz_1 \leq 2^{1-1} - 1 \wedge \vx_1 = 0 )
      \right)
    \\&\quad\wedge
      s(\Phi') 
    \\&\quad\wedge
      \vx' = \vx + 1 
    .
  \end{align*}

  \noindent
  The statement $s(\Phi')$ is obtained by taking formula $\Phi'$ in negation normal form (all negations pushed to the leaves), leaving the Boolean structure in place and replacing each positive literal $x$ by a test $x=1$ and each negative literal $\neg x$ by a test $x=0$.
  \end{enumerate}

  \noindent
  With this formalization, $\Phi$ 
  holds if and only if
  $\Values[2] \neq \emptyset$.
  For the abstraction,
  we consider the interval domain, or even simply the domain
  of upper bounds on $\vx$ (i.e., we have constraints of the form $\vx \leq b$).
  By considering the Kleene iteration,
  it is easy to see that 
  $\Values[2] \neq \emptyset$ holds if and only if  
  $\Values^\sharp[2] > \neginftyvar$ holds.
  Thus $\Phi$ holds if and only if $\Values^\sharp[2] > \neginftyvar$ holds.
\end{proof}

% \noindent

%%% Local variables:
%%% mode: latex
%%% tex-main-file: "main"
%%% TeX-master: "main"
%%% ispell-local-dictionary: "english"
%%% end:

\subsection{An Example with Exponential Running Time Behavior}

%Recall that the number of strategy improvement steps is exponentially bounded by is the size of the input.
%% ($2^n$ where $n$ is the number of binary joins in the program), 
%Each step consists in one phase of SMT-solving for linear real arithmetic (complexity $O(2^{cn})$ where $c$ is a constant) for each constraint in the template, followed by solving a linear program of size polynomial in the size of the program. 
%The overall complexity of our algorithm is thus in $O(2^{c'n})$ where $c'$ is a constant.

Recall that the number of strategy improvement steps is exponentially bounded by the size of the input.
Each step consists in one phase of SMT-solving for linear real arithmetic 
followed by solving a linear program of polynomial size.
Thus, each step can be performed in exponential time.
Therefore,
the whole algorithm can be executed in exponential time. 
% The overall complexity of our algorithm is thus in $O(2^{c'n})$ where $c'$ is a constant.

We shall now see that our algorithm 
takes exponential time on the instances 
that are similar to the instances generated from the reduction 
in the proof of Theorem~\ref{t:lower}.
The instances generated from the reduction require $\Theta(2^n)$ steps.
However, the input is of size $\mathcal O(n^2)$,
because the numbers $2^{n-1}, 2^{n-2}, \ldots, 2^0$ require space $\Theta (n^2)$.
%Finally, 
%we give an example where our strategy improvement algorithm 
%performs exponentially many strategy improvement steps.
%It is similar to the program in the proof of Theorem~\ref{t:lower}.
% We now explicitly construct examples on which our method takes exponential time.
% We derive these examples from the reduction in the proof of Theorem~\ref{t:lower}.
We modify the instances generated from the reduction in such a way that the sizes of the programs are in $\mathcal O(n)$.
We achieve this by introducing auxiliary variables for the numbers $2^{n-1}, 2^{n-2}, \ldots, 2^0$.
For all $n \in \N$,
we define the program $G_n = (N,E,\start)$,
where
\begin{align}  
  N &= \{ \start, 1 \} , \\
  E &= \{ (\start, \vx_1' = 0, 1), (1,s,1) \}
  \text{,}
\end{align}

\noindent
with
\begin{align}
    s
    \;\equiv\;&\; \vy_1 = 1 \wedge \vy_2 = 2 \vy_1 \wedge \cdots \wedge \vy_n = 2 \vy_{n-1} \wedge \vz_n = \vx_1 \\
    \wedge &\;( \vz_n \geq \vy_n \wedge \vz_{n-1} = \vz_n - \vy_n \vee \vz_n \leq \vy_n - 1 \wedge \vz_{n-1} = \vz_n ) \\
    \wedge &\; \cdots \\
%    \wedge &\; ( \vx_1 \geq \vy_1 \wedge \vx_2' = \vx_1 - \vy_1 \vee \vx_1 \leq \vy_1 - 1 ) \\
    \wedge &\;( \vz_1 \geq \vy_1 \wedge \vz_{0} = \vz_1 - \vy_1 \vee \vz_1 \leq \vy_1 - 1 \wedge \vz_{0} = \vz_1 ) \\
    \wedge &\;\vx_1' = \vx_1 + 1 
    .
\end{align}
  
\noindent
Here, $\vx_1$ is the only program variable.
It is sufficient to use the template constraint matrix $T = \begin{pmatrix} 1 \end{pmatrix}$, 
which corresponds to the  template $\vx_1$.
That is, we are only interested in the upper bound on the value of the variable $\vx_1$.
Remark that the strategy iteration does not depend on the
strategy improvement operator in use:
at any time there is exactly one possible improvement, until the least solution is reached.
All strategies for the statement $s$ will be encountered.
Thus, the strategy improvement algorithm performs $2^n$ strategy improvement steps.
Since the size %$\abs{G_n}$ 
of $G_n$ is $\Theta(n)$,
exponentially many strategy improvement steps are performed.

\subsection{An Upper Bound on the Complexity}
\label{s:upper}

%\tom{PERFEKT}
In Section~\ref{s:lower},
we have provided a lower bound on the complexity of computing abstract semantics 
% of programs 
w.r.t. the template linear domains. 
The associated decision problem is not only 
\piptwo-hard, but in fact $\Pi^p_2$-complete:

\begin{thm}
\label{t:upper}
  The problem of deciding,
  whether, % or not,
  for a given program $G$, 
  a given template constraint matrix $T$,
  and a given program point $v$,
  $\Values^\sharp[v] > \neginftyvar$ holds,
  is in \piptwo.
\end{thm}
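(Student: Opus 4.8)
The plan is to establish the claim by proving that the \emph{complement} problem --- deciding whether $\Values^\sharp[v] = \neginftyvar$ holds --- lies in \sigmaptwo. Since $\neginftyvar$ is the least element of $\CR^m$, we have $\neg(\Values^\sharp[v] > \neginftyvar)$ if and only if $\Values^\sharp[v] = \neginftyvar$, so this suffices. The certificate for $\Values^\sharp[v] = \neginftyvar$ will be an \emph{abstract inductive invariant avoiding~$v$}: a variable assignment $\rho : \vX_{\E(G,T)} \to \CR$ that is a post-solution of $\E(G,T)$ (that is, $\rho \geq \sem{\E(G,T)}(\rho)$; this forces $\rho(\vd_{\start,i}) = \infty$ for all $i$) and satisfies $\rho(\vd_{v,i}) = \neginfty$ for all $i \in \{1,\ldots,m\}$. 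Such a $\rho$ exists if and only if $\Values^\sharp[v] = \neginftyvar$: for the ``only if'' direction one takes $\rho := \mu\sem{\E(G,T)}$, which by Lemma~\ref{l:eqs:abs_sem} represents $\Values^\sharp$ coordinate-wise and is in particular a post-solution; for the ``if'' direction, the fixpoint theorem of Knaster/Tarski (Theorem~\ref{th:tarski}) gives $\mu\sem{\E(G,T)} \leq \rho$, hence $\Values^\sharp_{i\cdot}[v] = \mu\sem{\E(G,T)}(\vd_{v,i}) \leq \rho(\vd_{v,i}) = \neginfty$ for every $i$.

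The main obstacle is to bound the size of this certificate, i.e.\ to show that we may restrict the existential quantifier to assignments $\rho$ of polynomial size. The key point is that $\Values^\sharp$ itself admits a representation in which every finite coordinate is a rational number whose bit-size is polynomial in the sizes of $G$ and $T$. This follows from the analysis of the max-strategy improvement algorithm recalled in Section~\ref{ss:veal:max:strat}: by Lemma~\ref{l:how:to:eval:strats}, the algorithm's output $\Values^\sharp = \mu\sem{\E(G,T)}$ equals $\mu_{\geq\rho}\sem{\sigma}$ for the final max-strategy $\sigma$, and this value depends only on $\sigma$ and the set of variables identified as $\infty$; the equations of $\sigma$ have right-hand sides of the form $\sem{s}^\sharp_{k\cdot}(\cdot)$ for \emph{sequential} statements $s$, each of which is a single path through one of the original (polynomially-sized) statements and is therefore itself of polynomial size. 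Hence, as explained in Section~\ref{ss:veal:max:strat}, $\Values^\sharp$ is the vector of optimal values of a linear program of polynomial size over the rationals, and such an optimum, when finite, is attained at a vertex of a rational polyhedron and thus has polynomially bounded bit-size. We therefore restrict the existential quantifier to assignments $\rho$ in which each value is $\infty$, $\neginfty$, or a rational of bit-size at most a fixed polynomial in $|G|+|T|$; such a $\rho$ has polynomial size, and the equivalence of the previous paragraph is preserved (the ``only if'' direction still uses $\rho := \Values^\sharp$, which now fits the restricted form).

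Next I would check that, for a $\rho$ of the restricted form, verifying ``$\rho$ is a post-solution of $\E(G,T)$ with $\rho(\vd_{\start,i}) = \infty$ and $\rho(\vd_{v,i}) = \neginfty$ for all $i$'' is a \complexclass{coNP} problem. The conditions on the values at $\start$ and $v$ are syntactic. The post-solution condition unfolds to: $\rho(\vd_{v',i}) \geq \sem{s}^\sharp_{i\cdot}(\rho(\vd_{u,1}),\ldots,\rho(\vd_{u,m}))$ for every edge $(u,s,v') \in E$ and every $i \in \{1,\ldots,m\}$. By Lemma~\ref{l:smt}, for a fixed such tuple this inequality \emph{fails} if and only if the SAT modulo linear real arithmetic formula $\Psi\bigl(s,\,(\rho(\vd_{u,1}),\ldots,\rho(\vd_{u,m})),\,i,\,\rho(\vd_{v',i})\bigr)$ is satisfiable; this formula is obtainable in polynomial time, with the $\infty$- and $\neginfty$-entries of the argument vector and of the bound contributing only trivially true atoms (which are removed) or trivially false atoms (which render the formula unsatisfiable, as they should). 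Consequently, the negation of ``$\rho$ is a post-solution'' is the \complexclass{NP} predicate: \emph{there exist} an edge $(u,s,v') \in E$, an index $i$, and a satisfying interpretation of the corresponding $\Psi$-formula --- an object of polynomial size. Hence the whole verification is in \complexclass{coNP}.

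Combining these observations, $\Values^\sharp[v] = \neginftyvar$ is equivalent to: \emph{there exists} a polynomially-sized assignment $\rho$ such that \emph{for all} candidate witnesses $(u,s,v',i,\text{interpretation})$ the witness does not certify a violation of the post-solution property (and $\rho$ has the prescribed values at $\start$ and $v$) --- an $\exists\forall$ statement with a polynomial-time-checkable matrix, i.e.\ a \sigmaptwo\ predicate. Therefore the complement, $\Values^\sharp[v] > \neginftyvar$, lies in \piptwo, which is the assertion of the theorem. I expect the delicate step to be the polynomial bit-size bound on $\Values^\sharp$ in the second paragraph; everything else is a routine application of Lemma~\ref{l:smt} and Knaster/Tarski.
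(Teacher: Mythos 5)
Your proof is correct, but your certificate differs from the one the paper uses, and the difference is worth noting. The paper's existential witness is a max-strategy $\sigma$ for $\E(G,T)$ together with a set $\vX^\infty$ of variables declared infinite; the numerical values $\rho_{\sigma,\vX^\infty}$ are then \emph{computed} deterministically in polynomial time by linear programming, and the {\complexclass{coNP}} part checks that no improvement of $\sigma$ exists (via Lemma~\ref{l:diamand:is:np:complete}), which is exactly the post-solution condition $\rho_{\sigma,\vX^\infty} \geq \sem{\E(G,T)}(\rho_{\sigma,\vX^\infty})$. You instead guess the numerical post-solution $\rho$ directly and verify inductiveness with the same kind of SMT-based {\complexclass{coNP}} check, so the two verification phases essentially coincide. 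What the paper's choice of witness buys is that no bit-size bound on $\mu\sem{\E(G,T)}$ is ever needed --- the values come out of an LP whose size is syntactically polynomial in the guessed strategy --- whereas your approach must separately argue that the finite coordinates of the least fixpoint are rationals of polynomial bit-size. You correctly identify this as the delicate step, and your justification (the least fixpoint equals $\mu_{\geq\rho}\sem{\sigma}$ for the final strategy, each sequential statement in a strategy is a subformula of an original statement and hence of polynomial size, and a finite LP optimum is attained at a vertex of a rational polyhedron of polynomially bounded bit-size) is sound and recoverable from Section~\ref{ss:veal:max:strat}; it is simply an extra lemma the paper's formulation sidesteps. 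Your version is the more ``textbook'' guess-an-inductive-invariant argument; the paper's is slightly more economical given the machinery already in place.
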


\begin{proof}
  We consider the negation of the above problem:
  for a given program $G$,
  a given template constraint matrix $T$,
  a given program point $v$,
  and a given $i \in \{1,\ldots,m \}$,
  decide whether
  $\Values^\sharp_{i\cdot}[v] = \neginfty$;
  we shall now show that this problem is in $\Sigma^p_2$.

  In non-deterministic polynomial time we can guess a max-strategy $\sigma$ for 
  $\E' := \E(G,T)$ and a set $\vX^\infty$ of variables that have the value $\infty$; these will form the witness for the initial existential quantifier.
  We can evaluate the max-strategy $\sigma$ w.r.t.\ the set
  of variables $\vX^\infty$ assigned to $+\infty$ in polynomial 
  time using linear programming (cf. Subsection~\ref{ss:veal:max:strat}).
  Let $\rho_{\sigma,\vX^\infty}$ denote the resulting variable assignment.

  We shall now show that checking whether this strategy (and set of infinite variables) is stable is in
  \complexclass{co-NP}.
  Because of Lemma~\ref{l:diamand:is:np:complete},
  we can use an \complexclass{NP} oracle to check 
  whether there exists an improvement of the strategy $\sigma$ w.r.t.\ $\rho_{\sigma,\vX^\infty}$, which is exactly the negation of being stable.

  If the strategy is stable, we know that 
  $\rho_{\sigma,\vX^\infty} \geq \mu\sem{\E'}$ holds.
  Therefore,
  by Lemma~\ref{l:eqs:abs_sem},
  we have $\rho_{\sigma,\vX^\infty}(\vx_{v,i}) \geq \Values^\sharp_{i\cdot}[v]$
  for all program points $v \in N$ and all $i \in \{1,\ldots,m\}$.
  Since we also know that there exists some max-strategy $\sigma$ and some set $\vX^\sigma$ such that 
  $\rho_{\sigma,\vX^\infty} = \mu\sem{\E'}$,
  we accept,
  whenever $\rho_{\sigma,\vX^\infty}(\vx_{v,i}) = \neginfty$ holds.
\end{proof}

%%\tom{PERFEKT}
%\noindent
%Theorem~\ref{t:lower} together with theorem \ref{t:upper} gives us:
%  The problem of deciding,
%  whether or not,
%  for a given program $G$, 
%  a given template constraint matrix $T$,
%  and a given program point $v$,
%  $\Values^\sharp[v] > \neginftyvar$ holds,
%  is \piptwo-complete.

%\tom{PERFEKT}

\section{Experimental Results}
\label{a:exp:res}

We have implemented our presented max-strategy improvement algorithm;
our prototype should however be considered as a proof-of-concept.
Benchmark results for real examples are left for future work.

The algorithm is implemented in OCaml 3.10.2;
it uses Yices 1.0.27 \cite{yices,DBLP:conf/cav/DutertreM06} 
for computing models for SAT modulo linear real arithmetic formulas;
for solving the occurring linear programming problems it uses QSOpt-Exact 2.5.6 \cite{DBLP:journals/orl/ApplegateCDE07,espinoza2006},
an exact arithmetic version of QSOpt.
We made our experiments under Debian Linux (Lenny) running under Parallels Desktop 4 on an Apple MacBook 
(2.16 GHz Intel Core 2 Duo, 2GB 667 MHz DDR2 SDRAM).
\lstdefinelanguage{prg}{otherkeywords={->,:=,|,;,>=,<=,:},keywords={->},keywordstyle=\textbf}
\lstset{language=prg}
Our solver takes as input a text file that contains the program 
and the linear templates to be used for the analysis.
% 
%For the artificial exponential example described in Section~\ref{s:upper},
%and for $n = 5$, the input file that contains $G_n$ and the necessary templates looks as follows:
%\lstset{language=prg}
%\lstinputlisting{test_05.prg}
%
%\noindent
%The program points are just natural numbers.  
%\begin{lstlisting}
% m -> n: stmt
%\end{lstlisting}
%introduces a control-flow edge that leads from program point \lstinline!m! to program point \lstinline!n!.
%This control-flow edge is labeled with the statement \lstinline!stmt!.
%Statements can be built up from affine assignments
%(``\lstinline!:=!''),
%affine guards
%(``\lstinline!<=!'', ``\lstinline!>=!''),
%the sequence operator 
%(``\lstinline!;!''),
%and the non-deterministic choice operator 
%(``\lstinline!|!'').
%
%A line that starts with an \lstinline!&! introduces a template.
%In the above examples we have a template for the upper bound of the variable \lstinline!x! 
%and, for each variable that starts with $\lstinline!y_!$, we have a template for the negated lower bound and a template for the upper bound of this variable.
%
%
\begin{figure}
\begin{center}
\begin{tabular}{|r|r|r|r|r|r|}
\hline
\multicolumn{1}{|c|}{$n$} & 
\multicolumn{1}{|c|}{user} & 
\multicolumn{1}{|c|}{number} & 
\multicolumn{1}{|c|}{number} & 
\multicolumn{1}{|c|}{number} 
\\ 
\multicolumn{1}{|c|}{} & 
\multicolumn{1}{|c|}{time} & 
\multicolumn{1}{|c|}{of} & 
\multicolumn{1}{|c|}{of} & 
\multicolumn{1}{|c|}{of} 
\\ 
\multicolumn{1}{|c|}{} & 
\multicolumn{1}{|c|}{(sec)} & 
\multicolumn{1}{|c|}{improvement} & 
\multicolumn{1}{|c|}{SMT} & 
\multicolumn{1}{|c|}{linear} 
\\ 
\multicolumn{1}{|c|}{} & 
\multicolumn{1}{|c|}{} & 
\multicolumn{1}{|c|}{steps} & 
\multicolumn{1}{|c|}{queries} & 
\multicolumn{1}{|c|}{programs} 
\\ 
\hline
\hline
1 & 0.10 & 5 & 14 & 8 \\
\hline
2 & 0.17 & 7 & 34 & 12 \\
\hline
3 & 0.38 & 11 & 76 & 20 \\
\hline
4 & 1.02 & 19 & 170 & 36 \\
\hline
5 & 3.64 & 35 & 384 & 68 \\
\hline
6 & 6.97 & 67 & 870 & 132 \\
\hline
7 & 26.02 & 131 & 1964 & 260 \\
\hline
8 & 31.53 & 259 & 4402 & 516 \\
\hline
9 & 95.22 & 515 & 9784 & 1028 \\
\hline
10 & 207.62 & 1027 & 21566 & 2052 \\
\hline
\end{tabular}
\end{center}
\caption{Benchmark for the prototypical implementation}
\label{fig:bench}
\end{figure}
The benchmark results for the example of Section~\ref{s:upper} are shown in Figure \ref{fig:bench}.
The number of max-strategy improvement steps grows --- as expected --- exponentially in $n$.
Briefly, the implementation solves $2$ linear programming problems and at most $2 ( 2 n + 1) = 4n + 2$ SMT queries per max-strategy improvement step.
The factor $2$ comes from the fact that we have $2$ program points and the factor $(2n + 1)$ from the fact that we have $(2n + 1)$ templates.
We emphasize that the example is created artificially.
Since the problem we are solving is \piptwo-complete,
it is not surprising that there exists an example that does not scale.

For the running example of this article (Example~\ref{ex:running:0}),
our solver computes the correct result after $0.05$ seconds.

There are also many possibilities for improving the implementation.
On the limited number of examples that we tried with our proof-of-concept implementation, 
the main computational expense comes from the linear programs that have to be solved.
This is mainly due to the fact that we use an exact arithmetic simplex solver and we solve every occurring linear program from scratch
although we know beforehand that the linear problems that we have to solve are feasible.
Instead of solving each linear program from scratch, 
one could use the information obtained from the previously solved 
linear programs (that are similar).
One can also utilize 
the information obtained from the SMT solver in order to obtain a feasible basis to start the simplex method with.

\section{Conclusion and further research directions}
\label{s:conclusion}
We have proposed a method for computing the least fixpoints in template linear constraint domains
(e.g., Cartesian products of intervals) 
of transition systems specified using linear real arithmetic formulas. This allows finding the strongest invariant in this domain of a loop consisting only in linear assignments and non-strict linear inequalities over the real numbers.

Because it distinguishes individual paths in the program, our method does not suffer from the imprecision induced by convex hull operations. These paths are looked up on demand, as results from satisfiability testing, therefore avoiding memory blowup. Our technique, however, has exponential worst case complexity, which is hardly surprising since the decision problem associated with our computation is \piptwo-complete. Due to limited resources, we have so far not been able to implement it in a tool capable of running on real examples.

It is quite obvious that, due to the use of SMT queries, the size of the problems given as input, and their branching structure, must be limited. One method for limiting the size of the SMT formulas is to decompose the program into statements, thus adding more points at which states are abstracted, as proposed by \cite{Monniaux_Gonnord_SAS11}: this simplifies the problem, but may reduce precision; another method is to restrict the analysis to a subset of the variables, determined by some form of dependency analysis.

The restriction to linear templates and linear statements may seem onerous.
It might be possible to apply the same ideas for non-linear templates \citep{gawlitza_sas_10}.
With respect to non-linear statements, a possibility is to linearize them \citep{Mine_PhD,DBLP:conf/vmcai/Mine06}: for short, assuming $A \leq x \leq B$ where $A$ and $B$ are constants, then the nonlinear constraint $z = x y$ may be abstracted by the linear constraint $(A y \leq x y \leq B y \land y \geq 0) \lor (B y \leq x y \leq A y \land y < 0)$. 
If the assumptions made by the linearization are found not to hold for the fixed point computed by the max-strategy iteration technique, one has to relax these assumptions and restart the solving process.

More generally, one may envision a nesting of two iteration schemes: the inner scheme solving exactly, using 
max-strategy iteration, a simplification of the concrete program, the outer scheme iterating over possible simplifications.
The outer scheme would deal with all program features not supported by our max-strategy iteration algorithm.
Consider pointers, for instance: the outer scheme could temporarily assume that $x$ and $y$ may be aliased, while $z$ is not aliased with anything, and then rewrite the program according to these assumptions in order to obtain a pointer-free program (may-alias information becomes non-deterministic choice, while must-aliased variables are merged).
This outer iteration may be ascending and optimistic, starting with strong assumptions on the program and relaxing them progressively as the results of the inner scheme invalidate them, or decreasing and pessimistic, starting with weak assumptions and strengthening them progressively as the results of the inner scheme show them to be too severe. Such mixed approaches would cope with programs features not directly supported by our max-strategy iteration solver. Further work is needed in this direction to ascertain which techniques are usable.

Another problem is finding suitable templates --- while there exist obvious choices in some cases (intervals for getting rough invariants of control applications, difference bounds for scheduling applications, etc.), there is no generic method for obtaining good templates. \citet{DBLP:conf/sas/AmatoPS10} proposed finding templates using principal component analysis, but it is yet unclear whether this approach suited to practical problems. A simple solution may be to run some conventional polyhedral analysis, and keeping the directions of the polyhedra obtained before widening.

Our max-strategy iteration algorithms only deal with real numerical values.
We can cope with integers by relaxing them to reals, with the usual precautions ($x < y$ converted to $x \leq y-1$). Another possible extension is to integrate Boolean types, or more generally finitely enumerated types, into the invariant, or equivalently, to insert them implicitly into the control flow.

An intriguing extension of our framework is the case where the control flow is specified implicitly. The problem considered in this article is expressed as a control-flow graph given by a list of nodes and statements over the transitions. Now consider the addition of $n$ Boolean variables to the system: a common method to encode such variables in a transition system is to distinguish all Boolean combinations and every control node, and thus multiply the number of control nodes by~$2^n$. Clearly, we would prefer to work directly on the transition relation of the original program, which would include free Boolean variables encoding the departure and arrival control states, and consider our abstract reachability problem on programs expressed using this succinct representation.
Since this problem includes Boolean reachability (also known as the reachability problem for succinctly represented graphs), which is \complexclass{PSPACE}-complete~\cite{Papadimitriou_Yannakakis_86}, it is \complexclass{PSPACE}-hard. 
Our strategy iteration approach can be extended to show that it is in \complexclass{coNEXPTIME}. We conjecture that it is  \complexclass{coNEXPTIME}-complete, but we have so far not been able to prove it. It is also unknown whether some practically useful algorithms, perhaps based on binary decision diagrams (BDDs), could be devised for this problem.

%%% Local variables:
%%% mode: latex
%%% tex-main-file: "main"
%%% TeX-master: "main"
%%% ispell-local-dictionary: "english"
%%% end:

\section*{Acknowledgments}
The authors wish to thank the anonymous referees for their helpful suggestions and references.

\bibliographystyle{dmabbrvnat}
\bibliography{bib}

\end{document}